\definecolor{color1}{RGB}{204,0,51}
\definecolor{color2}{RGB}{159,182,205}
\lstdefinestyle{customc}{
  belowcaptionskip=1\baselineskip,
  breaklines=true,
  frame=L,
  xleftmargin=\parindent,
  language=C,
  showstringspaces=false,
  basicstyle=\footnotesize\ttfamily,
  keywordstyle=\bfseries\color{green!40!black},
  commentstyle=\itshape\color{purple!40!black},
  identifierstyle=\color{blue},
  stringstyle=\color{red217!80!black},
}
\lstdefinestyle{customasm}{
  belowcaptionskip=1\baselineskip,
  frame=L,
  xleftmargin=\parindent,
  language=[x86masm]Assembler,
  basicstyle=\footnotesize\ttfamily,
  commentstyle=\itshape\color{purple!40!black},
}
\definecolor{red217}{RGB}{217,0,0}
\definecolor{boh}{RGB}{79,47,79}
\newcommand{\Rmnum}[1]{\expandafter\@slowromancap\romannumeral #1@}
\newcommand{\scp}[2]{\langle #1, #2 \rangle}
\newtheorem{thm}{Theorem}[section]
\theoremstyle{remark}
\theoremstyle{definition}
\newtheorem{definition}[thm]{Definition}
\theoremstyle{proposition}
\newtheorem{proposition}[thm]{Proposition}
\theoremstyle{Lemma}
\def\chapterautorefname~#1\null{chap.~(#1)\null}
\def\sectionautorefname~#1\null{sec.~(#1)\null}
\def\subsectionautorefname~#1\null{sub--Sec.~(#1)\null}
\def\figureautorefname~#1\null{fig.~(#1)\null}
\def\tableautorefname~#1\null{tab.~(#1)\null}
\def\equationautorefname~#1\null{eq.~(#1)\null}
\renewcommand{\Re}{\operatorname{Re}}
\renewcommand{\Im}{\operatorname{Im}}
\newcommand{\bl}[1]{\textcolor{blue}{#1}}
\begin{document}
\def\chapterautorefname~#1\null{chap.~(#1)\null}
\def\sectionautorefname~#1\null{sec.~(#1)\null}
\def\subsectionautorefname~#1\null{sub--Sec.~(#1)\null}
\def\figureautorefname~#1\null{fig.~(#1)\null}
\def\tableautorefname~#1\null{tab.~(#1)\null}
\def\equationautorefname~#1\null{eq.~(#1)\null}

\begin{titlepage}
\begin{center}


\vskip 1.5cm

{\Large \bf Branes and Polytopes } 

\vskip 1cm

{\bf Luca Romano}

\vskip 25pt


{email address: {\tt lucaromano2607@gmail.com } \\}

\end{center}

\vskip 0.5cm

\begin{center} {\bf ABSTRACT}\\[3ex]
\end{center}

We investigate the hierarchies of half-supersymmetric branes in maximal supergravity theories. By studying the action of the Weyl group of the U-duality group of maximal supergravities we discover a set of universal algebraic rules describing the number of independent 1/2-BPS p-branes, rank by rank, in any dimension. 
We show that these relations describe the symmetries of certain families of uniform polytopes. This induces a correspondence between half-supersymmetric branes and vertices of opportune uniform polytopes.  We show that half-supersymmetric 0-, 1- and 2-branes are in correspondence with the vertices of the $k_{21}$, $2_{k1}$ and $1_{k2}$ families of uniform polytopes, respectively, while 3-branes correspond to the vertices of the rectified version of the $2_{k1}$ family. 
For 4-branes and higher rank solutions we find a general behavior. The interpretation of half-supersymmetric solutions as vertices of uniform polytopes reveals some intriguing aspects. One of the most relevant is a triality relation between 0-, 1- and 2-branes. 

\end{titlepage}

\newpage \setcounter{page}{1}

\tableofcontents

\newpage

\section*{Introduction}\addcontentsline{toc}{section}{Introduction} 


Since their discovery branes gained a prominent role in the analysis of M-theories and dualities \cite{1987PhLB..189...75B}.  One of the most important class of branes consists in Dirichlet branes, or D-branes. D-branes appear in string theory as boundary terms for open strings with mixed Dirichlet-Neumann boundary conditions and, due to their tension, scaling with a negative power of the string coupling constant, they are non-perturbative objects\cite{Polchinski:1995mt}. Moreover D-branes were also used in the derivation of the black hole entropy by a microstate counting, one of the most relevant results of string theory \cite{Strominger:1996sh}. In type II string theory the coupling of D-branes with the Ramond-Ramond (RR) sector is described by the Wess-Zumino term.  
This framework could be found in the low-energy limit of string theory, supergravity, where branes occur as classical solutions coupled to differential forms \cite{Witten:1995ex}. For these reasons brane solutions in supergravity could be used as a probe to take a look inside the non perturbative regime of string theory and to improve our knowledge of dualities\cite{Townsend:1995kk,Bergshoeff:1996tu}. Branes play also a relevant role in cosmological models, as the  brane-world scenario and in the AdS/CFT correspondence \cite{Randall:1999ee, Randall:1999vf}.\\

The U-duality group of M-theory emerges in supergravity, in its continuous version, $E_{11-d}(\mathds{R})$, as a global symmetry \cite{Obers:1998fb, Julia:1997cy,Hull:1995mz} and the differential (p+1)-form potentials coupling with p-brane solutions belong to representations of this group. For this reason many attempts to investigate branes solutions in supergravity are based on the algebraic structure provided by the U-duality group\cite{Bergshoeff:2010xc, Bergshoeff:2011qk, Bergshoeff:2012ex, Kleinschmidt:2011vu}. One of the most relevant achievements in this field comes from the classification of the invariants of the U-duality group and the corresponding orbits, directly linked to physical relevant quantity, as entropy \cite{Ferrara:1997ci, Ferrara:1997uz, Lu:1997bg}. In the context of branes a special role is played by half-supersymmetric solutions. These preserve the maximum amount of supersymmetry and could be considered as building blocks for less supersymmetric states, that could be realized as bound states of them.\\
%
%

Depending on the number of spatial transverse directions branes could be divided in two classes, standard branes and non-standard branes; the former have three or more transverse spatial directions, the latter two or less. Physically the number of transverse directions characterizes their asymptotic behavior and, while standard branes approach flat Minkowsky, this is not true for non-standard branes. In the class of non-standard branes we recognize defect branes, domain walls and spacefilling branes corresponding respectively to $(d-3)$-, $(d-2)$- and $(d-1)$-branes in $d$ dimensions. Although single states of these branes have infinite energy, finite energy solutions could be realized as a bound states of them in presence of an orientifold. Defect branes couple to $(d-2)$-forms that are dual to scalars. Domain walls and spacefilling branes couple to $(d-1)$- and $d$-forms; despite these do not carry any degrees of freedom domain walls and spacefilling branes play a relevant role in different contexts\cite{Bergshoeff:2012pm}. As a first step towards a taxonomy of half-supersymmetric branes in supergravity the classification of the differential form potentials is crucial. A full classification was completed in the IIA and  IIB supergravity theories by requiring the closure of the supersymmetry and gauge algebras \cite{Bergshoeff:2005ac,Bergshoeff:2006qw,Bergshoeff:2010mv}.
This approach could be be generalized to all supergravity theories by the $E_{11}$ construction \cite{Riccioni:2007au,West:2001as}. The very extended Kac-Moody algebra $E_{11}$ contains, for any maximal supergravity, both the spacetime symmetry and the U-duality algebra $E_{11-d}$. The spectrum of differential forms could be obtained by decomposing the adjoint representation of $E_{11}$ in its subgroup $E_{11-d}\times GL(d,\mathds{R})$, where 
the two factors are the U-duality group of the $d$ dimensional maximal theory  and the spacetime symmetry respectively, and selecting the real states, identified by a positive squared norm.\\

1/2-BPS branes in maximal theories have been characterized from a pure group-theoretical point of view by showing that they couple to differential form potentials corresponding to the longest weights of the U-duality representations they belong to\cite{2011JHEP...10..144K, Bergshoeff:2013sxa}. This classification points out a consistent difference between standard and non-standard solutions. Indeed, non-standard branes belong to representations with a nontrivial length stratification,  while standard branes always live in representations without any length stratification. This implies all the components of the differential form potentials couple to half-supersymmetric solutions in the case of standard branes, while, for non-standard brane solutions, only a subset of them couple to half-supersymmetric solutions. This behavior reflects the possibility to combine non-standard brane solutions in a bound state  preserving the same amount of supersymmetry of the single branes. i.e. there is a degeneracy with respect to the BPS condition. \\

Opposite to maximal theories in non-maximal supergravities the U-duality group does not appear, in general, in its maximal non-compact form. The presence of compact and non-compact weights requires a careful analysis in extending the previous correspondence. Using the theory of real forms of Lie algebras and Tits-Satake diagrams \cite{araki} it has been argued that half-maximal solutions couple only to non-compact longest weights \cite{Bergshoeff:2014lxa,Marrani:2015gfa}. The refined rule reproduces the previous results when applied to the maximal case, where the split form for the U-duality group prevents from the presence of compact weights. In this picture the Weyl group of the U-duality group plays a remarkable role, since it maps solutions to solutions preserving their supersymmetric amount \cite{Lu:1996ge}.\\

Although the correspondence between longest non-compact weights and half-supersymmetric solutions provides us with an elegant algebraic characterization for 1/2-BPS branes in supergravity theories, we believed there was still a lack of a global view of the network of these solutions. In particular we argued that the role of the Weyl group associated with the U-duality group was not yet fully used to investigate the presence of a universal structure behind 1/2-BPS solutions. In order to uncover the algebraic structure governing half-supersymmetric branes in maximal theories we applied the general theory of reflection groups and Coxeter groups using, as starting point, the correspondence between longest non-compact weights and branes. We discovered a set of algebraic rules describing the content of 1/2-BPS branes in maximal theories, rank by rank, in any dimension.  Moreover, the interpretation of these rules as symmetries of certain families of uniform polytopes, induces a correspondence between branes and polytopes. Half-supersymmetric solutions in maximal theories could be seen as vertices of opportune uniform polytopes. We believe this link could provide consistent improvements in understanding duality relations and connections between different brane solutions.\\

The paper is organized as follows. In the first section the general theory of Coxeter groups and reflection groups is reviewed, providing the basic tools needed in our investigation. In \cref{sec:braneE11Weyl} we introduce the $E_{11}$ construction deriving all the representations hosting differential forms in maximal theories from three to nine dimensions.  We also discuss the role of the Weyl group associated with the U-duality group. In \cref{sec:3} the first part of our original work is exposed; we apply some general results concerning Coxeter group to maximal theories. In particular we study the orbits of the highest weights of the U-duality representations under the Weyl action. This leads us to a set of algebraic rules capturing the algebraic structure behind half-supersymmetric solutions. \Cref{sec:branesandpolytopes} is devoted to the interpretation of these rules as symmetries of uniform polytopes. We recall the basic tools to deal with polytopes and their relation with Coxeter groups. We recognize that half-supersymmetric 0-, 1- and 2-branes could be thought as vertices of the families of uniform polytopes $k_{21},2_{k1}$ and $1_{k2}$ respectively. This correspondence reveals a triality relations between these solutions. By the same way we discuss the correspondence for higher rank solutions.  In the conclusions we summarize our work and point out possible outlooks. In \cref{sec:appendix1} we list all the features of the uniform polytopes involved in our analysis, while in \cref{sec:appB} we give a basic introduction to Petrie polygons.


%

\section{Coxeter Group and Weyl Group}\label{sec:1}
In this section we give a brief introduction to reflections groups, Coxeter groups and Weyl groups. We begin with the definition of Coxeter group \cite{Coxeter1940,humphreys1990, 10.2307/1968753}
\begin{definition}[Coxeter Group]
 Given  a set of generators $S=\{r_1,...,r_{n}\}$ a \textbf{Coxeter group} $W$ is the group generated by $S$ with presentation
 \begin{flalign}
  &\langle\ r_{1},r_{2},...,r_{n}|\ (r_{i}r_{j})^{m_{ij}}=\mathds{1}\ \rangle,\label{Coxeterrelations1}
 \end{flalign}
where $m_{ij}\in \mathds{Z}\cup\{\infty\}$, $m_{ii}=1$ and $m_{ij}\geqslant 2$ for $i\neq j$. 
\end{definition}
$m_{ij}$ is the order of the element product $r_{i}r_{j}$. If $m_{ij}=\infty$ it means no relation of the form above could be imposed on $r_{i}$ and $r_{j}$. $r_{i}$ are often referred as \textbf{simple reflections}. $m_{ii}=1$  imply that all the simple reflections are involutions. Two simple reflections $r_{i},\ r_{j}$ commute if their product has order 2, $m_{ij}=2$. Furthermore by $m_{ii}=1$, if $(r_{i}r_{j})^{m_{ij}}=1$ it follows
\begin{flalign}
&(r_{j}r_{i})^{m_{ij}}=r_{i}r_{i}(r_{j}r_{i})^{m_{ij}}=r_{i}(r_{i}r_{j})^{m_{ij}}r_{i}=1,
\end{flalign}
thus we assume $m_{ij}=m_{ji}$. If $W$ is a Coxeter group and $S=\{r_1,...,r_{n}\}$ the set of its generators, the pair $(W, S)$ is called \textbf{Coxeter system}. The number of generators is the rank of the Coxeter system. The values of $m_{ij}$ for any Coxeter system could be collected in a symmetric matrix $M$ with entries in $\mathds{Z}\cup\{\infty\}$,
\begin{flalign}
 &M_{ij}=m_{ij}
\end{flalign}
called \textbf{Coxeter matrix}. Another relevant matrix associated with a Coxeter system is the \textbf{Schl\"afli matrix} whose entries are defined by
\begin{flalign}
 &C_{ij}=-2\cos\left(\frac{\pi}{m_{ij}}\right).\label{schlafli1}
\end{flalign}
Any Coxeter group could be described by a graph, the \textbf{Coxeter graph}, in a way similar to the description of Lie algebras by means of Dynkin diagrams. In particular, given a Coxeter system $(W,S)$, its associated Coxeter graph is the undirected graph drawn with the following prescriptions
\begin{enumerate}[label=(\roman*)]
  \item Any generator corresponds to a vertex in the graph.
  \item Vertices corresponding to the generators $r_{i}$ and $r_{j}$ are connected by an edge if $m_{ij}\geqslant 3$.
  \item Edges are labeled with the value of $m_{ij}$; if $m_{ij}=3$ the label could be omitted.
\end{enumerate}
A Coxeter system $(W,S)$ is said to be \textbf{irreducible} if its graph is connected. Its is immediate to recover the Coxeter matrix and Schl\"afli matrix from a Coxeter graph \cite{Coxeter1940}. As an example, taking the graph, in \cref{fig1}
\begin{figure}[h!]
\centering
\scalebox{0.6} 
{
\begin{pspicture}(0,-0.44)(2.8,0.4)
\definecolor{color816b}{rgb}{0.00392156862745098,0.00392156862745098,0.00392156862745098}
\psline[linewidth=0.04cm](0.2,0.2)(2.6,0.2)
\pscircle[linewidth=0.02,dimen=outer,fillstyle=solid,fillcolor=color816b](0.2,0.2){0.2}
\pscircle[linewidth=0.02,dimen=outer,fillstyle=solid,fillcolor=color816b](2.6,0.2){0.2}
\pscircle[linewidth=0.02,dimen=outer,fillstyle=solid,fillcolor=color816b](1.4,0.2){0.2}
\usefont{T1}{ptm}{m}{n}
\rput(0.7275,-0.255){\Large 4}
\end{pspicture} 
}

\caption{An example of Coxeter graph.}\label{fig1}
\end{figure}
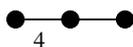\\ 
one finds
\begin{align}
&M=\left(\begin{array}{ccc}
         1&4&2\\
         4&1&3\\
         2&3&1
        \end{array}
\right)&&C=
\left(\begin{array}{ccc}
         2&-\sqrt{2}&0\\
         -\sqrt{2}&2&-1\\
         0&-1&2
        \end{array}
\right)
\end{align}
According to the eigenvalues of its Schl\"afli matrix a Coxeter system is classified in 
\begin{enumerate}[label=(\roman*)]
 \item \textbf{Finite type} if the Schl\"afli matrix is positive definite, namely it has all positive eigenvalues.
 \item \textbf{Affine type} if the Schl\"afli matrix is semipositive definite, namely it has all non-negative eigenvalues.
 \item \textbf{Indefinite type} otherwise.
\end{enumerate}
\textbf{Hyperbolic type} Coxeter groups belong to the irreducible indefinite type with the further condition that any proper connected subgraph of its Coxeter graph describes a Coxeter system either of finite or affine type.\\

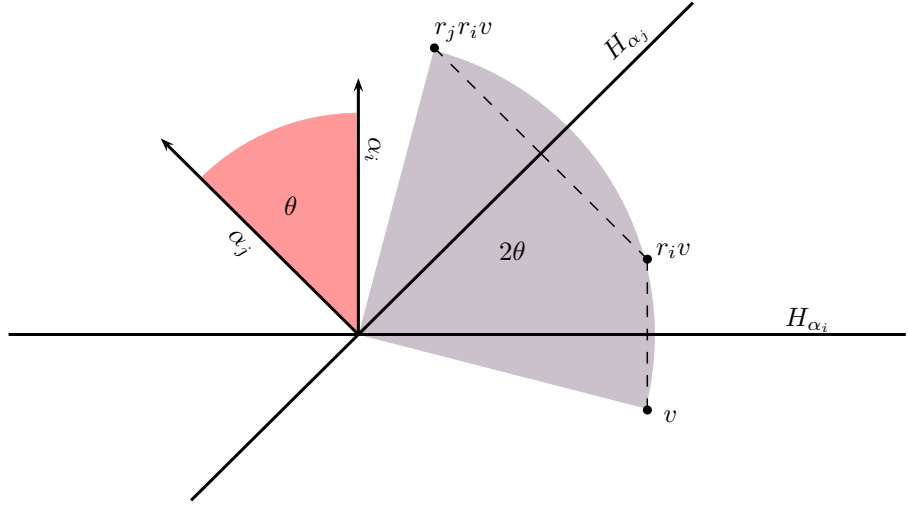
\begin{figure}[h!]
\centering
\scalebox{1} 
{
\begin{pspicture}(0,-2.8030114)(12.119062,4.5712957)
\definecolor{color529b}{rgb}{1.0,0.8431372549019608,0.0}
\definecolor{color529}{rgb}{0.996078431372549,0.996078431372549,0.996078431372549}
\definecolor{color760b}{rgb}{0.8941176470588236,0.00784313725490196,0.027450980392156862}
\pswedge[linewidth=0.02,linecolor=color529,fillstyle=solid,fillcolor=boh!30](4.59,-0.5930115){3.93}{-14.534454}{75.29934}
\pswedge[linewidth=0.02,linecolor=color529,fillstyle=solid,fillcolor=red!40](4.61,-0.5930115){2.97}{90.47351}{135.0}
\usefont{T1}{ppl}{m}{n}
\rput{-90.0}(2.943324,6.6057386){\rput(4.744531,1.8469886){$\alpha_{i}$}}
\usefont{T1}{ppl}{m}{n}
\rput{-45.0}(0.48488915,2.3130405){\rput(3.0045311,0.5869885){$\alpha_{j}$}}
\psdots[dotsize=0.12](8.4,-1.5830115)
\psline[linewidth=0.02cm,linestyle=dashed,dash=0.16cm 0.16cm](8.4,-1.5830115)(8.4,0.4169885)
\psdots[dotsize=0.12](8.4,0.4169885)
\usefont{T1}{ppl}{m}{n}
\rput(8.704532,-1.6730115){$v$}
\usefont{T1}{ppl}{m}{n}
\rput(8.744532,0.5269885){$r_{i}v$}
\psline[linewidth=0.02cm,linestyle=dashed,dash=0.16cm 0.16cm](8.4,0.4169885)(7.0,1.8169885)
\psline[linewidth=0.02cm,linestyle=dashed,dash=0.16cm 0.16cm](7.0,1.8169885)(5.6,3.2169886)
\usefont{T1}{ppl}{m}{n}
\rput(5.9845314,3.4269886){$r_{j}r_{i}v$}
\usefont{T1}{ppl}{m}{n}
\rput(3.7145312,1.1269885){$\theta$}
\usefont{T1}{ppl}{m}{n}
\rput(6.6245313,0.5269885){$2\theta$}
\usefont{T1}{ppl}{m}{n}
\rput(10.514531,-0.3930115){$H_{\alpha_{i}}$}
\usefont{T1}{ppl}{m}{n}
\rput{45.0}(4.732713,-4.8033657){\rput(8.134531,3.3269885){$H_{\alpha_{j}}$}}
\psdots[dotsize=0.12](5.6,3.2169886)
\psline[linewidth=0.04cm](0.0,-0.5830115)(11.8,-0.5830115)
\psline[linewidth=0.04cm,arrowsize=0.05291667cm 2.0,arrowlength=1.4,arrowinset=0.4]{<-}(4.6,2.8169885)(4.6,-0.5830115)
\psline[linewidth=0.04cm](2.4,-2.7830114)(9.0,3.8169885)
\psline[linewidth=0.04cm,arrowsize=0.05291667cm 2.0,arrowlength=1.4,arrowinset=0.4]{<-}(2.0,2.0169885)(4.6,-0.5830115)
\end{pspicture} 
}
\caption{A sequence of two reflections with respect to two planes, $H_{\alpha_i}$ and $H_{\alpha_j}$, at angle $\theta$ corresponds to a rotation of $2\theta$ in the plane spanned by $\alpha_{i}$ and $\alpha_{j}$.} \label{rotation1}

\end{figure}

Now we spend some words on the geometric interpretation of a Coxeter system. (W,S) could be realized geometrically as the group generated by orthogonal reflections on a vector space V over $\mathds{R}$. In particular taking a basis in $V$ as $\{\alpha_{i}\ |\ i\in S\}$ in one-to-one correspondence with S (with abuse of notation) and defining the symmetric bilinear form induced by the Schl\"afli matrix as
\begin{flalign}
 &B(\alpha_{i},\alpha_{j})=-\cos\left(\frac{\pi}{m_{ij}}\right),
\end{flalign}
the action of the $r_{i}$ on V could be realized as a reflection with respect to the hyperplane orthogonal to $\alpha_{i}$, $H_{\alpha_{i}}$,
\begin{flalign}
 &r_{i}v=v-2B(v,\alpha_{i})\alpha_{i},\label{reflectionCoxeter1}
\end{flalign}
with $v\in V$, the restriction of B on $\rm{Span}\{\alpha_{i},\alpha_{j}\}$ being positive semidefinite and nondegenerate. The bilinear form B is preserved by the action of $r_{i}$; $B(r_{i}v_{1},r_{i}v_{2})=B(v_{1},v_{2})$ for any $i\in S$ and $v_{1},v_{2}\in V$. If $\theta$ is the angle between $\alpha_{i}$ and $\alpha_{j}$ the action of $r_{i}r_{j}$ could be seen as a rotation of $2\theta$, \cref{rotation1}, in the plane spanned by $\alpha_{i}$ and $\alpha_{j}$. By the definition above, if $m_{ij}<\infty$ one recognizes $\theta$ to be $\pi/m_{ij}$. In this picture the meaning of $m_{ij}$, as order of the element $r_{i}r_{j}$ is evident. On the other hand if $m_{ij}=\infty$, taking $v=a\alpha_{i}+b\alpha_{j}$ we get
\begin{flalign}
 &r_{i}r_{j}v=v+2(a-b)(\alpha_{i}+\alpha_{j}), 
\end{flalign}
thus, acting iteratively (since $\alpha_{i}+\alpha_{j}$ is fixed by $r_{i}r_{j}$), we obtain
\begin{flalign}
 &(r_{i}r_{j})^{k}v=v+2k(a-b)(\alpha_{i}+\alpha_{j}), 
\end{flalign}
with $k\in\mathds{Z}$, that implies $r_{i}r_{j}$ has infinite order. From the geometric interpretation of the fundamental reflections it appears natural to define a correspondence between sets of vector in $V$ and Coxeter group. In particular we define a \textbf{root system} $\Phi$ in $V$ as a finite set of non-zero vectors in $V$ such that
\begin{enumerate}[label=(\roman*)]
 \item $\Phi\cap \mathds{R}\alpha=\{\alpha,\ -\alpha\}$ $\forall\ \alpha\in\Phi$
 \item $r_{\alpha}\Phi=\Phi$ $\forall\ \alpha\in\Phi$.
\end{enumerate}
The Coxeter group $W$ associated with $\Phi$ is the Coxeter group generated by all the reflections $s_{\alpha}$ with $\alpha\in\Phi$. A further refinement could be gained by defining a \textbf{simple system} $\Delta$ for $\Phi$ as a subset $\{\alpha_{i}\}$ of $\Phi$ such that  
\begin{enumerate}[label=(\roman*)]
 \item $\Delta$ is a basis for $\Phi\subseteq V$
 \item Any $\alpha\in \Phi$ could expressed as $\alpha=\sum_{i}a_{i}\alpha_{i}$ with all non-positive or non-negative coefficients $a_{i}$.
\end{enumerate}

Taking $W$ Coxeter group acting on $V$ with associated root system $\Phi$, if $\Delta$ is a simple system in $\Phi$ then $W$ is generated by the simple reflections $r_{\alpha_{i}}$  (we use also the notation $r_{i}$ in the next for a simple reflection corresponding to $\alpha_i$)  with $\alpha_{i}\in \Delta$.\\

Note that every reflection in the Coxeter group, $r_{\alpha}$, corresponds to a root $\alpha\in \Phi$, but not all elements of the Coxeter group (or Weyl group, as we will see) are in general reflections. The product of two reflections is not in general a reflection. This  explains why in general there are more elements in the Coxeter group than positive roots in the corresponding root system .\\

Any element of a Coxeter group could be expressed as words of simple reflections,
\begin{flalign}
 &r_{i_{1}}r_{i_2}...r_{i_{N}}.
\end{flalign}
Two words are equivalent if one could be obtained from the other by applying the founding relations \cref{Coxeterrelations1}. For example the sequences $r_{1}r_{3}r_{2}r_{2}$ and $r_{1}r_{3}$ are trivially equivalent; the same applies to $r_{2}r_{1}r_{3}r_{1}$ and $r_{2}r_{3}$ if $m_{31}=2$, while, if $m_{13}=3$, the former is equivalent to $r_{2}r_{3}r_{1}r_{3}$. The \textbf{length} $l(w)$ of a element $w$ in the Coxeter group $W$ is the smallest number of simple reflections $w$ could be written as product of.
The shortest expression of an element in a Coxeter group as product of simple reflections is called \textbf{reduced form} \cite{humphreys1990}. An element in the Coxeter group obtained as products of all simple reflection is called {\bf Coxeter element} and it could be shown that the Coxeter elements are all conjugate and have the same order. The order of the Coxeter elements is the {\bf Coxeter number} and it corresponds to the number of root divided by the rank.

\subsection{Weyl Group}
Weyl groups are particular cases of Coxeter groups and they play a fundamental role in the analysis next to come. Now we introduce Weyl groups and we describe their relation with Coxeter groups. Let's $\mathfrak{g}$ be a Lie algebra with Cartan matrix $A$, we denote with  $\mathfrak{h}$ its Cartan subalgebra, with $\Phi$ the set of its roots and with $\Delta$ the set of  simple roots. We define the \textbf{Weyl group} $W_{\mathfrak{g}}(A)$ of $\mathfrak{g}$ as the group generated by all the reflection $s_{\alpha}$, $\alpha\in \Phi$. Analogously to the Coxeter group case $W$ is generated by simple reflections $s_{\alpha}$, $\alpha\in \Delta$.\\
%

The $s_{\alpha}$ are reflections with respect to the hyperplanes orthogonal to the roots, called also \textbf{walls}, and their action on a weight $\Lambda\in \mathfrak{h}^{*}$ reads
\begin{flalign}
 &s_{\alpha}\Lambda=\Lambda-2\frac{\langle\alpha,\ \Lambda\rangle}{\langle\alpha,\ \alpha\rangle}\alpha,
\end{flalign}
where $\langle\ ,\ \rangle$ is the scalar product on the root system induced by the Killing form. The $s_{\alpha}$ preserve the scalar product,
\begin{flalign}
 &\scp{s_{\alpha}\Lambda}{s_{\alpha}\Sigma}=\scp{\Lambda}{\Sigma}.
\end{flalign}
This construction corresponds to a particular case of \cref{reflectionCoxeter1}.\\
A subgroup $G\subseteq GL(V)$ is said to be \textbf{cristallographic} if it stabilizes a lattice, $L\subseteq V$, i.e. $gL\subseteq L$ for all $g\in G$. The Weyl group of a Lie algebra is a cristallographic Coxeter group, leaving invariant the lattice of roots $\sum_{i}\mathbb{Z}\alpha_{i}$ where $i$ runs on simple roots.  The cristallographic property translates into the following additional requirement for the root system: 
\begin{flalign}
 &\frac{2\scp{\alpha}{\beta}}{\scp{\beta}{\beta}}\in \mathds{Z},
\end{flalign}
for any $\alpha,\ \beta\in \Phi$.
Any Weyl group is a cristallographic Coxeter group and the cristallographic property implies the Coxeter matrix entries $m_{ij}$, for $i\neq j$, could take only values in the set $\{2,3,4,6\}$\cite{humphreys1990}.\\

The Schl\"afli matrix is related to the Cartan matrix of the algebra (see section 5.3 in \cite{humphreys1990} for further details). Any generalized symmetrizable Cartan matrix $A$ could be written as product of a diagonal matrix $D$ with positive entries and a symmetric matrix $S$
\begin{flalign}
 &A=DS.
\end{flalign}
A possible choice is
\begin{subequations}
 \begin{flalign}
 &D_{ii}=\frac{1}{\scp{\alpha_{i}}{\alpha_{i}}}\\
 &S_{ij}=2\scp{\alpha_{i}}{\alpha_{j}}.
\end{flalign}
\end{subequations}
The relation between the Cartan matrix and the Schl\"afli matrix is explicitly given by
\begin{flalign}
 &C_{ij}=2\dfrac{\scp{\alpha_{i}}{\alpha_{j}}}{\sqrt{\scp{\alpha_{i}}{\alpha_{i}}  \scp{\alpha_{j}}{\alpha_{j}}}}=A_{ij}\sqrt{\dfrac{\scp{\alpha_{j}}{\alpha_{j}}}{\scp{\alpha_{i}}{\alpha_{i}}}},\label{eq:relationSchlafli1}
\end{flalign}
with the angle between two roots corresponding to the argument of cosine in \cref{schlafli1}. \Cref{eq:relationSchlafli1} implies
\begin{flalign}
 &C=\sqrt{D}A\sqrt{D}^{-1}.
\end{flalign}
We list in \cref{tab:mvalues} the possible angles between two simple roots and the corresponding $m$, the order of the product of their simple reflections, for the different connections appearing in the respective Dynkin diagram. The action of two consecutive reflections with respect to two planes orthogonal to a pair of vectors at angle $\pi/m$ corresponds to a rotation of $2\pi/m$ in the plane spanned by them.
\begin{table}[h!]
\renewcommand{\arraystretch}{2}
\begin{center}
\begin{tabular}{|c|c|c|c|}
\hline
\textbf{Dynkin diagram }&$\mathbf{\scp{\alpha}{\beta}}$&$\mathbf{\theta}$&$\mathbf{m_{\alpha\beta}}$\\ \hline
\scalebox{1} 
{
\begin{pspicture}(0,-0.2)(1.6,0.2)
\definecolor{color172b}{rgb}{0.996078431372549,0.996078431372549,0.996078431372549}
\pscircle[linewidth=0.02,dimen=outer,fillstyle=solid,fillcolor=color172b](0.2,0.0){0.2}
\pscircle[linewidth=0.02,dimen=outer,fillstyle=solid,fillcolor=color172b](1.4,0.0){0.2}
\end{pspicture} 
}
&0&$\pi/2$&2\\
\scalebox{1} 
{
\begin{pspicture}(0,-0.2)(1.6,0.2)
\definecolor{color172b}{rgb}{0.996078431372549,0.996078431372549,0.996078431372549}
\psline[linewidth=0.04cm](0.2,0.0)(1.4,0.0)
\pscircle[linewidth=0.02,dimen=outer,fillstyle=solid,fillcolor=color172b](0.2,0.0){0.2}
\pscircle[linewidth=0.02,dimen=outer,fillstyle=solid,fillcolor=color172b](1.4,0.0){0.2}
\end{pspicture} 
}
&-1&$\pi/3$&3\\
\scalebox{1} 
{
\begin{pspicture}(0,-0.22)(1.6,0.22)
\definecolor{color172b}{rgb}{0.996078431372549,0.996078431372549,0.996078431372549}
\psline[linewidth=0.04cm](0.6,0.0)(0.8,0.2)
\psline[linewidth=0.04cm](0.2,-0.1)(1.4,-0.1)
\psline[linewidth=0.04cm](0.2,0.1)(1.4,0.1)
\pscircle[linewidth=0.02,dimen=outer,fillstyle=solid,fillcolor=color172b](1.4,0.0){0.2}
\pscircle[linewidth=0.02,dimen=outer,fillstyle=solid,fillcolor=color172b](0.2,0.0){0.2}
\psline[linewidth=0.04cm](0.6,0.0)(0.8,-0.2)
\end{pspicture} 
}
&-1&$\pi/4$&4\\
\scalebox{1} 
{
\begin{pspicture}(0,-0.22)(1.6,0.22)
\definecolor{color172b}{rgb}{0.996078431372549,0.996078431372549,0.996078431372549}
\psline[linewidth=0.04cm](0.2,0.0)(1.4,0.0)
\psline[linewidth=0.04cm](0.6,0.0)(0.8,0.2)
\psline[linewidth=0.04cm](0.2,-0.1)(1.4,-0.1)
\psline[linewidth=0.04cm](0.2,0.1)(1.4,0.1)
\pscircle[linewidth=0.02,dimen=outer,fillstyle=solid,fillcolor=color172b](1.4,0.0){0.2}
\pscircle[linewidth=0.02,dimen=outer,fillstyle=solid,fillcolor=color172b](0.2,0.0){0.2}
\psline[linewidth=0.04cm](0.6,0.0)(0.8,-0.2)
\end{pspicture} 
}
&$-3/2$&$\pi/6$&6\\

\hline
\end{tabular}
\end{center}
\caption{For each type of joint between two simple roots in the Dynkin diagram, in the first column, we list the value of their scalar product, the angle between them and the value of $m$.}\label{tab:mvalues}
\end{table}
The signature of the generalized Cartan matrix is equal to the signature of the Schl\"afli matrix and a classification in finite, affine and indefinite types, identical to the one defined above, applies. 
\FloatBarrier
\section{Branes in $\mathbf{E_{11}}$}\label{sec:braneE11Weyl}
\FloatBarrier
In the previous section we have introduced some basic notions concerning Coxeter and Weyl groups. In this section we analyze the Kac-Mood algebra $E_{11}$ and the U-duality representations hosting half-supersymmetric branes in maximal supergravity theories. $E_{11}$ (or $E_{8}^{+++}$) is the Kac-Moody algebra obtained as very extension of $E_{8}$ \cite{kac1990}; its Dynkin diagram and Coxeter graph are sketched in \cref{E11CoxeterDynkindiagram}. From now on for the simple roots we adopt the numeration in \cref{E11Dynkindiagram}. Since $\det A=-2$, $E_{11}$ is of  indefinite type.\\   

The set of roots of a Kac-Moody algebra could be divided into real and imaginary roots, 
\begin{flalign}
 &\Phi=\Phi^{re}\sqcup\Phi^{im}\qquad \text{\rm{(disjoint union)}}.
\end{flalign}
A root $\alpha\in \Phi$ is called a \textbf{real root} if there exists $w\in W$ such that $\alpha=w\alpha_{i}$ for some $\alpha_{i}\in \Delta$, with $\Delta$ set of simple roots and the Weyl group $W$ being defined as the group generated by all simple reflections. A root that is not real is called \textbf{imaginary root}. Real and imaginary roots are completely characterized by their squared norm \cite{kac1990}:
\begin{subequations}
 \begin{align}
  \alpha\in \Phi^{re}\quad&\Longleftrightarrow\quad \scp{\alpha}{\alpha}>0\\
  \alpha\in \Phi^{im}\quad&\Longleftrightarrow\quad \scp{\alpha}{\alpha}\leqslant 0.
 \end{align}
\end{subequations}
%
%
%
%
%
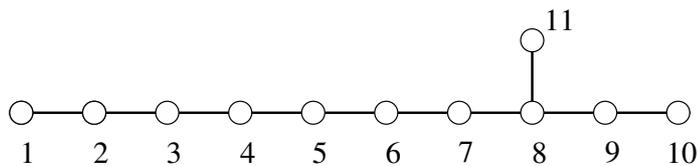
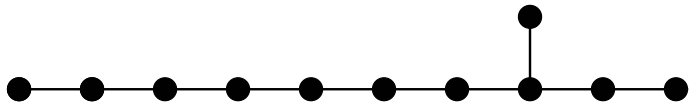
\begin{figure}[]
\centering
\subfloat[][Dynkin diagram of $E_{11}$]{
        \centering
\scalebox{0.8} 
{
\begin{pspicture}(0,-1.3184375)(11.413438,1.3184375)
\definecolor{color127b}{rgb}{0.996078431372549,0.996078431372549,0.996078431372549}
\psline[linewidth=0.04cm](8.6,0.7284375)(8.6,-0.4715625)
\psline[linewidth=0.04cm](0.2,-0.4715625)(11.2,-0.4715625)
\pscircle[linewidth=0.02,dimen=outer,fillstyle=solid](0.2,-0.4715625){0.2}
\pscircle[linewidth=0.02,dimen=outer,fillstyle=solid](1.4,-0.4715625){0.2}
\pscircle[linewidth=0.02,dimen=outer,fillstyle=solid,fillcolor=color127b](2.6,-0.4715625){0.2}
\pscircle[linewidth=0.02,dimen=outer,fillstyle=solid](3.8,-0.4715625){0.2}
\pscircle[linewidth=0.02,dimen=outer,fillstyle=solid](5.0,-0.4715625){0.2}
\pscircle[linewidth=0.02,dimen=outer,fillstyle=solid](9.8,-0.4715625){0.2}
\pscircle[linewidth=0.02,dimen=outer,fillstyle=solid](8.6,-0.4715625){0.2}
\pscircle[linewidth=0.02,dimen=outer,fillstyle=solid](7.4,-0.4715625){0.2}
\pscircle[linewidth=0.02,dimen=outer,fillstyle=solid](6.2,-0.4715625){0.2}
\pscircle[linewidth=0.02,dimen=outer,fillstyle=solid](11.0,-0.4715625){0.2}
\pscircle[linewidth=0.02,dimen=outer,fillstyle=solid](8.6,0.7284375){0.2}
\pscircle[linewidth=0.02,dimen=outer,fillstyle=solid,fillcolor=color127b](1.4,-0.4715625){0.2}
\pscircle[linewidth=0.02,dimen=outer,fillstyle=solid,fillcolor=color127b](0.2,-0.4715625){0.2}
\usefont{T1}{ptm}{m}{n}
\rput(0.30421874,-1.1265625){\Large 1}
\usefont{T1}{ptm}{m}{n}
\rput(1.5110937,-1.1265625){\Large 2}
\usefont{T1}{ptm}{m}{n}
\rput(2.7151563,-1.1265625){\Large 3}
\usefont{T1}{ptm}{m}{n}
\rput(3.9275,-1.1265625){\Large 4}
\usefont{T1}{ptm}{m}{n}
\rput(5.113125,-1.1265625){\Large 5}
\usefont{T1}{ptm}{m}{n}
\rput(6.3209376,-1.1265625){\Large 6}
\usefont{T1}{ptm}{m}{n}
\rput(7.5125,-1.1265625){\Large 7}
\usefont{T1}{ptm}{m}{n}
\rput(9.920625,-1.1265625){\Large 9}
\usefont{T1}{ptm}{m}{n}
\rput(8.72,-1.1265625){\Large 8}
\usefont{T1}{ptm}{m}{n}
\rput(11.060312,-1.1265625){\Large 10}
\usefont{T1}{ptm}{m}{n}
\rput(9.054218,1.0734375){\Large 11}
\end{pspicture} 
}
\label{E11Dynkindiagram}        
    }\\
 \subfloat[][Coxeter graph of $E_{11}$]{
        \centering
\scalebox{0.8} 
{
\begin{pspicture}(0,-0.8)(11.22,0.8)
\definecolor{color44b}{rgb}{0.00392156862745098,0.00392156862745098,0.00392156862745098}
\psline[linewidth=0.04cm](8.6,0.6)(8.6,-0.6)
\psline[linewidth=0.04cm](0.2,-0.6)(11.2,-0.6)
\pscircle[linewidth=0.02,dimen=outer,fillstyle=solid](0.2,-0.6){0.2}
\pscircle[linewidth=0.02,dimen=outer,fillstyle=solid](1.4,-0.6){0.2}
\pscircle[linewidth=0.02,dimen=outer,fillstyle=solid,fillcolor=color44b](2.6,-0.6){0.2}
\pscircle[linewidth=0.02,dimen=outer,fillstyle=solid,fillcolor=color44b](3.8,-0.6){0.2}
\pscircle[linewidth=0.02,dimen=outer,fillstyle=solid,fillcolor=color44b](5.0,-0.6){0.2}
\pscircle[linewidth=0.02,dimen=outer,fillstyle=solid,fillcolor=color44b](9.8,-0.6){0.2}
\pscircle[linewidth=0.02,dimen=outer,fillstyle=solid,fillcolor=color44b](8.6,-0.6){0.2}
\pscircle[linewidth=0.02,dimen=outer,fillstyle=solid,fillcolor=color44b](7.4,-0.6){0.2}
\pscircle[linewidth=0.02,dimen=outer,fillstyle=solid,fillcolor=color44b](6.2,-0.6){0.2}
\pscircle[linewidth=0.02,dimen=outer,fillstyle=solid,fillcolor=color44b](11.0,-0.6){0.2}
\pscircle[linewidth=0.02,dimen=outer,fillstyle=solid,fillcolor=color44b](8.6,0.6){0.2}
\pscircle[linewidth=0.02,dimen=outer,fillstyle=solid,fillcolor=color44b](1.4,-0.6){0.2}
\pscircle[linewidth=0.02,dimen=outer,fillstyle=solid,fillcolor=color44b](0.2,-0.6){0.2}
\end{pspicture} 
}

\label{E11Coxetergraph}    }
\caption{$E_{11}$ Dynkin and Coxeter diagrams.}\label{E11CoxeterDynkindiagram}

\end{figure}
It follows by the definition that the set of positive real roots, $\Phi^{re}_{+}$ could be generated by Weyl reflections acting on simple roots
\begin{flalign}
 &\Phi_{+}^{re}=W\Delta.
\end{flalign}
This implies that, since for any $\alpha\in \Phi^{re}$ there is $\alpha_{i}\in \Delta$ such that $\scp{\alpha}{\alpha}=\scp{\alpha_{i}}{\alpha_{i}}$, there could be real roots at most of $\rm{rank}\ \mathfrak{g}$ different lengths. We call $\alpha$ a \textbf{long real root} if $\scp{\alpha}{\alpha}=\max_{i}\scp{\alpha_{i}}{\alpha_{i}}$, a \textbf{short real root} if $\scp{\alpha}{\alpha}=\min_{i}\scp{\alpha_{i}}{\alpha_{i}}$. In the simply laced case real roots have only one possible length. This means that in $E_{11}$, normalizing the squared norm of simple roots to two, real roots have squared norm two and any imaginary root has squared norm zero or negative. This is a particular case of a more general result. It has been proved that the number of disjoint orbits for real roots corresponds to the number of disconnected components of the Dynkin diagram obtained deleting non single connection \cite[theorem 5.1 and corollaries 5.2 and 5.3]{Carbone:2010tn}. \\

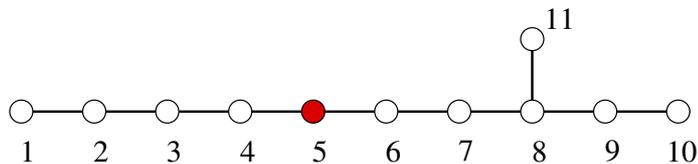
\begin{figure}[h!]
\centering
\scalebox{0.8} 
{
\begin{pspicture}(0,-1.3184375)(11.413438,1.3184375)
\definecolor{color127b}{rgb}{0.996078431372549,0.996078431372549,0.996078431372549}
\psline[linewidth=0.04cm](8.6,0.7284375)(8.6,-0.4715625)
\psline[linewidth=0.04cm](0.2,-0.4715625)(11.2,-0.4715625)
\pscircle[linewidth=0.02,dimen=outer,fillstyle=solid](0.2,-0.4715625){0.2}
\pscircle[linewidth=0.02,dimen=outer,fillstyle=solid](1.4,-0.4715625){0.2}
\pscircle[linewidth=0.02,dimen=outer,fillstyle=solid](2.6,-0.4715625){0.2}
\pscircle[linewidth=0.02,dimen=outer,fillstyle=solid](3.8,-0.4715625){0.2}
\pscircle[linewidth=0.02,dimen=outer,fillstyle=solid, fillcolor=red217](5.0,-0.4715625){0.2}
\pscircle[linewidth=0.02,dimen=outer,fillstyle=solid](9.8,-0.4715625){0.2}
\pscircle[linewidth=0.02,dimen=outer,fillstyle=solid](8.6,-0.4715625){0.2}
\pscircle[linewidth=0.02,dimen=outer,fillstyle=solid](7.4,-0.4715625){0.2}
\pscircle[linewidth=0.02,dimen=outer,fillstyle=solid](6.2,-0.4715625){0.2}
\pscircle[linewidth=0.02,dimen=outer,fillstyle=solid](11.0,-0.4715625){0.2}
\pscircle[linewidth=0.02,dimen=outer,fillstyle=solid](8.6,0.7284375){0.2}
\usefont{T1}{ptm}{m}{n}
\rput(0.30421874,-1.1265625){\Large 1}
\usefont{T1}{ptm}{m}{n}
\rput(1.5110937,-1.1265625){\Large 2}
\usefont{T1}{ptm}{m}{n}
\rput(2.7151563,-1.1265625){\Large 3}
\usefont{T1}{ptm}{m}{n}
\rput(3.9275,-1.1265625){\Large 4}
\usefont{T1}{ptm}{m}{n}
\rput(5.113125,-1.1265625){\Large 5}
\usefont{T1}{ptm}{m}{n}
\rput(6.3209376,-1.1265625){\Large 6}
\usefont{T1}{ptm}{m}{n}
\rput(7.5125,-1.1265625){\Large 7}
\usefont{T1}{ptm}{m}{n}
\rput(9.920625,-1.1265625){\Large 9}
\usefont{T1}{ptm}{m}{n}
\rput(8.72,-1.1265625){\Large 8}
\usefont{T1}{ptm}{m}{n}
\rput(11.060312,-1.1265625){\Large 10}
\usefont{T1}{ptm}{m}{n}
\rput(9.054218,1.0734375){\Large 11}
\end{pspicture} 
}
\caption{Decomposition of $E_{11}$ in $GL(5,\mathds{R})\times E_{6(6)}$.}\label{E11DynkindiagramD5}

\end{figure}
At this point we need to recall the role of $E_{11}$ in the context of maximal supergravity theories. Starting from the eleven dimensional $E_{11}$ non-linear realization of M-theory it is possible to derive the bosonic spectrum of all maximal supergravity theories from three dimensions above \cite{Riccioni:2007au}. This can be achieved by decomposing the adjoint representation of $E_{11}$ in the subgroup $E_{11-d}\times GL(d,\mathds{R})$ and selecting real roots. In particular for the $d$ dimensional maximal theory, in order to define the opportune decomposition, one should identify the gravity line, i.e. the subset of nodes of the $E_{11}$ Dynkin diagram containing node 1, following the numeration defined in \cref{E11CoxeterDynkindiagram}, and corresponding to the Dynkin diagram of $A_{d-1}$. The nodes joined by a single connection to the last node of $A_{d-1}$, and not belonging to it, should be deleted. The remaining nodes correspond to the Dynkin diagram of $E_{11-d}$, the U-duality symmetry of the $d$ dimensional maximal supergravity. The $SL(d,\mathds{R})$ symmetry described by the gravity line is promoted to a $GL(d,\mathds{R})$ by one extra Cartan generator coming from the deleted nodes. This symmetry describes the gravity sector of the theory.  The procedure could be visualized, for the five dimensional theory, in \cref{E11DynkindiagramD5}. Node 5 is deleted, the first four nodes plus the Cartan associated with $\alpha_{5}$ define a symmetry $GL(5,\mathds{R})$, while the nodes associated with $\alpha_{i}$ for $i>5$ correspond to the Dynkin diagram of the five dimensional U-duality group $E_{6(6)}$. Carrying out this procedure, performing the branching and selecting states corresponding to real roots, one obtains for every maximal supergravity theory the spectrum of differential forms. We limit our attention to maximal theories from three to nine dimensions. We list the results in \cref{weightsd3,weightsd4,weightsd5,weightsd6,weightsd7,weightsd8,weightsd9}, where, in the first column we define the label for the highest weight of the representation. We use a notation of the form $\Lambda_{dp}$ where $d$ is the dimension and $p$ the rank of the corresponding differential form. If more than one representation occurs for the same rank forms these ere distinguished with $a,b$ after $p$ in the subscript. In the second column of the tables we show the coordinates $a_{i}$ in the basis of simple roots, $\Lambda=\sum_{i=1}^{11}a_{i}\alpha_{i}$.  In the third column we list the dimension of the irreducible representations of the corresponding U-duality group and in the last column the corresponding Dynkin labels. All these representations correspond in $E_{11}$ to real roots \cite{Riccioni:2007au}. This means that for any two weights belonging to any of these U-duality irreps. there is a Weyl transformation in $W_{E_{11}}$ connecting them. \\

\begin{table}[h!]
\renewcommand{\arraystretch}{2}
\begin{center}
\begin{tabular}{|c|c|c|c|}
\hline
\textbf{Weight}&\textbf{Vector}&\textbf{$\mathbf{GL(2,\mathds{R})}$ rep}&\textbf{Dynkin labels} \\
\hline\hline
 $\Lambda_{99}$& (1, 2, 3, 4, 5, 6, 7, 8, 5, 4, 4)&\bf{4}&$\boxed{3}$\\
$\Lambda_{98}$& (1, 2, 3, 4, 5, 6, 7, 8, 4, 3,  4)&\bf{3}&$\boxed{2}$\\
$\Lambda_{97}$&(1, 2, 3, 4, 5, 6, 7, 7, 4, 3, 3)&\bf{3}&$\boxed{2}$\\
$\Lambda_{96b}$&(1, 2, 3, 4, 5, 6, 6, 6, 3, 2, 3)&\bf{2}&$\boxed{1}$\\
$\Lambda_{96a}$& (1, 2, 3, 4, 5, 6, 6, 6, 4, 2, 2)&\bf{1}&$\boxed{0}$\\
$\Lambda_{95}$& (1, 2, 3, 4, 5, 5, 5, 5, 3, 2, 2)&\bf{2}&$\boxed{1}$\\
$\Lambda_{94}$& (1, 2, 3, 4, 4, 4, 4, 4, 2, 1, 2)&\bf{1}&$\boxed{0}$\\
$\Lambda_{93}$& (1, 2, 3, 3, 3, 3, 3, 3, 2, 1, 1)&\bf{1}&$\boxed{0}$\\
$\Lambda_{92}$& (1, 2, 2, 2, 2, 2, 2, 2, 1, 1, 1)&\bf{2}&$\boxed{1}$\\
$\Lambda_{91b}$& (1, 1, 1, 1, 1, 1, 1, 1, 0, 0, 1)&\bf{1}&$\boxed{0}$\\
$\Lambda_{91a}$& (1, 1, 1, 1, 1, 1, 1, 1, 1, 1, 0)&\bf{2}&$\boxed{1}$\\
\hline

\end{tabular}
\end{center}
\caption{Highest weights in $E_{11}$ for the U-duality irreps. hosting the differential forms in nine dimensional maximal supergravity.}\label{weightsd9}
\end{table}

\begin{table}[h!]
\renewcommand{\arraystretch}{2}
\begin{center}
\begin{tabular}{|c|c|c|c|}
\hline
\textbf{Weight}&\textbf{Vector}&\textbf{$\mathbf{SL(3,\mathds{R})\times SL(2,\mathds{R})}$ rep}& \bf{Dynkin Labels}\\
\hline\hline
$\Lambda_{88}$& (1, 2, 3, 4, 5, 6, 7, 8, 7, 4, 4)&\bf{15}&$\boxed{2\ 1\ 0}$\\
$\Lambda_{87}$& (1, 2, 3, 4, 5, 6, 7, 7, 6, 3, 4)&\bf{12}&$\boxed{2\ 0\ 1}$\\
$\Lambda_{86b}$& (1, 2, 3, 4, 5, 6, 6, 6, 5, 3, 3)&\bf{8}&$\boxed{1\ 1\ 0}$\\
$\Lambda_{86a}$& (1, 2, 3, 4, 5, 6, 6, 6, 4, 2, 4)&\bf{3}&$\boxed{0\ 0\ 2}$\\
$\Lambda_{85}$& (1, 2, 3, 4, 5, 5, 5, 5, 4, 2, 3)&\bf{6}&$\boxed{1\ 0\ 1}$\\
$\Lambda_{84}$& (1, 2, 3, 4, 4, 4, 4, 4, 3, 2, 2)&\bf{3}&$\boxed{0\ 1\ 0}$\\
$\Lambda_{83}$& (1, 2, 3, 3, 3, 3, 3, 3, 2, 1, 2)&\bf{2}&$\boxed{0\ 0\ 1}$\\
$\Lambda_{82}$& (1, 2, 2, 2, 2, 2, 2, 2, 2, 1, 1)&\bf{3}&$\boxed{1\ 0\ 0}$\\
$\Lambda_{81}$& (1, 1, 1, 1, 1, 1, 1, 1, 1, 1, 1)&\bf{6}&$\boxed{0\ 1\ 1}$\\
\hline

\end{tabular}
\end{center}
\caption{Highest weights in $E_{11}$ for the U-duality irreps. hosting the differential forms in eight dimensional maximal supergravity. }\label{weightsd8}
\end{table}

\begin{table}[h!]
\renewcommand{\arraystretch}{1.6}
\begin{center}
\begin{tabular}{|c|c|c|c|}
\hline
\textbf{Weight}&\textbf{Vector}& \textbf{$\mathbf{SL(5,\mathds{R})}$ rep}&\bf{Dynkin Labels} \\
\hline\hline
$\Lambda_{77}$& (1, 2, 3, 4, 5, 6, 7, 10, 7, 4, 6)&\bf{70}&$\boxed{0\ 0\ 1\ 2}$\\
$\Lambda_{76b}$& (1, 2, 3, 4, 5, 6, 6, 8, 6, 4, 4)&\bf{15}&$\boxed{0\ 0\ 2\ 0}$\\
$\Lambda_{76a}$& (1, 2, 3, 4, 5, 6, 6, 9, 6, 3, 5)&\bf{40}&$\boxed{1\ 0\ 0\ 1}$\\
$\Lambda_{75}$& (1, 2, 3, 4, 5, 5, 5, 7, 5, 3, 4)&\bf{24}&$\boxed{0\ 0\ 1\ 1}$\\
$\Lambda_{74}$& (1, 2, 3, 4, 4, 4, 4, 6, 4, 2, 3)&\bf{10}&$\boxed{1\ 0\ 0\ 0}$\\
$\Lambda_{73}$& (1, 2, 3, 3, 3, 3, 3, 4, 3, 2, 2)&\bf{5}&$\boxed{0\ 0\ 1\ 0}$\\
$\Lambda_{72}$& (1, 2, 2, 2, 2, 2, 2, 3, 2, 1, 2)&\bf{5}&$\boxed{0\ 0\ 0\ 1}$\\
$\Lambda_{71}$& (1, 1, 1, 1, 1, 1, 1, 2, 2, 1, 1)&\bf{10}&$\boxed{0\ 1\ 0\ 0}$\\
\hline

\end{tabular}
\end{center}
\caption{Highest weights in $E_{11}$ for the U-duality irreps. hosting the differential forms in seven dimensional maximal supergravity.}\label{weightsd7}
\end{table}

\begin{table}[h!]
\renewcommand{\arraystretch}{1.6}
\begin{center}
\begin{tabular}{|c|c|c|c|}
\hline
\textbf{Weight}&\textbf{Vector}&\textbf{$\mathbf{SO(5,5)}$ rep}&\bf{Dynkin Labels} \\
\hline\hline
$\Lambda_{66a}$& (1, 2, 3, 4, 5, 6, 9, 12, 9, 5, 6)&\bf{320}&$\boxed{0\ 0\ 1\ 1\ 0}$\\
$\Lambda_{66b}$& (1, 2, 3, 4, 5, 6, 10, 12, 8, 4, 6)&\bf{126}&$\boxed{2\ 0\ 0\ 0\ 0}$\\
$\Lambda_{65}$& (1, 2, 3, 4, 5, 5, 8, 10, 7, 4, 5)&\bf{144}&$\boxed{1\ 0\ 0\ 1\ 0}$\\
$\Lambda_{64}$& (1, 2, 3, 4, 4, 4, 6, 8, 6, 3, 4)&\bf{45}&$\boxed{0\ 0\ 1\ 0\ 0}$\\
$\Lambda_{63}$& (1, 2, 3, 3, 3, 3, 5, 6, 4, 2, 3)&\bf{16}&$\boxed{1\ 0\ 0\ 0\ 0}$\\
$\Lambda_{62}$& (1, 2, 2, 2, 2, 2, 3, 4, 3, 2, 2)&\bf{10}&$\boxed{0\ 0\ 0\ 1\ 0}$\\
$\Lambda_{61}$& (1, 1, 1, 1, 1, 1, 2, 3, 2, 1, 2)&\bf{16}&$\boxed{0\ 0\ 0\ 0\ 1}$\\
\hline

\end{tabular}
\end{center}
\caption{Highest weights in $E_{11}$ for the U-duality irreps. hosting the differential forms in six dimensional maximal supergravity.}\label{weightsd6}
\end{table}

\begin{table}[h!]
\renewcommand{\arraystretch}{1.6}
\begin{center}
\begin{tabular}{|c|c|c|c|}
\hline
\textbf{Weight}&\textbf{Vector}&\textbf{$\mathbf{E_{6(6)}}$ rep}&\bf{Dynkin Labels} \\
\hline\hline
$\Lambda_{55}$& (1, 2, 3, 4, 5, 9, 12, 15, 10, 5, 8)&\bf{1728}&$\boxed{1\ 0\ 0\ 0\ 0\ 1}$\\
$\Lambda_{54}$& (1, 2, 3, 4, 4, 7, 10, 12, 8, 4, 6)&\bf{351}&$\boxed{0\ 1\ 0\ 0\ 0\ 0}$\\
$\Lambda_{53}$& (1, 2, 3, 3, 3, 5, 7, 9, 6, 3, 5)&\bf{78}&$\boxed{0\ 0\ 0\ 0\ 0\ 1}$\\
$\Lambda_{52}$& (1, 2, 2, 2, 2, 4, 5, 6, 4, 2, 3)&\bf{27}&$\boxed{1\ 0\ 0\ 0\ 0\ 0}$\\
$\Lambda_{51}$& (1, 1, 1, 1, 1, 2, 3, 4, 3, 2, 2)&\bf{27}&$\boxed{0\ 0\ 0\ 0\ 1\ 0}$\\
\hline
\end{tabular}
\end{center}
\caption{Highest weights in $E_{11}$ for the U-duality irreps. hosting the differential forms in five dimensional maximal supergravity.}\label{weightsd5}
\end{table}

\begin{table}[h!]
\renewcommand{\arraystretch}{1.6}
\begin{center}
\begin{tabular}{|c|c|c|c|}
\hline
\textbf{Weight}&\textbf{Vector}&$\mathbf{E_{7(7)}}$ \bf{rep}&\bf{Dynkin Labels} \\
\hline\hline
$\Lambda_{44}$& (1, 2, 3, 4, 8, 12, 16, 20, 14, 7, 10)&\bf{8645}&$\boxed{0\ 0\ 0\ 0\ 1\ 0\ 0}$\\
$\Lambda_{43}$& (1, 2, 3, 3, 6, 9, 12, 15, 10, 5, 8)&\bf{912}&$\boxed{0\ 0\ 0\ 0\ 0\ 0\ 1}$\\
$\Lambda_{42}$& (1, 2, 2, 2, 4, 6, 8, 10, 7, 4, 5)&\bf{133}&$\boxed{0\ 0\ 0\ 0\ 0\ 1\ 0}$\\
$\Lambda_{41}$& (1, 1, 1, 1, 3, 4, 5, 6, 4, 2, 3)&\bf{56}&$\boxed{1\ 0\ 0\ 0\ 0\ 0\ 0}$\\
\hline

\end{tabular}
\end{center}
\caption{Highest weights in $E_{11}$ for the U-duality irreps. hosting the differential forms in four dimensional maximal supergravity.}\label{weightsd4}
\end{table}

\begin{table}[h!]
\renewcommand{\arraystretch}{2}
\begin{center}
\begin{tabular}{|c|c|c|c|}
\hline
\textbf{Weight}&\textbf{Vector}& \textbf{$\mathbf{E_{8(8)}}$ rep}&\bf{Dynkin Labels} \\
\hline\hline
$\Lambda_{33}$& (1, 2, 3, 9, 15, 21, 27, 33, 22, 11, 17)&\bf{147250}&$\boxed{0\ 0\ 0\ 0\ 0\ 0\ 0\ 1}$\\
$\Lambda_{32}$& (1, 2, 2, 6, 10, 14, 18, 22, 15, 8, 11)&\bf{3875}&$\boxed{0\ 0\ 0\ 0\ 0\ 0\ 1\ 0}$\\
$\Lambda_{31}$& (1, 1, 1, 4, 6, 8, 10, 12, 8, 4, 6)&\bf{248}&$\boxed{1\ 0\ 0\ 0\ 0\ 0\ 0\ 0}$\\
\hline
\end{tabular}
\end{center}
\caption{Highest weights in $E_{11}$ for the U-duality irreps. hosting the differential forms in three dimensional maximal supergravity.}\label{weightsd3}
\end{table}

Looking in perspective to brane solutions in maximal supergravity the analysis of differential forms is crucial since a $p$-brane is charged with respect to $(p+1)$-forms. This link induces an algebraic characterization for the 1/2-BPS solutions \cite{Bergshoeff:2013sxa, Bergshoeff:2014lxa}. Half-supersymmetric branes are solutions preserving the maximum amount of supersymmetry and they serve also as building blocks for less supersymmetric solutions. It has been found \cite{Bergshoeff:2013sxa, Bergshoeff:2014lxa} that 1/2-BPS branes in maximal supergravity correspond to the longest weights of the U-duality representation hosting their charges. This correspondence, taking the name of \textit{longest weight rule}, defines and elegant criterion to identify the number of half-supersymmetric solutions in any maximal supergravity theory and it turns out to play a prominent role also in the classification of U-duality orbits \cite{Bergshoeff:2013sxa, Bergshoeff:2014lxa, Marrani:2015gfa}. It should be remarked that the length in this case is computed with respect to the U-duality algebra and thus it does not correspond in general to the length in $E_{11}$.\\

The results on the orbits of the real roots combined with the algebraic classification of 1/2-BPS branes describe a really interesting setting. Any pair of half-supersymmetric branes, say a $p_1$-brane and a $p_2$-brane, taken in any two maximal theories in $d_{1}$ and $d_{2}$ are connected by a Weyl reflection in $E_{11}$. In the next section we use the results just discussed as a starting point to investigate the relation between different branes in different theories and to define universal algebraic structures codifying the number of 1/2 p-brane in any dimension.

\FloatBarrier


\section{Algebraic Structures Behind Half-Supersymmetric Branes}\label{sec:3}

The notions introduced in the first section and the general analysis of the previous one evidence the relevant role that the Weyl groups play in the analysis of half-supersymmetric solutions in maximal theories. In this section, studying the action of the Weyl group of the U-duality groups on branes, we derive a set of algebraic relations that fully define the content of half-supersymmetric p-branes in any maximal supergravity from three to nine dimensions. As a first step in this direction, let's consider a Lie algebra $\mathfrak{g}$  with Weyl group $W_{\mathfrak{g}}$ acting on an irreducible representation of $\mathfrak{g}$, $V$. We call $\Lambda$ the highest weight of $V$ and 
\begin{flalign}
 &\Lambda=\boxed{d_{1}\ d_{2}\ ...\ d_{n}}
\end{flalign}
its Dynkin labels, characterized by $d_{i}\geqslant 0$ for all $i=1,..,n$. We want to identify its stabilizers inside $W_{\mathfrak{g}}$. To this aim it will be useful to introduce the concept of parabolic subgroup of a Coxeter system. 
Give a Coxeter system $(W,S)$ a \textbf{parabolic subgroup} for $W$, $W_{I}$ is a subgroup of $W$ generated by all the simple reflections in the subset $I\subseteq S$. This induces also the definition of its complement
\begin{flalign}
&W^{I}=\{w\in W\ |\ l(ws_{\alpha})>l(w)\ \forall\ s_{\alpha}\in I\}.
\end{flalign}
We refer to the set of simple root generating $W_{I}$ as $\Delta_{I}$. The isotropy group of the highest weight of the  representation $V$ could be seen as a parabolic subgroup of $W_{\mathfrak{g}}$.  In particular we consider the following application of \cite[proposition 1.15]{humphreys1990}
\begin{proposition}\label{prop:isitropygroup}
 Let $\Lambda$ be a dominant weight in an irreducible representation of the Lie algebra $\mathfrak{g}$ then its isotropy group in $W_{\mathfrak{g}}$ is the parabolic subgroup $W_{I^{\Lambda}_{0}}$, where $I^{\Lambda}_{0}=\{s_{\alpha_{i}}\in S\ |\ \scp{\Lambda}{\alpha_{i}}=0\}$
\end{proposition}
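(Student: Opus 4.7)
The plan is to prove the two inclusions $W_{I^{\Lambda}_{0}} \subseteq \mathrm{Stab}_{W_{\mathfrak{g}}}(\Lambda)$ and $\mathrm{Stab}_{W_{\mathfrak{g}}}(\Lambda) \subseteq W_{I^{\Lambda}_{0}}$ separately. The first is immediate: for every generator $s_{\alpha_i}$ of $W_{I^{\Lambda}_{0}}$ the reflection formula gives
\begin{equation*}
s_{\alpha_i}\Lambda \;=\; \Lambda - \frac{2\,\scp{\Lambda}{\alpha_i}}{\scp{\alpha_i}{\alpha_i}}\,\alpha_i \;=\; \Lambda,
\end{equation*}
since $\scp{\Lambda}{\alpha_i}=0$ by definition of $I^{\Lambda}_{0}$, so the whole subgroup they generate fixes $\Lambda$.

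The content of the statement therefore lies in the reverse inclusion, and I would attack it by induction on the length $l(w)$ of $w\in \mathrm{Stab}_{W_{\mathfrak{g}}}(\Lambda)$. The base case $l(w)=0$ is trivial. For the inductive step, take $l(w)=k\geqslant 1$, fix a reduced expression $w=s_{i_1}s_{i_2}\cdots s_{i_k}$, and set $\alpha:=\alpha_{i_k}$, $s:=s_{\alpha}$. Because the chosen expression is reduced, $l(ws)=k-1<l(w)$, and I would invoke the classical descent criterion for Weyl (more generally, Coxeter) groups: $l(ws)<l(w)$ if and only if $w\alpha \in \Phi^{-}$. Thus $w\alpha$ is a negative root.

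The key calculation then combines this with the dominance of $\Lambda$, the $W_{\mathfrak{g}}$-invariance of the bilinear form, and the assumption $w^{-1}\Lambda=\Lambda$:
\begin{equation*}
0 \;\geqslant\; \scp{\Lambda}{w\alpha} \;=\; \scp{w^{-1}\Lambda}{\alpha} \;=\; \scp{\Lambda}{\alpha} \;\geqslant\; 0,
\end{equation*}
where the first inequality uses that $w\alpha$ is negative and $\Lambda$ is dominant, and the last inequality uses dominance applied directly to the simple root $\alpha$. Hence $\scp{\Lambda}{\alpha}=0$, i.e.\ $s\in W_{I^{\Lambda}_{0}}$ and $s\Lambda=\Lambda$. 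Consequently $ws$ still stabilizes $\Lambda$, but its length is $k-1$, so by the inductive hypothesis $ws\in W_{I^{\Lambda}_{0}}$, and therefore $w=(ws)\,s\in W_{I^{\Lambda}_{0}}$.

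The main obstacle in this scheme is not really the algebra — the argument collapses to a three-line inequality — but rather the combinatorial input: the equivalence between $l(ws)<l(w)$ and $w\alpha \in \Phi^{-}$ for a simple root $\alpha$. I would invoke this as a black box from the general theory of reflection/Coxeter groups sketched in \cref{sec:1} (it is a standard consequence of the exchange condition and the description $\Phi^{re}_{+}=W\Delta$). Once it is accepted, the induction closes cleanly and the proposition follows.
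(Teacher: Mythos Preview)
Your proof is correct, but it follows a different route from the paper's. The paper argues by contradiction, first invoking the parabolic coset decomposition $W_{\mathfrak{g}} = W^{I^{\Lambda}_{0}} \cdot W_{I^{\Lambda}_{0}}$ (Humphreys, Prop.~1.10): writing a hypothetical stabilizer $w \notin W_{I^{\Lambda}_{0}}$ as $uv$ with $u \in W^{I^{\Lambda}_{0}}$ nontrivial, it locates a simple root $\alpha_i \notin \Delta_{I}$ with $u\alpha_i < 0$, then derives the same $0 < \scp{\Lambda}{\alpha_i} \leqslant 0$ contradiction via invariance of the form. Your induction on length bypasses the coset decomposition entirely, peeling off one simple reflection at a time using only the descent criterion $l(ws)<l(w) \Leftrightarrow w\alpha \in \Phi^{-}$. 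Both arguments ultimately hinge on the same inequality sandwich, but yours is more self-contained, while the paper's version has the advantage of setting up the $W^{I}\cdot W_{I}$ machinery that is reused immediately afterwards in the proof of \cref{thm:weylorbit1}.
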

We report the proof for completeness, referring to \cite{humphreys1990} for the necessary results.
\begin{proof}
 Let's take $\Lambda$ dominant weight, then 
 \begin{flalign*}
  &\scp{\Lambda}{\alpha_{i}}\geqslant 0\qquad \forall\ \alpha_{i}\in\Delta.
 \end{flalign*}
It is clear that any $w\in W_{I^{\Lambda}_{0}}$ stabilizes $\Lambda$; now we want to show that any stabilizer belong to $W_{I^{\Lambda}_{0}}$. Assume there is $w\notin W_{I^{\Lambda}_{0}}$ such that $w\Lambda=\Lambda$. $w$ can be uniquely decomposed (by \cite[proposition 1.10]{humphreys1990}) as $w=uv$ with $u\in W^{I^{\Lambda}_{0}}$ and $v\in W_{I^{\Lambda}_{0}}$. Thus $w\Lambda=uv\Lambda=u\Lambda=\Lambda$. Then 
\begin{flalign*}
 &l(us_{\alpha})>l(u)\qquad \forall\ \alpha\in\Delta_{I}.
\end{flalign*}
This implies (by \cite[1.6 and corollary 1.7]{humphreys1990}) 
\begin{flalign*}
 &u\Delta_{I}\subset\Phi^{+}.
\end{flalign*}
There should be $\alpha_{i}\in\Delta$ such that $u\alpha_{i}<0$ and by the argument just exposed $\alpha_{i}\notin\Delta_{I}$. Thus we get
 \begin{flalign}
  &\scp{\Lambda}{\alpha_{i}}> 0,
 \end{flalign}
by definition of dominant weight, and
\begin{flalign}
 &\scp{\Lambda}{\alpha_{i}}=\scp{u\Lambda}{u\alpha_{i}}=\scp{\Lambda}{u\alpha_{i}}\leqslant 0
\end{flalign}
 that is absurd. 
\end{proof}

Then if $\Lambda_{1}$ and $\Lambda_{2}$ are two weights connected by a Weyl reflection $s$, $\Lambda_{2}=s\Lambda_{1}$ and $w$ is a stabilizer for $\Lambda_{1}$, $s^{-1}ws$ is a stabilizer for $\Lambda_{2}$. This defines a correspondence between stabilizers inside the Weyl group acting on Weyl equivalent weights. \\

Now we consider the following theorem , as a specialization of \cite[proposition 1.15 and theorem 1.12]{humphreys1990}.

\begin{thm} [Weyl Orbit]\label{thm:weylorbit1}
Given a weight $\Lambda$ in an irreducible representation of a Lie algebra $\mathfrak{g}$, its orbits under the Weyl group $W_{\mathfrak{g}}$ has dimension $N$ given by
\begin{flalign}
 &N=\dfrac{\dim W_{\mathfrak{g}}}{\dim W_{I^{\Lambda}_{0}}},
\end{flalign}
where $W_{I^{\Lambda}_{0}}$ is its isotropy group in $W_{\mathfrak{g}}$.
\end{thm}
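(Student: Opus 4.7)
The plan is to apply the orbit--stabilizer theorem of elementary group theory to the action of $W_{\mathfrak{g}}$ on the weight lattice. For any finite group $G$ acting on a set $X$ and any $x\in X$ one has $|G\cdot x|=|G|/|\mathrm{Stab}_{G}(x)|$, so the task reduces to identifying the stabilizer of $\Lambda$ inside $W_{\mathfrak{g}}$ (here ``$\dim W_{\mathfrak{g}}$'' in the statement is to be read as the cardinality of this finite Coxeter group).

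First I would reduce to the case of a dominant weight. Any weight $\Lambda$ of an irreducible representation is $W_{\mathfrak{g}}$-conjugate to a unique dominant weight $\Lambda'$ lying in the closed fundamental Weyl chamber, since this chamber is a strict fundamental domain for the Weyl action on $\mathfrak{h}^{*}$. If $\Lambda = w\Lambda'$ for some $w\in W_{\mathfrak{g}}$, then $\mathrm{Stab}_{W_{\mathfrak{g}}}(\Lambda)=w\,\mathrm{Stab}_{W_{\mathfrak{g}}}(\Lambda')\,w^{-1}$; this conjugate subgroup has the same cardinality, and moreover (as noted in the paragraph immediately after Proposition \ref{prop:isitropygroup}) the correspondence $W_{I_{0}^{\Lambda'}}\mapsto w\,W_{I_{0}^{\Lambda'}}\,w^{-1}$ is precisely the translation of isotropy groups along a Weyl orbit. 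Hence it is enough to compute the orbit cardinality when $\Lambda$ is dominant.

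For a dominant $\Lambda$, Proposition \ref{prop:isitropygroup} directly identifies the stabilizer as the parabolic subgroup $W_{I_{0}^{\Lambda}}$ generated by those simple reflections $s_{\alpha_{i}}$ with $\langle\Lambda,\alpha_{i}\rangle=0$. Feeding this into orbit--stabilizer yields
\begin{equation*}
|W_{\mathfrak{g}}\cdot\Lambda|=\frac{|W_{\mathfrak{g}}|}{|W_{I_{0}^{\Lambda}}|},
\end{equation*}
which is exactly the formula claimed.

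I do not expect a real obstacle here: the substantive input, namely the identification of the isotropy group with a standard parabolic subgroup, has already been carried out in Proposition \ref{prop:isitropygroup}. The only point requiring some care is the reduction from an arbitrary weight in the orbit to its dominant representative, and this is immediate from the fact that the fundamental chamber is a fundamental domain for the Weyl action; everything else is the formal orbit--stabilizer identity.
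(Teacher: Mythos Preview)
Your proposal is correct and follows essentially the same approach as the paper: both identify the stabilizer of $\Lambda$ as the parabolic subgroup $W_{I_{0}^{\Lambda}}$ via Proposition~\ref{prop:isitropygroup} and then count the orbit by cosets. The only cosmetic difference is that you invoke the orbit--stabilizer theorem as a black box, whereas the paper re-derives the same count explicitly using the distinguished coset decomposition $W=W^{I}\cdot W_{I}$ from \cite[Proposition~1.10]{humphreys1990}; your reduction to the dominant case is slightly more careful than the paper's, which tacitly assumes dominance from the outset.
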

\begin{proof}
Consider $W_{I^{\Lambda}_{0}}$, the isotropy group of $\Lambda$. Any $w\in W_{\mathfrak{g}}$ could be decomposed uniquely as
\begin{flalign*}
 w=uv
\end{flalign*}
with $u\in W^{I^{\Lambda}_{0}} $, $v\in W_{I^{\Lambda}_{0}}$ and
\begin{flalign*}
 &l(us_{\alpha})>l(u)\qquad \forall\alpha\in\Delta_{I}.
\end{flalign*}
This means the sets $uW_{I^{\Lambda}_{0}}$ for different $u\in W^{I^{\Lambda}_{0}}$ are disjoint. For any weight $\Lambda_i$ connected to $\Lambda$ by a Weyl transformation $u_{i}$, $\Lambda_{i}=u_{i}\Lambda$ we have
\begin{flalign*}
 &u_{i}\in W^{I^{\Lambda}_{0}}
\end{flalign*}
and $u_{i}$ is unique. By definition any element of $u_{i}W_{I^{\Lambda}_{0}}$ brings $\Lambda$ to $\Lambda_{i}$. These are exactly $\dim W_{I^{\Lambda_1}_{0}}$ elements. By applying the same arguments to all the weights in the Weyl orbit one gets the result.
\end{proof}
%

By \cref{prop:isitropygroup} and \cref{thm:weylorbit1} the dimension of the orbit of a dominant weight, under the action of the Weyl group, in an irreducible representation of a Lie algebra $\mathfrak{g}$ is the dimension of the Weyl group of $\mathfrak{g}$ divided by the dimension of the Weyl group associated with the subalgebra identified by its zero Dynkin labels, i.e its isotropy group in $W_{\mathfrak{g}}$.\\

By virtue of the \textit{longest weight rule} we could immediately apply \cref{thm:weylorbit1} to find the number of branes in maximal theories from three to nine dimensions. Looking at the Dynkin labels of the highest weights of the U-duality representations appearing in \cref{weightsd3,weightsd4,weightsd5,weightsd6,weightsd7,weightsd8,weightsd9} we realize that the number of half-supersymmetric  branes in $d$ dimensions, rank by rank, is given by the following relations
\begin{subequations}
\begin{flalign}
&N^d_{0\text{-}brane} =\dfrac{\dim W_{E_{11-d}}}{\dim W_{E_{10-d}}} \label{eq:0brane}\\
&N^d_{1\text{-}brane} =\dfrac{\dim W_{E_{11-d}}}{\dim W_{D_{10-d}}} \label{eq:1brane}\\
&N^d_{2\text{-}brane} =\dfrac{\dim W_{E_{11-d}}}{\dim W_{A_{10-d}}} \label{eq:2brane}\\
&N^d_{3\text{-}brane} =\dfrac{\dim W_{E_{11-d}}}{\dim W_{A_{1}\times A_{9-d}}} \label{eq:3brane}\\
&N^d_{4\text{-}brane} =\dfrac{\dim W_{E_{11-d}}}{\dim W_{A_{9-d}}} \label{eq:4brane}\\
&N^d_{5\text{-}brane} =\dfrac{\dim W_{E_{11-d}}}{\dim W_{A_{9-d}}}+\dfrac{\dim W_{E_{11-d}}}{\dim W_{A_{10-d}}} \label{eq:5brane}\\
&N^d_{6\text{-}brane} =\dfrac{\dim W_{E_{11-d}}}{\dim W_{A_{9-d}}} \label{eq:6brane}\\
&N^d_{7\text{-}brane} =\dfrac{\dim W_{E_{11-d}}}{\dim W_{A_{9-d}}} \label{eq:7brane}\\ 
&N^d_{8\text{-}brane} =\dfrac{\dim W_{E_{11-d}}}{\dim W_{A_{9-d}}}.\label{eq:8brane} 
\end{flalign}\label{eq:relation0}
\end{subequations}
It is remarkable that the relations just found describe the content of half-supersymmetric solutions, rank by rank in any dimensions, despite these are standard or non-standard branes. We have obtained five types of different relations; branes with rank lower than five are governed by five different rules, while, for solutions of rank four and higher, the same relation holds, with an additional contribution for 5-branes, induced by the fact that these couple both with vector and tensor multiplets, identical to \cref{eq:2brane}. We report the chain of embeddings of the Lie algebras the Weyl groups appearing in the denominator of \cref{eq:relation0} correspond to
\begin{flalign}
 &\left.\begin{tabular}{c}
   $E_{10-d}$\\
   $D_{10-d}$\\
   $A_{10-d}$
  \end{tabular}\right\}\supset A_{1}\times A_{9-d}\supset A_{9-d}.\label{eq:embeddings}
\end{flalign}
The isotropy groups appearing in \cref{eq:relation0} are Weyl groups of rank $10-d$ algebras for 0- to 3-branes and rank $9-d$ for 4-branes and higher rank solutions, with the exception described above for 5-branes. Moreover we note that the first relation, \cref{eq:0brane}, reproduces exactly the number of 0-branes also in the nine dimensional theory, where these belong to two different representations, identifying $E_{10-d}$ with the symmetry group of the two possible ten dimensional uplifts, type IIA and IIB theories.\\
It could also happen that different types of rules give the same number of branes. This is the case, for example, of the 0- and 1-branes in five dimensions, due to the fact that $E_{5}\sim D_{5}$. By the same way these relations make explicit that in six dimensions there is the same number of half-supersymmetric 0-branes and 2-branes. The same is true for 0-brane and 3-brane in seven dimensions, for 0-branes and 4-branes and 1-branes and 3-branes in eight dimensions.\\

In \cref{tab:branesmaximal1} we list the number of half-supersymmetric solutions in any maximal supergravity theory and the dimension of the Weyl group of the U-duality group.
\begin{table}[h!]
\renewcommand{\arraystretch}{2.5}
\begin{center}
\resizebox{\textwidth}{!}{
\begin{tabular}{|c|c|ccccccccc|}
\hline
\bf{d}&\bf{dim}$\mathbf{W_{E_{11-d}} }$&\textbf{1-f}&\textbf{2-f}&\textbf{3-f}&\textbf{4-f}&\textbf{5-f}&\textbf{6-f}&\textbf{7-f}&\textbf{8-f}&\textbf{9-f}\\ \hline \hline
9&2&1+2&2&1&1&2&\multicolumn{1}{c|}{1+2}&
\renewcommand{\arraystretch}{1}
\begin{tabular}{c}
 3\\
 \bl{2}
\end{tabular}
&
\renewcommand{\arraystretch}{1}
\begin{tabular}{c}
 3\\
 \bl{2}
\end{tabular}&
\renewcommand{\arraystretch}{1}
\begin{tabular}{c}
 4\\
 \bl{2}
\end{tabular}\\  \cline{8-8}

8&12&6&3&2&3&\multicolumn{1}{c|}{6}&\renewcommand{\arraystretch}{1}
\begin{tabular}{c}
 3+8\\
 \bl{2+6}
\end{tabular}&
\renewcommand{\arraystretch}{1}
\begin{tabular}{c}
 12\\
 \bl{6}
\end{tabular}&
\renewcommand{\arraystretch}{1}
\begin{tabular}{c}
 15\\
 \bl{6}
\end{tabular}&\\  \cline{7-7}

7&120&10&5&5&\multicolumn{1}{c|}{10}&\renewcommand{\arraystretch}{1}
\begin{tabular}{c}
 24\\
 \bl{20}
\end{tabular}&
\renewcommand{\arraystretch}{1}
\begin{tabular}{c}
 15+40\\
 \bl{5+20}
\end{tabular}&
\renewcommand{\arraystretch}{1}
\begin{tabular}{c}
 70\\
 \bl{20}
\end{tabular}&&\\  \cline{6-6}

6&1920&16&10&\multicolumn{1}{c|}{16}&
\renewcommand{\arraystretch}{1}
\begin{tabular}{c}
 45\\
 \bl{40}
\end{tabular}&
\renewcommand{\arraystretch}{1}
\begin{tabular}{c}
 144\\
 \bl{80}
\end{tabular}&
\renewcommand{\arraystretch}{1}
\begin{tabular}{c}
 126+320\\
 \bl{16+80}
\end{tabular}&&&\\  \cline{5-5}

5&51840&27&\multicolumn{1}{c|}{27}&
\renewcommand{\arraystretch}{1}
\begin{tabular}{c}
 78\\
 \bl{72}
\end{tabular}&
\renewcommand{\arraystretch}{1}
\begin{tabular}{c}
 351\\
 \bl{216}
\end{tabular}&
\renewcommand{\arraystretch}{1}
\begin{tabular}{c}
 1728\\
 \bl{432}
\end{tabular}&&&&\\  \cline{4-4}

4&2903040&\multicolumn{1}{c|}{56}&
\renewcommand{\arraystretch}{1}
\begin{tabular}{c}
 133\\
 \bl{126}
\end{tabular}&
\renewcommand{\arraystretch}{1}
\begin{tabular}{c}
 912\\
 \bl{576}
\end{tabular}&
\renewcommand{\arraystretch}{1}
\begin{tabular}{c}
 8645\\
 \bl{2016}
\end{tabular}&&&&&\\ \cline{3-3}

3&696729600&
\renewcommand{\arraystretch}{1}
\begin{tabular}{c}
 248\\
 \bl{240}
\end{tabular}&
\renewcommand{\arraystretch}{1}
\begin{tabular}{c}
 3875\\
 \bl{2160}
\end{tabular}&
\renewcommand{\arraystretch}{1}
\begin{tabular}{c}
 147250\\
 \bl{17280}
\end{tabular}&&&&&&\\

\hline

\end{tabular}
}
\end{center}
\caption{For any $d$ dimensional maximal theory we list the the dimension of the Weyl group of the U-duality group $E_{11-d}$, the dimension of the representations hosting differential forms and the number of components coupling to half-supersymmetric branes. p-{\textbf{f}} denotes the rank of the differential forms. When the number of half-supersymmetric solutions does not correspond to the dimension of the representation it appears in blue (non-standard branes \cite{Bergshoeff:2013sxa, Bergshoeff:2014lxa}), otherwise (standard branes) we omit it.}\label{tab:branesmaximal1}
\end{table}
\begin{table}[h!]
\renewcommand{\arraystretch}{1.8}
\begin{center}
\resizebox{\textwidth}{!}{
\begin{tabular}{|c|c|c|c|c|c|c|c|c|c|}
\hline
\bf{Algebra}&$\mathbf{A_{n}}$&$\mathbf{B_{n}}$&$\mathbf{C_{n}}$&$\mathbf{D_{n}}$&$\mathbf{G_{2}}$&$\mathbf{F_{4}}$&$\mathbf{E_{6}}$&$\mathbf{E_{7}}$&$\mathbf{E_{8}}$\\ \hline \hline
\bf{dim}$\mathbf{W}$&$(n+1)!$&$2^{n}n!$&$2^{n}n!$&$2^{n-1}n!$&12&1152&72$\times$ 6!&72$\times$ 8!&192$\times$ 10!\\
\bf{dim}$\ \mathbf{\mathfrak{g}}$&$n^2+2n$&$n(2n+1)$&$n(2n+1)$&$n(2n-1)$&14&52&78&133&248\\
$|\mathbf{\Phi^{+}}|$&$\dfrac{n(n+1)}{2}$&$n^2$&$n^2$&$n(n-1)$&6&24&36&63&120\\
$|\mathbf{\Delta}|$&$n$&$n$&$n$&$n$&$2$&$4$&$6$&$7$&$8$\\
{\bf h}&$n+1$&$2n$&$2n$&$2n-2$&6&12&12&18&30\\
\hline
\end{tabular}
}
\end{center}
\caption{Some relevant features of the Lie algebras are listed: dimension, rank, number of positive roots, Coxeter number $h$ and  dimension of the corresponding Weyl group.}\label{tab:weyldim1}
\end{table}
\FloatBarrier
For convenience we report the dimension, rank, number of positive roots and  dimension of the Weyl group for the Lie algebras in \cref{tab:weyldim1}. By looking at \cref{eq:relation0,tab:weyldim1} it is immediate to recognize the following formulae
\begin{subequations}
 \begin{flalign}
 &N^{d}_{2\text{-}brane}=\frac{2^{9-d}}{11-d}\ N^d_{1\text{-}brane}\\
 &N^{d}_{3\text{-}brane}=2^{8-d}\ N^d_{1\text{-}brane}\\
 &N^{d}_{4^+\text{-}brane}=2^{9-d}\ N^d_{1\text{-}brane},
 \end{flalign}\label{eq:ranksrule}
 \end{subequations}
relating the number of different rank solutions. Moreover \cref{eq:relation0} induce also the following relations 
\begin{subequations}
 \begin{flalign}
 &N^{d+1}_{1\text{-}brane}=2(10-d)\ \frac{N^d_{1\text{-}brane}}{N^d_{0\text{-}brane}}\\
 &N^{d+1}_{2\text{-}brane}=(11-d)\ \frac{N^d_{2\text{-}brane}}{N^d_{0\text{-}brane}}\\
 &N^{d+1}_{3\text{-}brane}=(10-d)\ \frac{N^d_{3\text{-}brane}}{N^d_{0\text{-}brane}}\\
 &N^{d+1}_{4^+\text{-}brane}=(10-d)\ \frac{N^d_{4^+\text{-}brane}}{N^d_{0\text{-}brane}}
\end{flalign}\label{eq:upliftsrule}
\end{subequations}
characterizing uplift/compactification behaviors of half-supersymmetric solutions, where $4^+$ means solutions of rank four and higher, with the exception for the five-brane case understood.


\FloatBarrier
\section{Branes ad Polytopes}\label{sec:branesandpolytopes}
\FloatBarrier
In the previous section we have defined a set of algebraic relations encoding the number of half-supersymmetric solutions in maximal supergravity theories. These rules were obtained by applying some general results on the Weyl group to the irreducible representations hosting U-duality charges. In this section we look at the general setting behind the relations of \cref{eq:relation0}. A natural identification of half-supersymmetric solutions as vertices of certain classes of uniform polytopes will emerge by this way.\\  

Let's consider a Coxeter system $(W,S)$ acting on a vector space $V$. We want to take a close look to the action of $W$ on $V$ and, to this aim, we introduce the half-spaces $A_{\alpha}$ defined by the hyperplanes $H_{\alpha}$
\begin{flalign}
 &A_{\alpha}=\{\lambda\in V\ |\ \scp{\lambda}{\alpha}>0\}
\end{flalign}
and the set 
\begin{flalign}
 &C=\bigcap_{\alpha\in\Delta}A_{\alpha}.
\end{flalign}
$C$ is called \textbf{chamber} of $W$. Its closure 
\begin{flalign}
 &D=\overline{C}=\{\lambda\in V\ |\ \scp{\alpha}{\lambda}\geqslant 0\ \forall \alpha\in \Delta\}
\end{flalign}
is the \textbf{fundamental domain} of $W$ acting on $V$. Since simple roots are linearly independent and the origin belongs to each $H_{\alpha}$ with $\alpha\in \Delta$ then by definition the fundamental domain, fixing points on the intersections of the $H_{\alpha}$, is a \textbf{simplex}. Any $\mu\in V$ is Weyl conjugate to some $\lambda\in D$. The union of the images of the chambers under the action of the Coxeter group constitutes the \textbf{Tits cone}, 
\begin{flalign}
 &X=\bigcup_{w\in W}wC.
\end{flalign}
The projective space built from the Tits cone defines the \textbf{Coxeter complex}
\begin{flalign}
 &\mathcal{C}=(X/ \{0\})/\mathds{R}^{+}.
\end{flalign}
$\mathcal{C}$ is an \textbf{abstract simplicial complex}. We recall that an abstract simplicial complex $\mathcal{C}$ is a family of non-empty sets such that, for every $Y\subseteq \mathcal{C}$, any non-empty subset $X\subseteq Y$ is also in $\mathcal{C}$. The vertices of the abstract simplicial complex are in correspondence with $wW_{I}$ when $I$ is maximal in $S$, namely when $I$ contains all but one simple reflections in $S$. Subsets of the abstract simplicial complex are called \textbf{faces}.\\

We could further refine the description of the fundamental domain by taking a parabolic subgroup $W_{I}$ of $W$ and defining
\begin{flalign}
 &C_{I}=\{\lambda\in D\ |\ \scp{\lambda}{\alpha}=0\ \forall \alpha\in\Delta_{I},\ \scp{\lambda}{\alpha}>0\ \forall \alpha\in\Delta/\Delta_{I} \}.
\end{flalign}
The $C_{I}$'s partition $D$. If the Coxeter system is the Weyl group of a Lie algebra $\mathfrak{g}$ and $V$ is an irreducible representation then we identify $D$ as the set of dominant weights in the representation, while the $C_{I}$, depending on the subset $I\subseteq S$, could be different subsets of $D$. The isotropy group of $C_{I}$ is the parabolic subgroup $W_{I}$ and furthermore $wC_{I}\cap w'C_{I}=\emptyset$ if there is no $u\in W_{I}$ such that $w=w'u$, i.e. if $w$ and $w'$ do not belong to the same left coset $W/W_{I}$. $wC_{I}$ are called \textbf{facets} of type $I$. Collecting all the $wC_{I}$ for $w\in W$ and $I\subset S$ we get the \textbf{Coxeter complex}\cite{humphreys1990}
\begin{flalign}
 &\mathcal{C}=\bigcup_{\substack{w\in W\\ I\subset S}}wC_{I};
\end{flalign}
In the next when the type is not specified we use the word \textbf{facets} to denote the maximal subsets of the abstract simplicial complex, i.e. faces not contained in any other face.\\

In order to visualize the description above let's consider the Coxeter group of the Lie algebra $A_{3}$. In \cref{fig:CoxetexcomplexA3} we sketch the six walls of $A_{3}$ intersecting the unit sphere. These triangulates the sphere
delimiting twenty-four chambers, whose closures correspond to the fundamental domain and its Weyl-equivalent counterparts; these are spherical simplices. The Tits cone is built as the union of all the chambers. The intersection with the unit 2-sphere constitutes the Coxeter complex that, in this case, turns out to be a simplicial complex. The points identified by the intersection of two walls and the sphere could be seen as vertices of a p-polytope (a polyhedron in this case), a convex hull of p points, with the symmetry of the Coxeter group; it is drawn in \cref{fig:CoxetexcomplexA3b}.
\begin{figure}[h!]
\centering
\subfloat[The reflection planes of the Coxeter system of $A_{3}$ and the unit sphere. The fundamental domain is the region delimited by three walls.]{\label{fig:CoxetexcomplexA3a}
\centering
\includegraphics[height=6cm]{./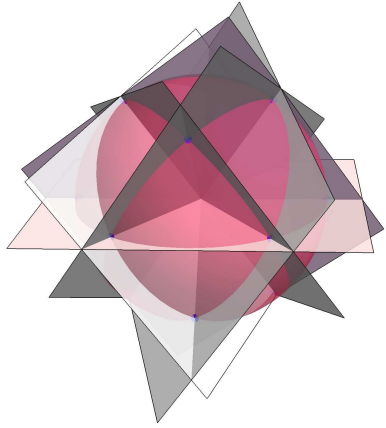}
 }\hspace{2cm}
 \subfloat[The convex hull of the points on the unit sphere identified by the Coxeter complex.]{\label{fig:CoxetexcomplexA3b}
 \centering
 \includegraphics[height=4.5cm]{./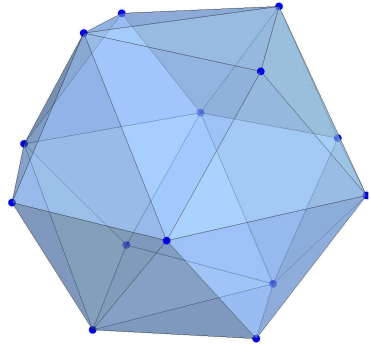}
 }
\caption{Coxeter complex of $A_{3}$ and the corresponding polytope.}\label{fig:CoxetexcomplexA3}
\end{figure} 
\FloatBarrier
We are interested in specifying the general construction presented above to half-supersymmetric branes in maximal supergravity theories and define their geometric realization within the corresponding Coxeter complex. We consider a U-duality brane representation and we take the highest weight $\Lambda$ and its isotropy group $W_{I_{0}^{\Lambda}}$ as described in \cref{sec:braneE11Weyl}. Our $C_{I_{0}^{\Lambda}}$ consists just in the highest weight itself. The intersection of the highest weight and the other longest weights of the representation, each corresponding to an half-supersymmetric solution, with the Coxeter complex identifies the vertices of a polytope, each lying on a type I facets. A vertex lying on the intersection of all but one reflection planes $H_{\alpha}$ (fixing point on the unit sphere) overlaps a point in the Coxeter complex; if we remove one hyperplane it belongs to an edge, a 1-face. Removing a further hyperplane the point will lie on a 2-face and so on. This means that if $I\subset S$ is maximal the vertices of the polytope identified by brane states coincide with the vertices of the Coxeter complex.\\
Two clarifying examples are given by the representations {\bf 4} and {\bf 20} of $A_{3}$, whose highest weights have Dynkin labels
\begin{flalign*}
 &\boxed{1\ 0\ 0}
\end{flalign*}
and
\begin{flalign*}
 &\boxed{1\ 1\ 0}
\end{flalign*}
respectively.
%

%
%
%
%
%
The polytopes corresponding to the outer Weyl orbit of these representation, i.e. the orbits of the longest weights under the action of the Weyl group, could be generated starting from the highest weight and reflecting it trough the walls $H_{\alpha}$. By this way one gets the longest weights in the representation, that are four in the {\bf 4} and twelve in the {\bf 20}. The resulting polyhedra are shown in \cref{fig:4ofA3polytope} and \cref{fig:20ofA3polytope} in purple, inside the Coxeter complex, in blue, and they correspond to a tetrahedron and a truncated tetrahedron.
\begin{figure}[h!]
\centering
\subfloat[Tetrahedron inside the Coxeter complex corresponding to the weights in the representation {\bf 4} of $A_{3}$  with highest weight $\boxed{1\ 0\ 0}$.]{\label{fig:4ofA3polytope}
\includegraphics[height=5.5cm]{./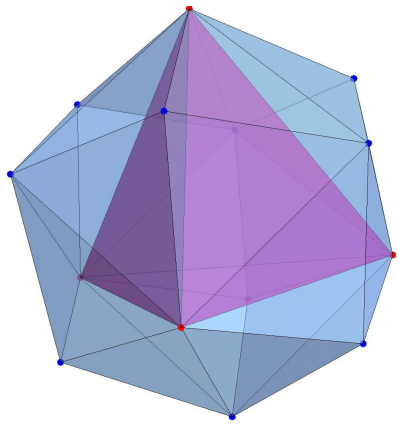}
 }\hspace{1cm}
 \subfloat[Octahedron inside the Coxeter complex corresponding to the weights in the representation {\bf 6} of $A_{3}$  with highest weight $\boxed{0\ 1\ 0}$.]{\label{fig:6ofA3polytope}
\includegraphics[height=5.5cm]{./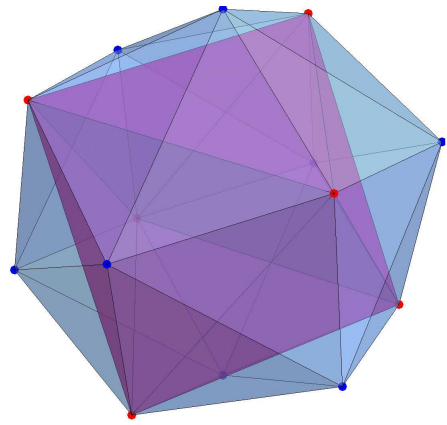}
}\\
\subfloat[Truncated tetrahedron  inside the Coxeter complex corresponding to the longest weights in the representation {\bf 20} of $A_{3}$ with highest weight $\boxed{1\ 1\ 0}$.]{\label{fig:20ofA3polytope}
 \includegraphics[height=5.5cm]{./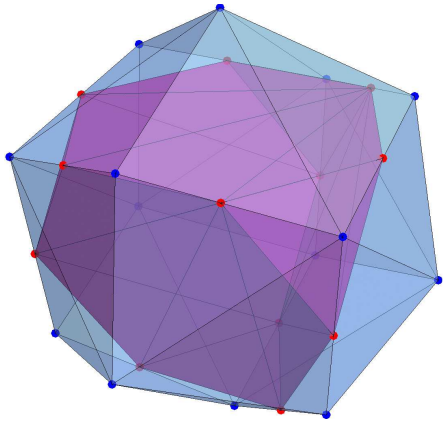}
}\hspace{1cm}
\subfloat[Cuboctahedron inside the Coxeter complex corresponding to the roots in the adjoint representation {\bf 15} of $A_{3}$  with highest root $\boxed{1\ 0\ 1}$.]{\label{fig:15ofA3polytope}
\includegraphics[height=5.5cm]{./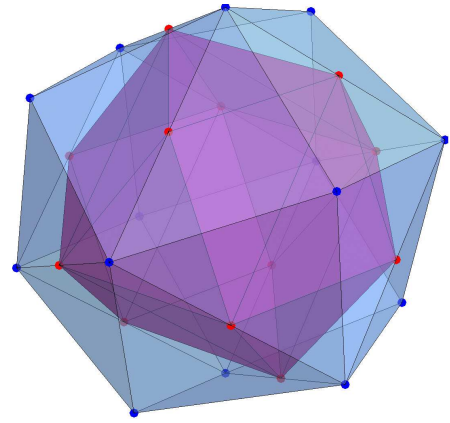}
 }\\
\subfloat[Truncated octahedron inside the Coxeter complex corresponding to the longest weights in the representation {\bf 64} of $A_{3}$  with highest weight $\boxed{1\ 1\ 1}$.]{\label{fig:64ofA3polytope}
\includegraphics[height=5.5cm]{./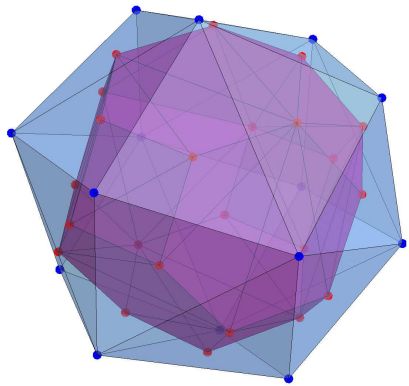}
}
\caption{Polytopes associated with the Weyl group of $A_3$ visualized inside the Coxeter complex.}
\end{figure} 
\FloatBarrier
We note also that, while the vertices of the tetrahedron, associated with the {\bf 4}, overlap four vertices of the Coxeter complex the vertices of the {\bf 20} lie on its edges. This is due to the fact that in the former case the isotropy group for the highest weight corresponds to a maximal I, it is the Coxeter group associated with the subalgebra $A_{2}$ made by the simple roots $\alpha_2$ and $\alpha_3$, while in the latter case this is not true since the isotropy group is the Weyl group of an $A_{1}$ subalgebra.\\

The link between Coxeter groups, polytopes and weights in our examples is quite general. Any polytope with pure reflectional symmetry could be represented by a Coxeter diagram with additional informations. To do this one should fix a generator point and reflect it trough the hyperplanes $H_{\alpha}$ corresponding to each node. The generator point could vary and, to identify it, the nodes in the Coxeter diagram are divided into active and inactive nodes \cite{Coxeter1985}. Active nodes are signaled by a ring in the Coxeter diagram. A node is inactive if the generator point is invariant under the reflection with respect to the corresponding hyperplane, meaning it lies on the hyperplane itself, it is active if it is not invariant, \cref{fig:activeincativenodes}.
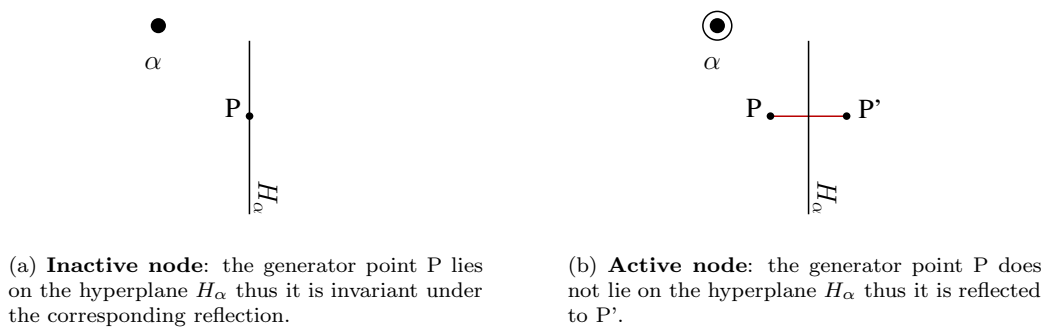
\begin{figure}[h!]
\centering
\subfloat[\textbf{Inactive node}: the generator point P lies on the hyperplane $H_{\alpha}$ thus it is invariant under the corresponding reflection.]{
\centering
\scalebox{0.5} 
{
\begin{pspicture}(-2,-1.6105468)(9.772969,3.6105468)
\psline[linewidth=0.04cm](3.981875,2.990547)(3.981875,-1.6094531)
\psdots[dotsize=0.4](1.581875,3.3905468)
\usefont{T1}{ppl}{m}{n}
\rput{-90.0}(5.6105466,3.1585155){\rput(4.3264065,-1.1994531){\huge $H_{\alpha}$}}
\usefont{T1}{ppl}{m}{n}
\rput(1.4464062,2.4005468){\huge $\alpha$}
\psdots[dotsize=0.2](3.981875,0.9905469)
\usefont{T1}{ptm}{m}{n}
\rput(3.5465624,1.2005469){\huge P}
\end{pspicture} 
}
}
\hspace{1cm}
\subfloat[\textbf{Active node}: the generator point P does not lie on the hyperplane $H_{\alpha}$ thus it is reflected to P'.]{
\centering 
\scalebox{0.5} 
{
\begin{pspicture}(-2,-1.700547)(9.7671876,3.700547)
\definecolor{color696}{rgb}{0.7215686274509804,0.00784313725490196,0.00784313725490196}
\psline[linewidth=0.04cm,linecolor=color696](2.981875,0.90054685)(4.981875,0.90054685)
\psdots[dotsize=0.2](4.981875,0.90054685)
\psdots[dotsize=0.2](2.981875,0.90054685)
\pscircle[linewidth=0.04,dimen=outer](1.581875,3.300547){0.4}
\psline[linewidth=0.04cm](3.981875,2.9005468)(3.981875,-1.6994531)
\psdots[dotsize=0.4](1.581875,3.300547)
\usefont{T1}{ppl}{m}{n}
\rput(1.4464062,2.3105469){\huge $\alpha$}
\usefont{T1}{ppl}{m}{n}
\rput{-90.0}(5.7005467,3.0685153){\rput(4.3264065,-1.2894531){\huge $H_{\alpha}$}}
\usefont{T1}{ptm}{m}{n}
\rput(2.5465624,1.1105468){\huge P}
\usefont{T1}{ptm}{m}{n}
\rput(5.6201565,1.1105468){\huge P'}
\end{pspicture} 
}

}
\caption{Active and inactive nodes in a Coxeter diagram.}\label{fig:activeincativenodes}
\end{figure}
Thus given a Coxeter diagram with active and inactive nodes, it identifies a generator point lying on the intersection of the hyperplanes associated with inactive nodes and not lying on any hyperplane corresponding to an active node. We take the generator point equidistant from the hyperplanes corresponding to active nodes. Then the polytope is built simply reflecting the generator point recursively with respect to all the hyperplanes (active and inactive). The resulting polytope has the symmetry of the Coxeter diagram. It is clear that there could be different polytopes invariant under the same Coxeter system, defined by different set of active nodes. An example of the correspondence between Coxeter diagram and polytopes just described is sketched in \cref{fig:A2polytopes}, where we consider the Coxeter system associated with $A_{2}$. In \cref{fig:A2a} both the nodes associated with the simple roots $\alpha_{1}$ and $\alpha_{2}$ are inactive thus the generator point lies on the intersection of $H_{\alpha_1}$ and $H_{\alpha_2}$; the polytope associated with the graph is trivially a point. In \cref{fig:A2b} one node is active, $\alpha_1$, and one node is inactive, $\alpha_2$, thus $P$ lies on the hyperplane $H_{\alpha_2}$. The corresponding polytope is a triangle. In \cref{fig:A2c} both nodes are active resulting in an hexagon and, as expected, this corresponds to the diagram of the root system of $A_{2}$.
\begin{figure}[h!]
\centering
\subfloat[]{\label{fig:A2a}
\centering
\scalebox{0.4} 
{
\begin{pspicture}(1,-5.26286)(12.369062,5.26286)
\definecolor{color7877b}{rgb}{0.00392156862745098,0.00392156862745098,0.00392156862745098}
\psline[linewidth=0.04cm](1.781875,3.2810493)(3.181875,3.2810493)
\pscircle[linewidth=0.02,dimen=outer,fillstyle=solid,fillcolor=color7877b](1.781875,3.2810493){0.2}
\pscircle[linewidth=0.02,dimen=outer,fillstyle=solid,fillcolor=color7877b](2.981875,3.2810493){0.2}
\usefont{T1}{ppl}{m}{n}
\rput(1.8464062,2.2910492){\huge $\alpha_1$}
\usefont{T1}{ppl}{m}{n}
\rput(3.0464063,2.2910492){\huge $\alpha_2$}
\psline[linewidth=0.04cm](0.781875,-1.3189507)(10.381875,-1.3189507)
\usefont{T1}{ppl}{m}{n}
\rput{60.0}(5.9997787,-5.062948){\rput(7.3264065,2.6910493){\huge $H_{\alpha_2}$}}
\psline[linewidth=0.04cm](3.331875,-5.24286)(7.631875,2.2049587)
\usefont{T1}{ppl}{m}{n}
\rput(10.526406,-0.9089506){\huge $H_{\alpha_1}$}
\psdots[dotsize=0.3](5.581875,-1.3189507)
\usefont{T1}{ppl}{m}{n}
\rput(5.7465625,-1.9089507){\huge P}
\end{pspicture} 
}
}
\subfloat[]{\label{fig:A2b}
\centering
\scalebox{0.4} 
{
\begin{pspicture}(1,-5.26286)(12.369062,5.26286)
\definecolor{color7877b}{rgb}{0.00392156862745098,0.00392156862745098,0.00392156862745098}
\definecolor{color47b}{rgb}{0.9647058823529412,0.00392156862745098,0.00392156862745098}
\pspolygon[linewidth=0.02,fillstyle=solid,fillcolor=red217!30](2.981875,-1.3189507)(6.981875,1.0810493)(6.981875,-3.7189507)
\pscircle[linewidth=0.04,dimen=outer](1.781875,3.2810493){0.4}
\psline[linewidth=0.04cm](1.781875,3.2810493)(3.181875,3.2810493)
\pscircle[linewidth=0.02,dimen=outer,fillstyle=solid,fillcolor=color7877b](1.781875,3.2810493){0.2}
\pscircle[linewidth=0.02,dimen=outer,fillstyle=solid,fillcolor=color7877b](2.981875,3.2810493){0.2}
\usefont{T1}{ppl}{m}{n}
\rput(1.8464062,2.2910492){\huge $\alpha_1$}
\usefont{T1}{ppl}{m}{n}
\rput(3.0464063,2.2910492){\huge $\alpha_2$}
\psline[linewidth=0.04cm](0.781875,-1.3189507)(10.381875,-1.3189507)
\usefont{T1}{ppl}{m}{n}
\rput{60.0}(5.9997787,-5.062948){\rput(7.3264065,2.6910493){\huge $H_{\alpha_2}$}}
\psline[linewidth=0.04cm](3.331875,-5.24286)(7.631875,2.2049587)
\usefont{T1}{ppl}{m}{n}
\rput(10.526406,-0.9089506){\huge $H_{\alpha_1}$}
\psdots[dotsize=0.3](6.981875,-3.7189507)
\usefont{T1}{ppl}{m}{n}
\rput(7.5465624,0.69104934){\huge P}
\psdots[dotsize=0.3](2.981875,-1.3189507)
\psdots[dotsize=0.3](6.981875,1.0810493)
\end{pspicture} 
}
}
\subfloat[]{\label{fig:A2c}
\centering
\scalebox{0.4} 
{
\begin{pspicture}(1,-5.175905)(12.369062,5.375905)
\definecolor{color7877b}{rgb}{0.00392156862745098,0.00392156862745098,0.00392156862745098}
\definecolor{color7867b}{rgb}{0.4666666666666667,0.7803921568627451,0.9098039215686274}
\pspolygon[linewidth=0.02,fillstyle=solid,fillcolor=boh!30](5.581875,2.7940948)(9.181875,0.79409474)(9.181875,-3.2059052)(5.581875,-5.2059054)(1.981875,-3.2059052)(1.981875,0.79409474)
\pscircle[linewidth=0.04,dimen=outer](1.781875,3.3940947){0.4}
\pscircle[linewidth=0.04,dimen=outer](2.981875,3.3940947){0.4}
\psdots[dotsize=0.3](1.981875,0.79409474)
\usefont{T1}{ppl}{m}{n}
\rput(9.746562,0.80409473){\huge P}
\psline[linewidth=0.04cm](1.781875,3.3940947)(3.181875,3.3940947)
\pscircle[linewidth=0.02,dimen=outer,fillstyle=solid,fillcolor=color7877b](1.781875,3.3940947){0.2}
\pscircle[linewidth=0.02,dimen=outer,fillstyle=solid,fillcolor=color7877b](2.981875,3.3940947){0.2}
\usefont{T1}{ppl}{m}{n}
\rput(1.8464062,2.4040947){\huge $\alpha_1$}
\usefont{T1}{ppl}{m}{n}
\rput(3.0464063,2.4040947){\huge $\alpha_2$}
\psline[linewidth=0.04cm](0.781875,-1.2059053)(10.381875,-1.2059053)
\usefont{T1}{ppl}{m}{n}
\rput{60.0}(6.097679,-5.0064254){\rput(7.3264065,2.8040948){\huge $H_{\alpha_2}$}}
\psline[linewidth=0.04cm](3.331875,-5.1298146)(7.631875,2.318004)
\usefont{T1}{ppl}{m}{n}
\rput(10.526406,-0.7959053){\huge $H_{\alpha_1}$}
\psdots[dotsize=0.3](9.181875,-3.2059052)
\psdots[dotsize=0.3](9.181875,0.79409474)
\psdots[dotsize=0.3](1.981875,-3.2059052)
\psdots[dotsize=0.3](5.581875,2.7940948)
\psdots[dotsize=0.3](5.581875,-5.2059054)
\end{pspicture} 
}
}\caption{Polytopes corresponding to  $A_{2}$ Coxeter system and the their Coxeter graph.}\label{fig:A2polytopes}
\end{figure}
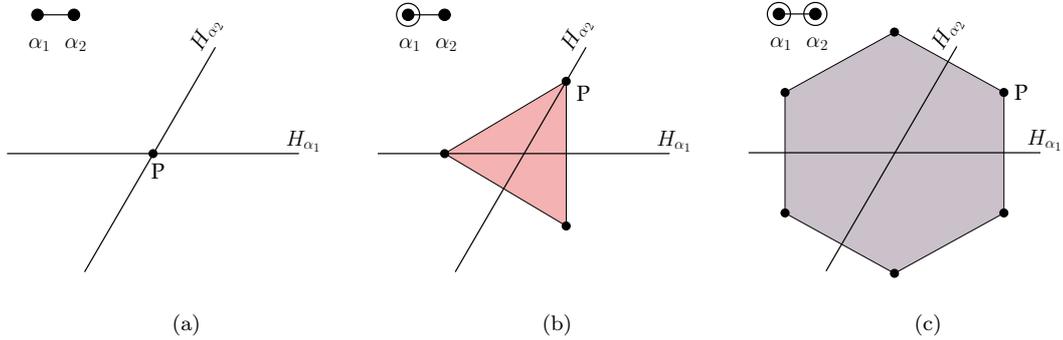
\FloatBarrier
We could associate a polytope to the longest weights of a representation by taking the highest weight as generator point and the active nodes as the nodes corresponding to its non-zero Dynkin labels. In the coset describing its orbit under the Weyl group $W/(W_{I_{0}^{\Lambda}})^N$, $W$ is the invariance group of the polytope while $W_{I_{0}^{\Lambda}}$ is the invariance group of the generator point. Thus the polyhedron associated with the longest weights in the {\bf 4} and {\bf 20} of $A_{3}$ could be conveniently represented by the following diagrams
\begin{figure}[h!]
\centering
\subfloat[Tetrahedron]{
\scalebox{0.5} 
{
\begin{pspicture}(-2,-0.4)(5.02,0.4)
\definecolor{color173b}{rgb}{0.00392156862745098,0.00392156862745098,0.00392156862745098}
\pscircle[linewidth=0.04,dimen=outer](0.4,0.0){0.4}
\psline[linewidth=0.04cm](0.4,0.0)(3.0,0.0)
\pscircle[linewidth=0.02,dimen=outer,fillstyle=solid,fillcolor=color173b](1.6,0.0){0.2}
\pscircle[linewidth=0.02,dimen=outer,fillstyle=solid,fillcolor=color173b](0.4,0.0){0.2}
\pscircle[linewidth=0.02,dimen=outer,fillstyle=solid,fillcolor=color173b](2.8,0.0){0.2}
\end{pspicture} 
}

}\hspace{1cm} 
\subfloat[Truncated tetrahedron]{
\scalebox{0.5} 
{
\begin{pspicture}(-2,-0.4)(5.02,0.4)
\definecolor{color173b}{rgb}{0.00392156862745098,0.00392156862745098,0.00392156862745098}
\pscircle[linewidth=0.04,dimen=outer](0.4,0.0){0.4}
\psline[linewidth=0.04cm](0.4,0.0)(3.0,0.0)
\pscircle[linewidth=0.02,dimen=outer,fillstyle=solid,fillcolor=color173b](1.6,0.0){0.2}
\pscircle[linewidth=0.02,dimen=outer,fillstyle=solid,fillcolor=color173b](0.4,0.0){0.2}
\pscircle[linewidth=0.02,dimen=outer,fillstyle=solid,fillcolor=color173b](2.8,0.0){0.2}
\pscircle[linewidth=0.04,dimen=outer](1.6,0.0){0.4}
\end{pspicture} 
}

} 
\end{figure}\\ \FloatBarrier
We complete the analysis of the polyhedra with symmetry of the Coxeter group of $A_{3}$ taking also those corresponding to the outer Weyl orbits of the representations {\bf 6}, {\bf 15} (the adjoint) and {\bf 64}. They are drawn in \cref{fig:6ofA3polytope,fig:15ofA3polytope,fig:64ofA3polytope} and the corresponding Coxeter diagrams are listed in \cref{tab:A3polytope}. It is interesting to note that the vertices of the {\bf 64}, having minimal isotropy group, lie on the face of the Coxeter complex. The polytopes listed in \cref{tab:A3polytope} exhaust all the possibilities for $A_{3}$.
\begin{table}[h!]
\renewcommand{\arraystretch}{2}
\begin{center}
\begin{tabular}{|c|c|c|c|c|c|c|}
\hline
{\bf rep}&\textbf{Dynkin labels}&\bf{Coxeter diagram}&\textbf{Polytope}& \textbf{V}&\bf{E}&\bf{F} \\
\hline\hline
{\bf 4}&$\boxed{1\ 0\ 0}$&
\scalebox{0.5} 
{
\begin{pspicture}(-0,-0.4)(3.02,0.4)
\definecolor{color173b}{rgb}{0.00392156862745098,0.00392156862745098,0.00392156862745098}
\pscircle[linewidth=0.04,dimen=outer](0.4,0.0){0.4}
\psline[linewidth=0.04cm](0.4,0.0)(3.0,0.0)
\pscircle[linewidth=0.02,dimen=outer,fillstyle=solid,fillcolor=color173b](1.6,0.0){0.2}
\pscircle[linewidth=0.02,dimen=outer,fillstyle=solid,fillcolor=color173b](0.4,0.0){0.2}
\pscircle[linewidth=0.02,dimen=outer,fillstyle=solid,fillcolor=color173b](2.8,0.0){0.2}
\end{pspicture} 
}
& tetrahedron&4&6&4\\
{\bf 6}&$\boxed{0\ 1\ 0}$&
\scalebox{0.5} 
{
\begin{pspicture}(-0,-0.4)(3.02,0.4)
\definecolor{color173b}{rgb}{0.00392156862745098,0.00392156862745098,0.00392156862745098}
\pscircle[linewidth=0.04,dimen=outer](1.6,0.0){0.4}
\psline[linewidth=0.04cm](0.4,0.0)(3.0,0.0)
\pscircle[linewidth=0.02,dimen=outer,fillstyle=solid,fillcolor=color173b](1.6,0.0){0.2}
\pscircle[linewidth=0.02,dimen=outer,fillstyle=solid,fillcolor=color173b](0.4,0.0){0.2}
\pscircle[linewidth=0.02,dimen=outer,fillstyle=solid,fillcolor=color173b](2.8,0.0){0.2}
\end{pspicture} 
}
&octahedron &6&12&8\\
{\bf 20}&$\boxed{1\ 1\ 0}$&
\scalebox{0.5} 
{
\begin{pspicture}(-0,-0.4)(3.02,0.4)
\definecolor{color173b}{rgb}{0.00392156862745098,0.00392156862745098,0.00392156862745098}
\pscircle[linewidth=0.04,dimen=outer](0.4,0.0){0.4}
\psline[linewidth=0.04cm](0.4,0.0)(3.0,0.0)
\pscircle[linewidth=0.02,dimen=outer,fillstyle=solid,fillcolor=color173b](1.6,0.0){0.2}
\pscircle[linewidth=0.02,dimen=outer,fillstyle=solid,fillcolor=color173b](0.4,0.0){0.2}
\pscircle[linewidth=0.02,dimen=outer,fillstyle=solid,fillcolor=color173b](2.8,0.0){0.2}
\pscircle[linewidth=0.04,dimen=outer](1.6,0.0){0.4}
\end{pspicture} 
}
&truncated tetrahedron &12&18&8\\
{\bf 15}&$\boxed{1\ 0\ 1}$&
\scalebox{0.5} 
{
\begin{pspicture}(-0,-0.4)(3.02,0.4)
\definecolor{color173b}{rgb}{0.00392156862745098,0.00392156862745098,0.00392156862745098}
\pscircle[linewidth=0.04,dimen=outer](0.4,0.0){0.4}
\psline[linewidth=0.04cm](0.4,0.0)(3.0,0.0)
\pscircle[linewidth=0.02,dimen=outer,fillstyle=solid,fillcolor=color173b](1.6,0.0){0.2}
\pscircle[linewidth=0.02,dimen=outer,fillstyle=solid,fillcolor=color173b](0.4,0.0){0.2}
\pscircle[linewidth=0.02,dimen=outer,fillstyle=solid,fillcolor=color173b](2.8,0.0){0.2}
\pscircle[linewidth=0.04,dimen=outer](2.8,0.0){0.4}
\end{pspicture} 
}
&cuboctahedron &12&24&14\\
{\bf 64}&$\boxed{1\ 1\ 1}$&
\scalebox{0.5} 
{
\begin{pspicture}(-0,-0.4)(3.02,0.4)
\definecolor{color173b}{rgb}{0.00392156862745098,0.00392156862745098,0.00392156862745098}
\pscircle[linewidth=0.04,dimen=outer](0.4,0.0){0.4}
\psline[linewidth=0.04cm](0.4,0.0)(3.0,0.0)
\pscircle[linewidth=0.02,dimen=outer,fillstyle=solid,fillcolor=color173b](1.6,0.0){0.2}
\pscircle[linewidth=0.02,dimen=outer,fillstyle=solid,fillcolor=color173b](0.4,0.0){0.2}
\pscircle[linewidth=0.02,dimen=outer,fillstyle=solid,fillcolor=color173b](2.8,0.0){0.2}
\pscircle[linewidth=0.04,dimen=outer](1.6,0.0){0.4}
\pscircle[linewidth=0.04,dimen=outer](2.8,0.0){0.4}
\end{pspicture} 
}
&truncated octahedron &24&36&14\\ \hline
\end{tabular}
\end{center}
\caption{Polyhedra with symmetry of the Coxeter group of $A_{3}$. In the last three columns we list the number of vertices, edges and faces. In the first two columns we report the representations associated with the polytope. It is clear that there could be more representations whose outer Weyl orbits correspond to the same polytope; for example the outer Weyl orbit of the representation {\bf 10} with highest weight $\boxed{2\ 0\ 0}$ corresponds to the same polytope of the {\bf 4}. It is the isotropy group that matters, i.e. the number and position of zeros in the Dynkin labels of the highest weight.}\label{tab:A3polytope}
\end{table}\\

At this point we have all the ingredients to start a systematic analysis of the polytopes associated with half-supersymmetric branes in maximal supergravity theories, guided by  \cref{weightsd3,weightsd4,weightsd5,weightsd6,weightsd7,weightsd8} and the relations of \cref{eq:relation0}. We limit our attention to maximal theories from three to eight dimensions, the nine dimensional case being trivial.
\paragraph{0-branes}
The first case we analyze is the case of 0-branes. For 0-branes in $d$ dimensions the isotropy group of the highest weight is the Weyl group of $E_{10-d}$, \cref{eq:0brane}. Since this corresponds to a maximal parabolic subgroup of $W_{E_{11-d}}$ we immediately recognize that the vertices of the corresponding polytope overlap some vertices of the Coxeter complex. We note also that, apart from the eight dimensional case, that we discuss separately, each highest weight has Dynkin labels of the same form; only the first label, up to symmetries of the Dynkin diagram, is different from zero. In the eight dimensional case we have two Dynkin labels different from zero, but the U-duality algebra is not simple thus we have a non zero Dynkin label for each simple factor and, as it will be clear in a few, it still shares the general features of the other 0-brane highest weights. We list the polytopes identified by this way in \cref{tab:visual0}, where we show the dimension of the maximal supergravity theory, the U-duality group and the brane representation, the name of the polytope, the corresponding Coxeter  diagram, the number of vertices, the number and type of facets and the Petrie polygon. A \textbf{Petrie polygon} of an n-dimensional polytope is a skew polygon such that any n-1 consecutive sides, but not n, belong to a Petrie polygon of a facet \cite{coxeter1973regular}. These polygons are useful to understand the properties of higher dimensional polytopes \cite{DuVal201310}. In \cref{tab:visual0} and tables next to come the Petrie polygons are obtained as projection on the Coxeter plane associated with the Coxeter group of the U-duality group; taking a Coxeter element $w$ the Coxeter plane is the plane uniquely defined as the plane on which $w$ acts as a rotation of  $2\pi/h$, where $h$ is the Coxeter number. In the Petrie polygons yellow points have degeneracy three, orange points two and red points no degeneracy; we refer to \cref{sec:appB} for further details on Petrie polygons. The polytopes corresponding to 0-brane weights belong to the family $k_{21}$ of uniform polytopes \cite{Coxeter1988, Gosset1900}, where $k$ is related to the dimension by $k=7-d$. The name of the family is part of a general notation for $E_{n}$ group as
\begin{flalign}
 &E_{k+4}=[3^{k,1,2}].
\end{flalign}
The notation describes the Coxeter diagram, with 3 legs around a node built of k, 1 and 2 nodes and could be easily generalized to other cases. Taking [$3^{p,q,r}$] it is natural to associate to the polytopes defined by a single ring on the first node of the p, q and r legs the symbols
\begin{flalign}
 &p_{qr}\qquad q_{pr}\qquad r_{pq}
\end{flalign}
respectively. Specializing to $E_{k+4}$ this explains the name of the polytopes describing 0-branes and, as we will see, also 1- and 2-branes. The polytopes in the $k_{21}$ family just discussed and the ones we will deal with are all uniform polytopes. A \textbf{uniform polytope} is an isogonal polytope with uniform facets \cite{Coxeter1940}. A polytope is said to be \textbf{isogonal} or \textbf{vertex-transitive} if for any two vertices there is a transformation mapping the first isometrically onto the second.\\

Any $k_{21}$ polytope has vertex figure a $(k-1)_{21}$ polytope. The vertex figure of a polyhedron at vertex $v$ is the polygon with vertices the middle points along each edge ending on $v$ \cite{coxeter1973regular}. This immediately generalizes to higher dimensional polytopes. In the case of uniform polytopes it is clear that any vertex has the same vertex figure.
\begin{table}[h!]
\renewcommand{\arraystretch}{1.5}
\begin{center}
\resizebox{\textwidth}{!}{

\\
\hline

\end{tabular}
}
\end{center}
\caption{0-branes in maximal supergravity theories and corresponding polytopes. In the first two columns we list the dimension of the maximal supergravity, its U-duality group and the representation hosting the 1-forms. In the third column we show the Coxeter Dynkin diagram while $k_{21}$ and V are the name identifying the polytope and the number of its vertices respectively. In the last three columns we show the associated Petrie polygon and the number and types of facets. }\label{tab:visual0}
\end{table}

\paragraph{1-branes}
For the 1-branes the isotropy group is the Weyl group of $D_{10-d}$ \cref{eq:1brane} and the half-supersymmetric solutions could be seen as vertices of the family of uniform polytope $2_{k1}$ where again $k=7-d$. We list all of them in \cref{tab:visual1}. For the moment let's note that $2_{k1}$ polytopes have two types of facets: $2_{(k-1) 1}$ polytopes and $(k+3)$-simplexes; in order to have a comprehensive view, we will analyze this feature after we have discussed the 2-brane case also.
\begin{table}[h!]
\renewcommand{\arraystretch}{1.5}
\begin{center}
\resizebox{\textwidth}{!}{


\\
\hline

\end{tabular}
}
\end{center}
\caption{The uniform $2_{k1}$ polytopes correspond to 1-branes in maximal supergravities. In the first two columns we list the 
dimension of the maximal supergravity, its U-duality group and the representation hosting 1-branes. In the third column we show the Coxeter Dynkin diagram while $2_{k1}$ and V are the name identifying the polytope and the number of its vertices respectively. In the last three columns we show the associated Petrie polygon and the number and types of facets.
}\label{tab:visual1}
\end{table}

\paragraph{2-branes}
In the case of 2-branes the isotropy group is the Weyl group of $A_{10-d}$. The polytopes corresponding to 2-branes are listed \cref{tab:visual2}; they belong to the family of uniform polytopes $1_{k2}$ with $k$ related to the dimension by $k=7-d$. The facets of a $1_{k2}$ polytopes are $1_{(k-1) 2}$ polytopes.

\begin{table}[h!]
\renewcommand{\arraystretch}{1.5}
\begin{center}
\resizebox{\textwidth}{!}{

\\ \hline

\end{tabular}
}
\end{center}
\caption{The uniform $1_{k2}$ polytopes correspond to 2-branes in maximal supergravities. In the first two columns we list the 
dimension of the maximal supergravity, its U-duality group and the representation hosting 3-forms. In the third column we show the Coxeter Dynkin diagram while $1_{k2}$ and V are the name identifying the polytope and the number of its vertices respectively. In the last three columns we show the associated Petrie polygon and the number and types of facets.}\label{tab:visual2}

\end{table}
\FloatBarrier
\paragraph{Triality}

At this point we could make a step back to take a general picture of what we have found for 0-, 1- and 2-branes. We have discovered that these are described by the families of polytopes $k_{21}$, $2_{k1}$ and $1_{k2}$. With the notation introduced in the previous paragraph for a Coxeter system [$3^{p,q,r}$] a polytope $p_{qr}$ has facets of type $p_{q-1r}$ and $p_{qr-1}$ and their centers are the vertices of $q_{pr}$ and $r_{pq}$ polytopes respectively \cite{Coxeter1988}. This defines a triality relation between 0-, 1- and 2-branes. In particular the $k_{21}$ polytopes describing 0-branes have two types of facets, (n-1)-simplexes and (n-1)-orthoplexes. We report the definition of orthoplex and, for completeness, we recall also the definitions of simplex \cite{coxeter1973regular}. An \textbf{n-simplex} is the convex hull of n+1 points $\{v_{0},...,v_{n}\}$ such that $v_1-v_0,...,v_n-v_0$ are linearly independent. An \textbf{n-orthoplex} or \textbf{cross n-polytope}  is the n-dimensional polytope with 2n vertices with coordinate $(\pm 1,0,...,0)$ and its permutations. An orthoplex could be also defined as the closed unit ball in $\mathds{R}^{n}$ in taxicab geometry, i.e. as 
\begin{flalign}
 &B=\{x\in\mathds{R}^n\ |\ ||x||_{l_1}\leqslant 1\},
\end{flalign}
where the $l_{1}$-norm is defined by
\begin{flalign}
 &||\mathbf{v}-\mathbf{u}||_{l_1}=\sum_{i}|v_{i}-u_{i}|
\end{flalign}
for two vectors $\mathbf{u}, \mathbf{v}$ with coordinates $\mathbf{u}=(u_1,...,u_n)$ and $\mathbf{v}=(v_1,...,v_n)$. While a simplex could be seen as the higher dimensional generalization of a triangle in a two dimensional space, an orthoplex is the higher dimensional generalization of a square in two dimensions and an octahedron in three dimensions. A simplex has the symmetry of the $A_{n}$ Coxeter group while an orthoplex is invariant under the $B_{n}$ or $D_{n}$ Coxeter group.  
In the 0-brane polytopes $k_{21}$ the number of 1-branes corresponds exactly to the number of orthoplex facets while the number of 2-branes corresponds to the number of simplicial facets. The corresponding polytopes could be built as convex hull of the central point of these two types of facets.\\
Analogous considerations apply to 1-branes and 2-branes as can be seen in \cref{tab:visual1,tab:visual2}.

\paragraph{3-branes}
For the 3-branes we encounter again the family of $2_{k1}$ polytopes but in their rectified form. \textbf{Rectification} is an operation on polytopes consisting in cutting the polytope at each vertex with a plane passing trough the midpoints of edges ending on it. This exposes the vertex figure of the initial polytope and produces a polytope with a number of vertices equal to the number of edges of the starting figure; it is denoted with a prefix $r$ before the polytope name. An example of rectification applied to a cube can be seen in \cref{fig:rectifiecube}. The polytopes corresponding to 3-branes in $d$ dimensions are the rectified $2_{(7-d)1}$ polytopes appearing in the 1-brane cases and thus they could be seen also as edges of the  $2_{(7-d)1}$ polytopes with vertices corresponding to 1-branes. In \cref{tab:visual3} we show all the $r 2_{k1}$ polytopes with their main features.
\begin{figure}[h!]
\centering
\subfloat[Cube.]{\includegraphics[height=3cm]{./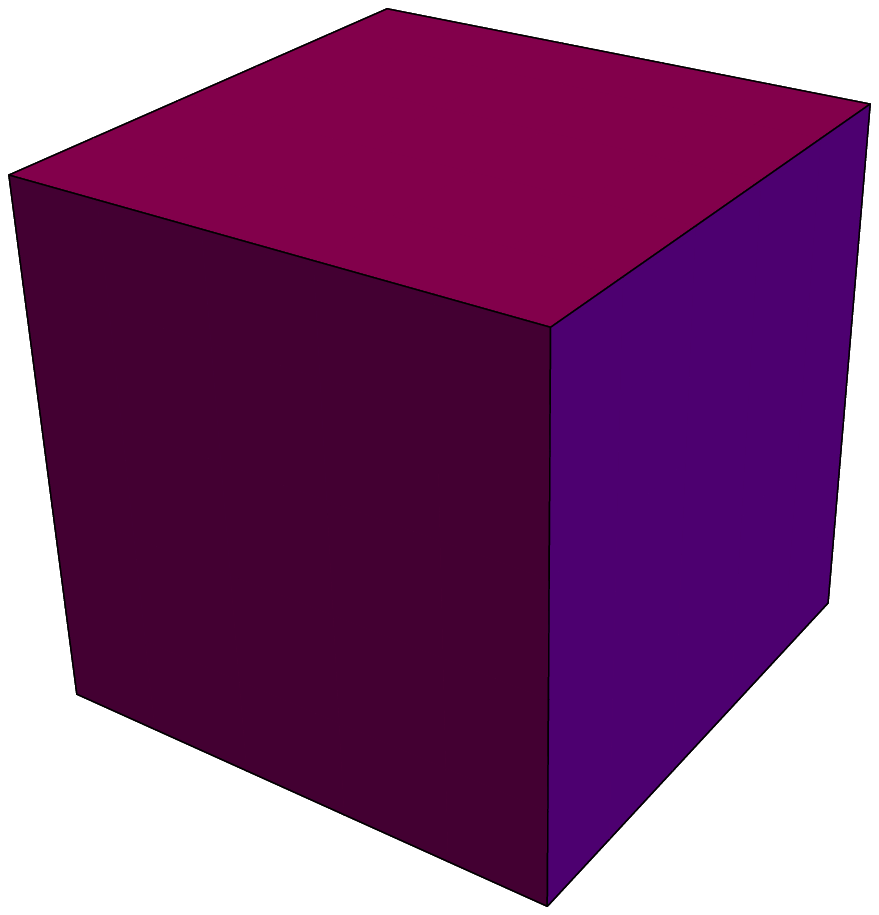}
}\hspace{1cm}
\subfloat[Rectified cube.]{\includegraphics[height=3cm]{./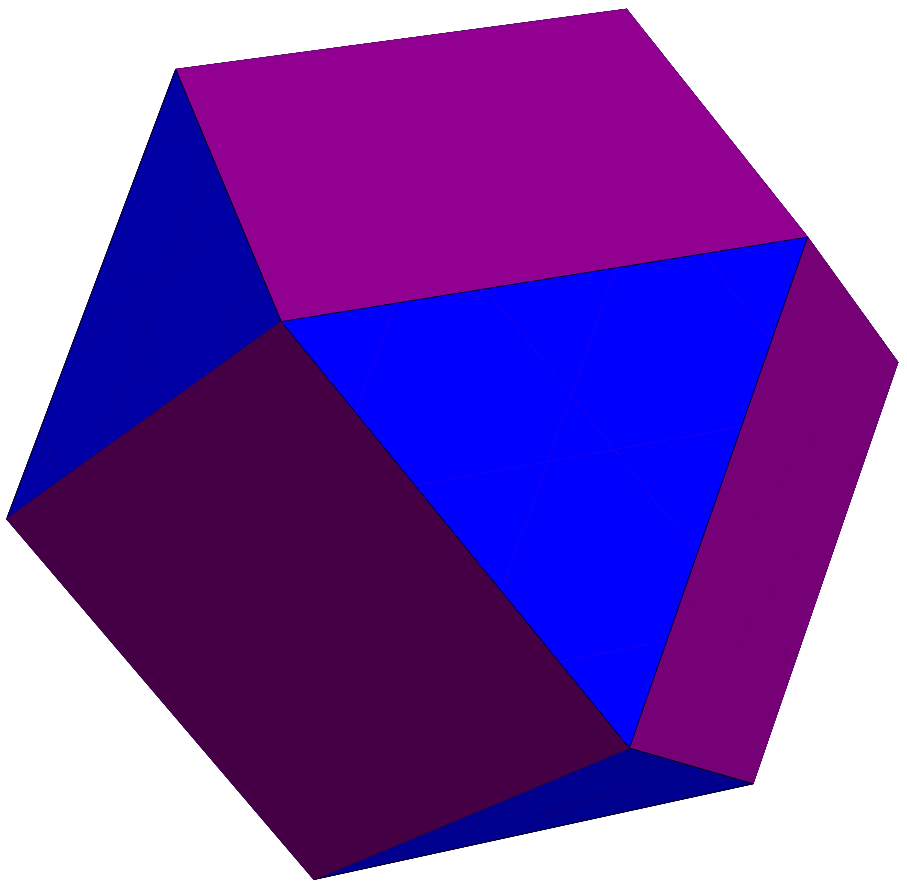}
}\caption{Cube and rectified cube.}\label{fig:rectifiecube}
\end{figure}

\begin{table}[h!]
\renewcommand{\arraystretch}{1.5}
\begin{center}
\resizebox{\textwidth}{!}{
\begin{tabular}{|c|c|c|c|c|c|}
\hline 
\multicolumn{6}{|c|}{\bf{3-Branes: rectified $\mathbf{2_{k1}}$ Polytopes}}\\
\hline
\bf{d}&\bf{G/rep}&\bf{Coxeter-Dynkin diagram}&$\mathbf{r2_{k1}}$&\bf{V}&\bf{Petrie Polygon}\\ 
\hline 
4&\begin{tabular}{c}
$E_{7}$\\
{\bf 8645}
\end{tabular}&
\scalebox{0.5} 
{
\begin{pspicture}(0,-0.9)(6.42,0.9)
\definecolor{color69b}{rgb}{0.00392156862745098,0.00392156862745098,0.00392156862745098}
\psline[linewidth=0.04cm](3.8,0.7)(3.8,-0.5)
\psline[linewidth=0.04cm](0.4,-0.5)(6.4,-0.5)
\pscircle[linewidth=0.02,dimen=outer,fillstyle=solid,fillcolor=color69b](0.2,-0.5){0.2}
\pscircle[linewidth=0.02,dimen=outer,fillstyle=solid,fillcolor=color69b](5.0,-0.5){0.2}
\pscircle[linewidth=0.02,dimen=outer,fillstyle=solid,fillcolor=color69b](3.8,-0.5){0.2}
\pscircle[linewidth=0.02,dimen=outer,fillstyle=solid,fillcolor=color69b](2.6,-0.5){0.2}
\pscircle[linewidth=0.02,dimen=outer,fillstyle=solid,fillcolor=color69b](1.4,-0.5){0.2}
\pscircle[linewidth=0.02,dimen=outer,fillstyle=solid,fillcolor=color69b](6.2,-0.5){0.2}
\pscircle[linewidth=0.02,dimen=outer,fillstyle=solid,fillcolor=color69b](3.8,0.7){0.2}
\pscircle[linewidth=0.04,dimen=outer](5.0,-0.5){0.4}
\end{pspicture} 
}

&$r2_{31}$&2016&
\begin{tabular}{c}
\\ [-0.4cm]

\includegraphics[scale=0.7]{./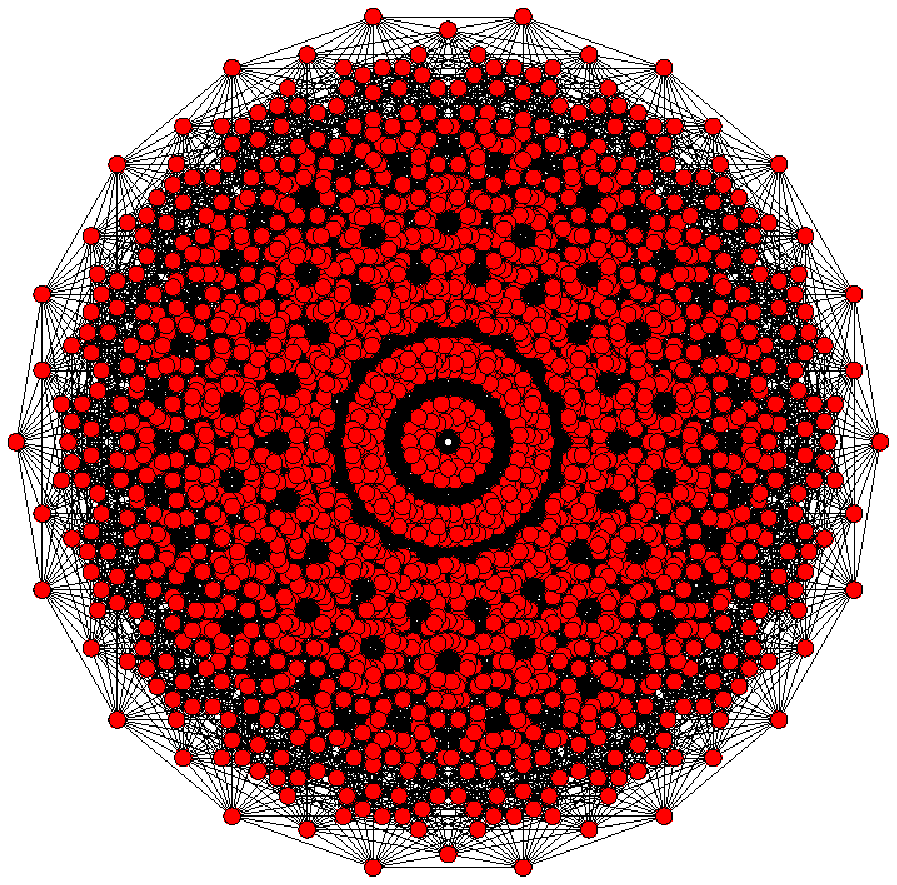}

\end{tabular}\\ \hline
5&\begin{tabular}{c}
$E_{6}$\\
{\bf 351}
\end{tabular}&
\scalebox{0.5} 
{
\begin{pspicture}(0,-0.9)(5.22,0.9)
\definecolor{color97b}{rgb}{0.00392156862745098,0.00392156862745098,0.00392156862745098}
\psline[linewidth=0.04cm](2.6,0.7)(2.6,-0.5)
\psline[linewidth=0.04cm](0.2,-0.5)(5.2,-0.5)
\pscircle[linewidth=0.02,dimen=outer,fillstyle=solid,fillcolor=color97b](3.8,-0.5){0.2}
\pscircle[linewidth=0.02,dimen=outer,fillstyle=solid,fillcolor=color97b](2.6,-0.5){0.2}
\pscircle[linewidth=0.02,dimen=outer,fillstyle=solid,fillcolor=color97b](1.4,-0.5){0.2}
\pscircle[linewidth=0.02,dimen=outer,fillstyle=solid,fillcolor=color97b](0.2,-0.5){0.2}
\pscircle[linewidth=0.02,dimen=outer,fillstyle=solid,fillcolor=color97b](5.0,-0.5){0.2}
\pscircle[linewidth=0.02,dimen=outer,fillstyle=solid,fillcolor=color97b](2.6,0.7){0.2}
\pscircle[linewidth=0.04,dimen=outer](3.8,-0.5){0.4}
\end{pspicture} 
}
&$r2_{21}$&216&
\begin{tabular}{c}
\\ [-0.4cm]
\includegraphics[scale=0.55]{./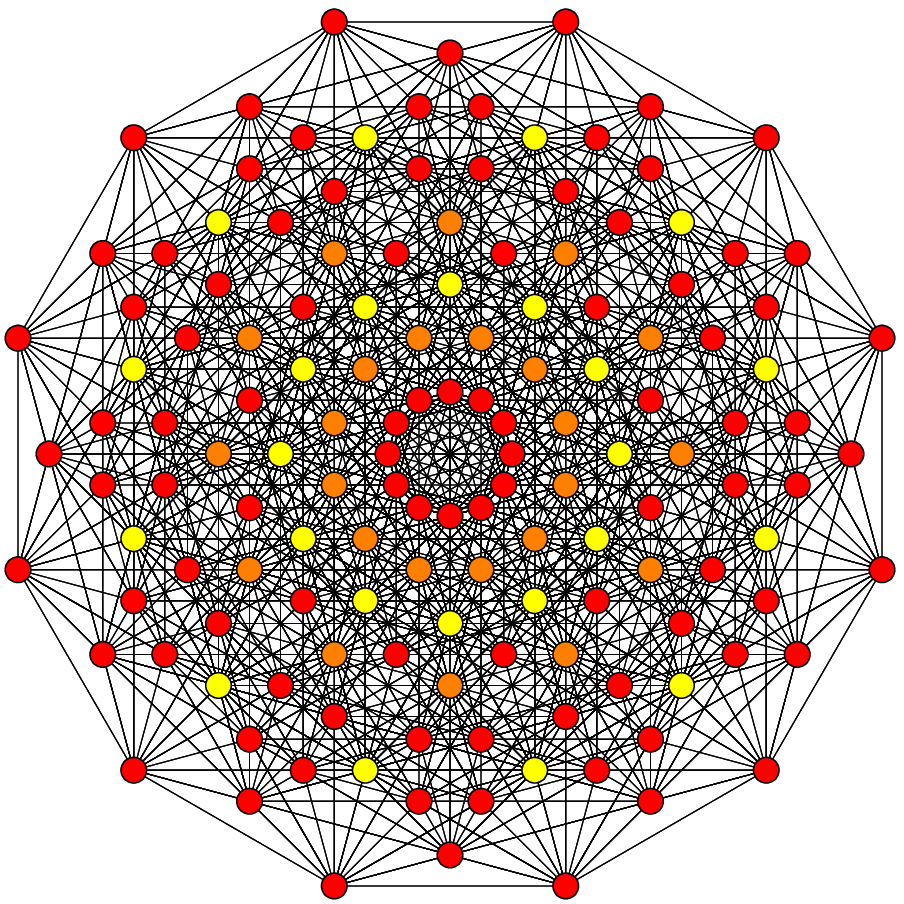}

\end{tabular}\\ \hline
6&\begin{tabular}{c}
$D_{5}$\\
{\bf 45}
\end{tabular}&
\scalebox{0.5} 
{
\begin{pspicture}(0,-0.9)(4.02,0.9)
\definecolor{color97b}{rgb}{0.00392156862745098,0.00392156862745098,0.00392156862745098}
\psline[linewidth=0.04cm](1.4,0.7)(1.4,-0.5)
\psline[linewidth=0.04cm](0.2,-0.5)(4.0,-0.5)
\pscircle[linewidth=0.02,dimen=outer,fillstyle=solid,fillcolor=color97b](2.6,-0.5){0.2}
\pscircle[linewidth=0.02,dimen=outer,fillstyle=solid,fillcolor=color97b](1.4,-0.5){0.2}
\pscircle[linewidth=0.02,dimen=outer,fillstyle=solid,fillcolor=color97b](0.2,-0.5){0.2}
\pscircle[linewidth=0.02,dimen=outer,fillstyle=solid,fillcolor=color97b](3.8,-0.5){0.2}
\pscircle[linewidth=0.02,dimen=outer,fillstyle=solid,fillcolor=color97b](1.4,0.7){0.2}
\pscircle[linewidth=0.04,dimen=outer](2.6,-0.5){0.4}
\end{pspicture} 
}

&$r2_{11}$&40&
\begin{tabular}{c}
\\ [-0.4cm]
\includegraphics[scale=0.4]{./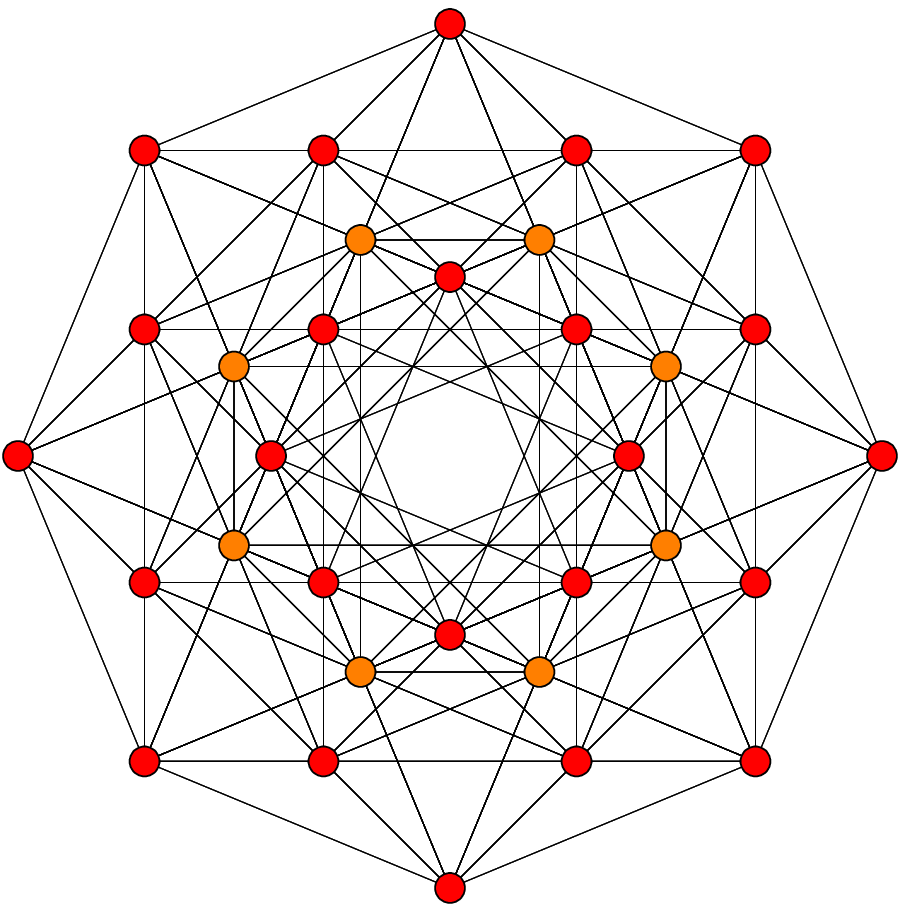}

\end{tabular}\\ \hline
7&\begin{tabular}{c}
$A_{4}$\\
{\bf 10}
\end{tabular}&
\scalebox{0.5} 
{
\begin{pspicture}(0,-0.9)(2.82,0.9)
\definecolor{color97b}{rgb}{0.00392156862745098,0.00392156862745098,0.00392156862745098}
\psline[linewidth=0.04cm](0.2,0.7)(0.2,-0.5)
\psline[linewidth=0.04cm](0.2,-0.5)(2.8,-0.5)
\pscircle[linewidth=0.02,dimen=outer,fillstyle=solid,fillcolor=color97b](1.4,-0.5){0.2}
\pscircle[linewidth=0.02,dimen=outer,fillstyle=solid,fillcolor=color97b](0.2,-0.5){0.2}
\pscircle[linewidth=0.02,dimen=outer,fillstyle=solid,fillcolor=color97b](2.6,-0.5){0.2}
\pscircle[linewidth=0.02,dimen=outer,fillstyle=solid,fillcolor=color97b](0.2,0.7){0.2}
\pscircle[linewidth=0.04,dimen=outer](1.4,-0.5){0.4}
\end{pspicture} 
}
&$r2_{01}$&10&
\begin{tabular}{c}
\\ [-0.4cm]
\includegraphics[scale=0.25]{./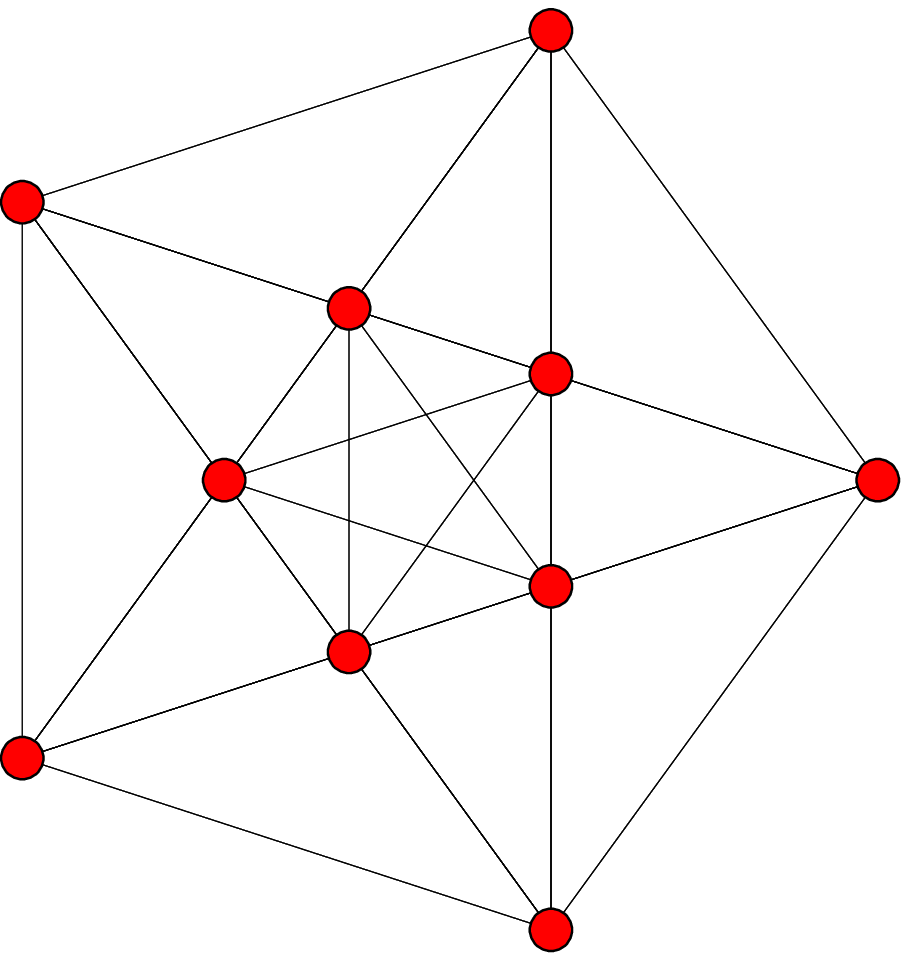}
\end{tabular}
\\ \hline
8&\begin{tabular}{c}
$A_{2}\times A_{1}$\\
($\mathbf{\overline{3}}$,{\bf 1})
\end{tabular}&
\begin{tabular}{c}
\scalebox{0.4} 
{
\begin{pspicture}(0,-0.4)(3.0,0.4)
\definecolor{color458b}{rgb}{0.00392156862745098,0.00392156862745098,0.00392156862745098}
\pscircle[linewidth=0.04,dimen=outer](1.4,0.0){0.4}
\psline[linewidth=0.04cm](0.2,0.0)(1.4,0.0)
\pscircle[linewidth=0.02,dimen=outer,fillstyle=solid,fillcolor=color458b](1.4,0.0){0.2}
\pscircle[linewidth=0.02,dimen=outer,fillstyle=solid,fillcolor=color458b](0.2,0.0){0.2}
\pscircle[linewidth=0.02,dimen=outer,fillstyle=solid,fillcolor=color458b](2.6,0.0){0.2}
\end{pspicture} 
}
\end{tabular}
&$r2_{-11}$&3&
\begin{tabular}{c}
\\ [-0.4cm]
\scalebox{0.25} 
{
\begin{pspicture}(0,-3.15)(6.8,3.5)
\pstriangle[linewidth=0.04,dimen=outer](3.4,-2.75)(6.4,5.6)
\pscircle[linewidth=0.04,dimen=outer,fillstyle=solid,fillcolor=red](0.3,-2.65){0.4}
\pscircle[linewidth=0.04,dimen=outer,fillstyle=solid,fillcolor=red](6.5,-2.65){0.4}
\pscircle[linewidth=0.04,dimen=outer,fillstyle=solid,fillcolor=red](3.4,2.65){0.4}
\end{pspicture} 
}

\end{tabular}

\\
\hline

\end{tabular}
}
\end{center}
\caption{The uniform $r2_{k1}$ polytopes correspond to 3-branes in maximal supergravities. In the first two columns we list the dimension of the maximal supergravity, its U-duality group and the representation hosting 3-forms. In the third column we show the Coxeter Dynkin diagram while $r2_{k2}$ and V are the name identifying the polytope and the number of its vertices respectively. In the last column we show the associated Petrie polygon.}\label{tab:visual3}
\end{table}

\paragraph{4-branes and beyond}
Looking at \cref{eq:relation0} we realize that for 4-brane and higher rank branes the isotropy group of the highest weight is always $W_{A_{9-d}}$, with a further class of 5-branes mimicking the situation already described for the 2-branes. The fact that 5-branes live in two representations depends on the fact that these couple both with vector and tensor multiplets. In particular 5-branes coupled to tensor multiplets obey the relations holding for the 2-branes, i.e. the one appearing in the second term of \cref{eq:5brane}, while 5-branes coupled to vector multiplets are described by the first term of the same equation. It is interesting to note that there is again a fixed scheme for non zero Dynkin labels, as can be seen in \cref{weightsd5,weightsd6,weightsd7,weightsd8,weightsd9},  but now this finds realization in a set of uniform polytopes that could not be traced back to a single family, \cref{tab:visual4}. The symbol $t_{0,3}$ appearing in \cref{tab:visual4} means that the corresponding polytope is {\bf runcinated}. Runcination is a transformation similar to rectification, where the original polytope is sliced simultaneously along faces, edges and vertices. In \cref{tab:visual4} for the 6-polytope \textit{hejack} is the Bowers acronym.\\

\begin{table}[h!]
\renewcommand{\arraystretch}{1.5}
\begin{center}
\resizebox{\textwidth}{!}{
\begin{tabular}{|c|c|c|c|c|c|}
\hline 
\multicolumn{6}{|c|}{\bf{4-Brane Polytopes}}\\
\hline
\bf{d}&\bf{G/rep}&\bf{Coxeter-Dynkin diagram}&\bf{P}&\bf{V}&\bf{Petrie Polygon}\\ 
\hline 

5&\begin{tabular}{c}
$E_{6}$\\
{\bf 1728}
\end{tabular}&
\begin{tabular}{c}
\textcolor{black!80}{demified icosiheptaheptacontidipeton (hejack)}\\
\\
\scalebox{0.5} 
{
\begin{pspicture}(0,-0.9)(5.22,0.9)
\definecolor{color97b}{rgb}{0.00392156862745098,0.00392156862745098,0.00392156862745098}
\psline[linewidth=0.04cm](2.6,0.7)(2.6,-0.5)
\psline[linewidth=0.04cm](0.2,-0.5)(5.2,-0.5)
\pscircle[linewidth=0.02,dimen=outer,fillstyle=solid,fillcolor=color97b](3.8,-0.5){0.2}
\pscircle[linewidth=0.02,dimen=outer,fillstyle=solid,fillcolor=color97b](2.6,-0.5){0.2}
\pscircle[linewidth=0.02,dimen=outer,fillstyle=solid,fillcolor=color97b](1.4,-0.5){0.2}
\pscircle[linewidth=0.02,dimen=outer,fillstyle=solid,fillcolor=color97b](0.2,-0.5){0.2}
\pscircle[linewidth=0.02,dimen=outer,fillstyle=solid,fillcolor=color97b](5.0,-0.5){0.2}
\pscircle[linewidth=0.02,dimen=outer,fillstyle=solid,fillcolor=color97b](2.6,0.7){0.2}
\pscircle[linewidth=0.04,dimen=outer](5.0,-0.5){0.4}
\pscircle[linewidth=0.04,dimen=outer](2.6,0.7){0.4}
\end{pspicture} 
}
\end{tabular}
&&432&
\begin{tabular}{c}
\\ [-0.4cm]
\includegraphics[scale=0.6]{./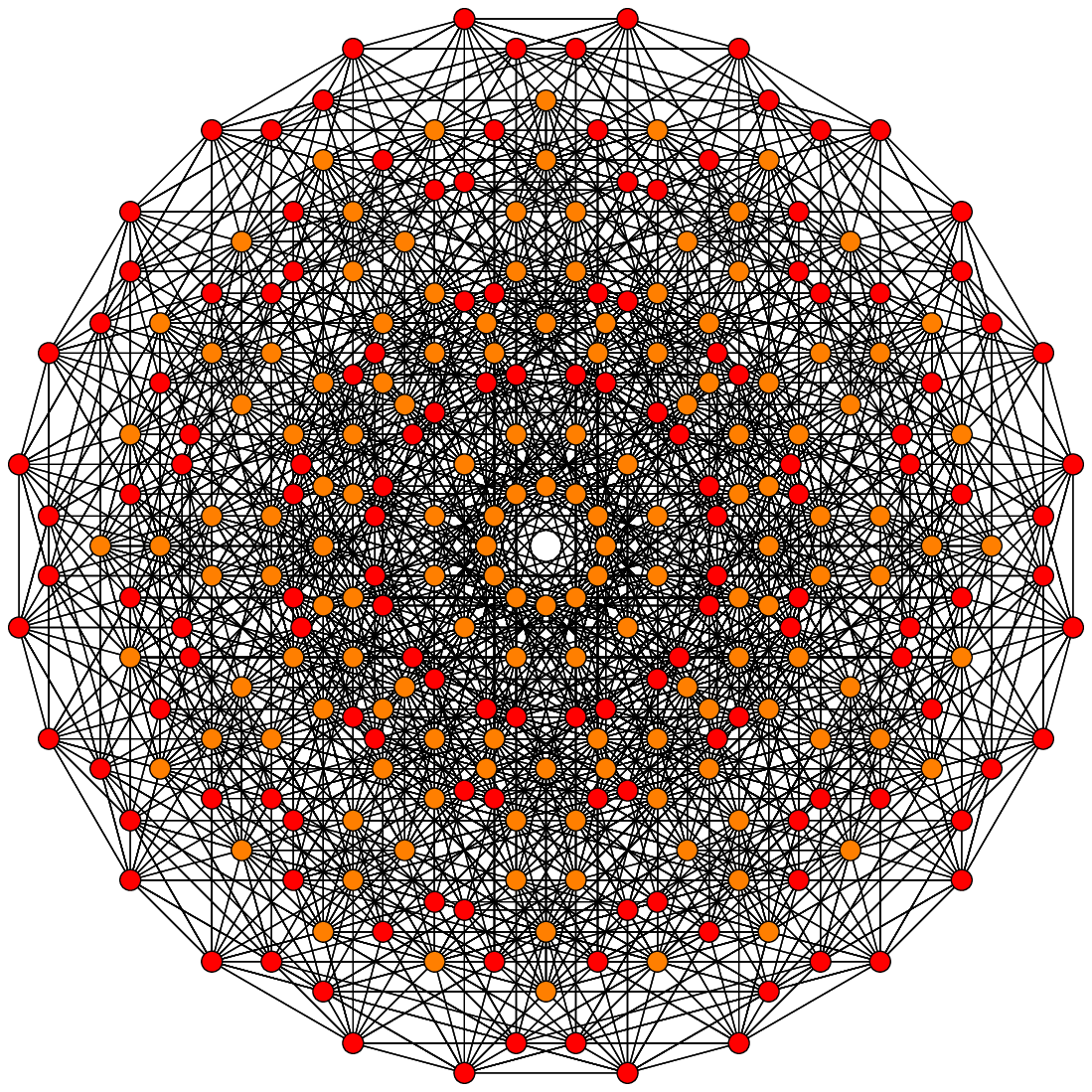}

\end{tabular}\\ \hline
6&\begin{tabular}{c}
$D_{5}$\\
{\bf 144}
\end{tabular}&
\begin{tabular}{c}
\textcolor{black!80}{steric 5-cube or runcinated demipenteract}\\
\\
\scalebox{0.5} 
{
\begin{pspicture}(0,-0.9)(4.02,0.9)
\definecolor{color97b}{rgb}{0.00392156862745098,0.00392156862745098,0.00392156862745098}
\psline[linewidth=0.04cm](1.4,0.7)(1.4,-0.5)
\psline[linewidth=0.04cm](0.2,-0.5)(4.0,-0.5)
\pscircle[linewidth=0.02,dimen=outer,fillstyle=solid,fillcolor=color97b](2.6,-0.5){0.2}
\pscircle[linewidth=0.02,dimen=outer,fillstyle=solid,fillcolor=color97b](1.4,-0.5){0.2}
\pscircle[linewidth=0.02,dimen=outer,fillstyle=solid,fillcolor=color97b](0.2,-0.5){0.2}
\pscircle[linewidth=0.02,dimen=outer,fillstyle=solid,fillcolor=color97b](3.8,-0.5){0.2}
\pscircle[linewidth=0.02,dimen=outer,fillstyle=solid,fillcolor=color97b](1.4,0.7){0.2}
\pscircle[linewidth=0.04,dimen=outer](3.8,-0.5){0.4}
\pscircle[linewidth=0.04,dimen=outer](1.4,0.7){0.4}
\end{pspicture} 
}
\end{tabular}
&$t_{0,3} 1_{21}$&80&
\begin{tabular}{c}
\\ [-0.4cm]
\includegraphics[scale=0.55]{./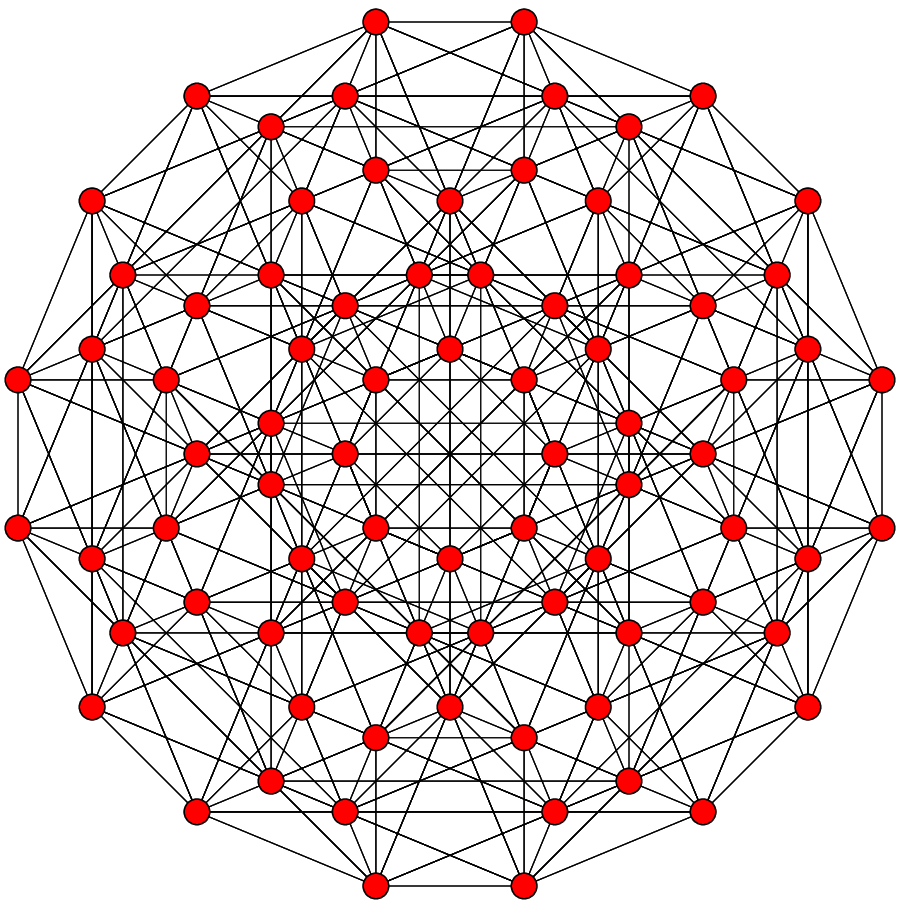}

\end{tabular}\\ \hline
7&\begin{tabular}{c}
$A_{4}$\\
{\bf 24}
\end{tabular}&
\begin{tabular}{c}
\textcolor{black!80}{runcinated 5-cell}\\
\\
\scalebox{0.5} 
{
\begin{pspicture}(0,-0.9)(2.82,0.9)
\definecolor{color97b}{rgb}{0.00392156862745098,0.00392156862745098,0.00392156862745098}
\psline[linewidth=0.04cm](0.2,0.7)(0.2,-0.5)
\psline[linewidth=0.04cm](0.2,-0.5)(2.8,-0.5)
\pscircle[linewidth=0.02,dimen=outer,fillstyle=solid,fillcolor=color97b](1.4,-0.5){0.2}
\pscircle[linewidth=0.02,dimen=outer,fillstyle=solid,fillcolor=color97b](0.2,-0.5){0.2}
\pscircle[linewidth=0.02,dimen=outer,fillstyle=solid,fillcolor=color97b](2.6,-0.5){0.2}
\pscircle[linewidth=0.02,dimen=outer,fillstyle=solid,fillcolor=color97b](0.2,0.7){0.2}
\pscircle[linewidth=0.04,dimen=outer](2.6,-0.5){0.4}
\pscircle[linewidth=0.04,dimen=outer](0.2,0.7){0.4}

\end{pspicture} 
}
\end{tabular}
&$t_{0,3}2_{01}$&20&
\begin{tabular}{c}
\\ [-0.4cm]
\includegraphics[scale=0.38]{./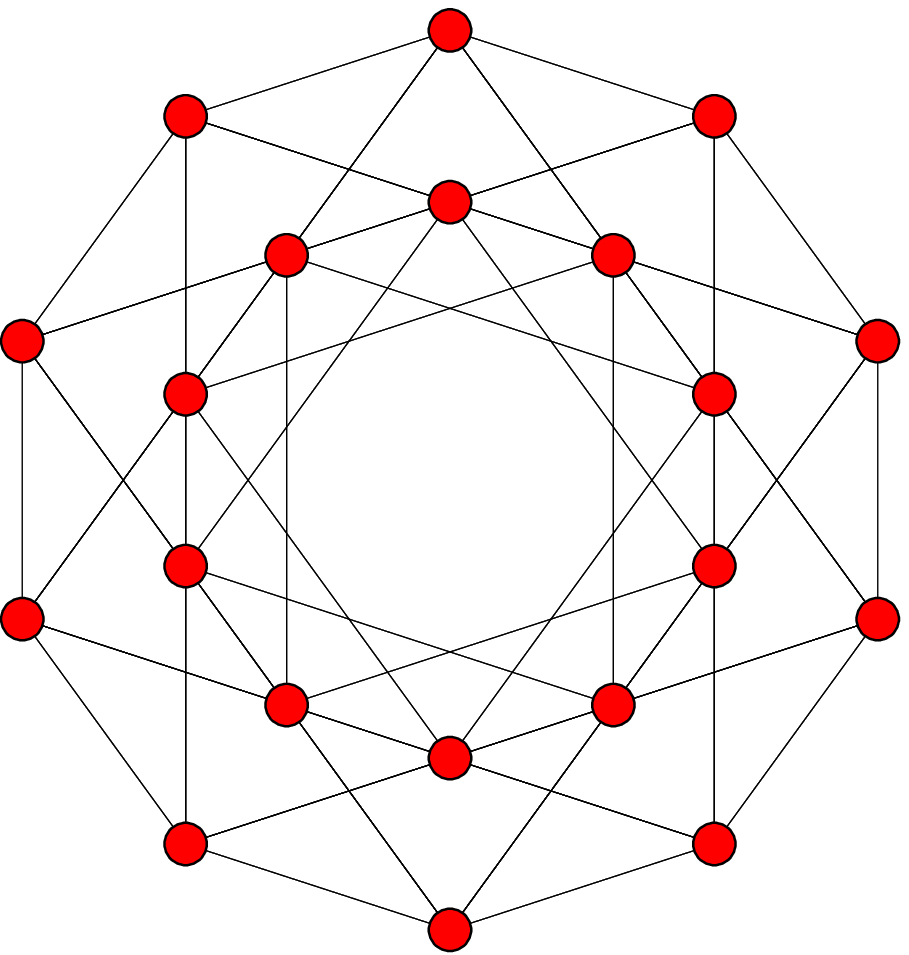}

\end{tabular}
\\ \hline
8&\begin{tabular}{c}
$A_{2}\times A_{1}$\\
($\mathbf{3}$,{\bf 2})
\end{tabular}&
\begin{tabular}{c}
\textcolor{black!80}{triangular prism}\\
\scalebox{0.4} 
{
\begin{pspicture}(0,-0.4)(3.0,0.4)
\definecolor{color458b}{rgb}{0.00392156862745098,0.00392156862745098,0.00392156862745098}
\pscircle[linewidth=0.04,dimen=outer](1.4,0.0){0.4}
\psline[linewidth=0.04cm](0.2,0.0)(1.4,0.0)
\pscircle[linewidth=0.02,dimen=outer,fillstyle=solid,fillcolor=color458b](1.4,0.0){0.2}
\pscircle[linewidth=0.02,dimen=outer,fillstyle=solid,fillcolor=color458b](0.2,0.0){0.2}
\pscircle[linewidth=0.02,dimen=outer,fillstyle=solid,fillcolor=color458b](2.6,0.0){0.2}
\pscircle[linewidth=0.04,dimen=outer](2.6,0.0){0.4}
\end{pspicture} 
}
\end{tabular}

&$-1_{21}$&6&
\begin{tabular}{c}
\\ [-0.4cm]
\includegraphics[scale=0.15]{./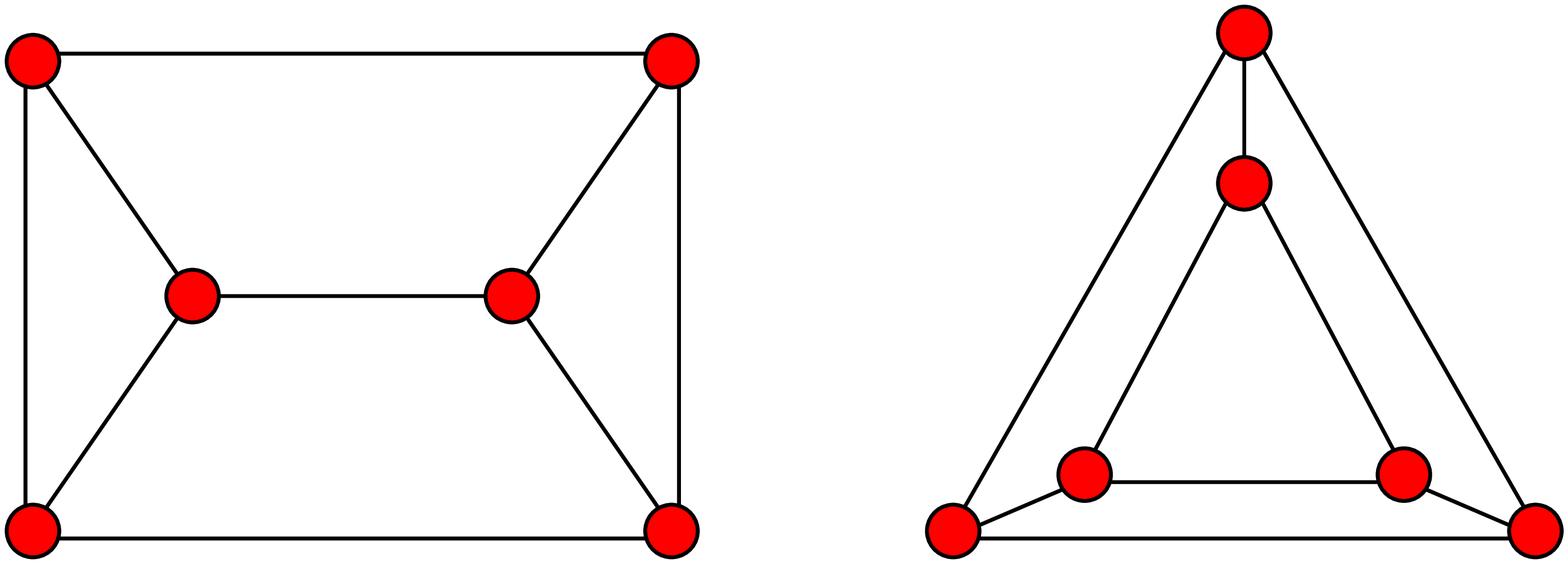}
\end{tabular}
%

\\
\hline

\end{tabular}
}
\end{center}
\caption{We sketch the polytopes associated with 4-branes and their main features, in the last column the corresponding Petrie polygon is drawn with the usual notation.}\label{tab:visual4}
\end{table}
For completeness we list in \cref{tab:polytopeproperties} in \cref{sec:appendix1} all the components of the polytopes we met until now and whose vertices have a correspondence with half-supersymmetric branes in maximal supergravities.

\FloatBarrier

\section*{Conclusions and Perspectives} \addcontentsline{toc}{section}{Conclusions}

Due to their supersymmetry-preserving action, Weyl groups associated with U-duality groups of maximal supergravity theories play a fundamental role in understanding the algebraic structure behind half-supersymmetric branes. An analysis, based on the formalism of reflection groups and Coxeter groups reveals a universal structure behind the hierarchies of 1/2-BPS solutions in maximal theories. This structure is captured by a set of algebraic rules describing the number of independent half-supersymmetric branes, rank by rank, in any dimensions, possessing some striking features. 
The relation between Coxeter group and uniform polytopes provides a new perspective in the analysis of branes: half-supersymmetric branes could be visualized as vertices of certain families of uniform polytopes. From this new perspective it is possible to capture some intriguing properties of and relations between different brane solutions in different theories.\\

In the present paper we analyzed the action of the Weyl group on U-duality representations hosting branes in maximal theories and we discovered a set of algebraic rules describing the number of independent half-supersymmetric solutions. The rules we found,

\begin{subequations}
\begin{flalign}
&N_{0\text{-}brane} =\dfrac{\dim W_{E_{11-d}}}{\dim W_{E_{10-d}}}\\
&N_{1\text{-}brane} =\dfrac{\dim W_{E_{11-d}}}{\dim W_{D_{10-d}}}\\
&N_{2\text{-}brane} =\dfrac{\dim W_{E_{11-d}}}{\dim W_{A_{10-d}}}\\
&N_{3\text{-}brane} =\dfrac{\dim W_{E_{11-d}}}{\dim W_{A_{1}\times A_{9-d}}}\label{eq:3branesrule2}\\
&N_{4\text{-}brane} =\dfrac{\dim W_{E_{11-d}}}{\dim W_{A_{9-d}}} \label{eq:4branesrule2}\\
&N_{5\text{-}brane} =\dfrac{\dim W_{E_{11-d}}}{\dim W_{A_{9-d}}}+\dfrac{\dim W_{E_{11-d}}}{\dim W_{A_{10-d}}}\\
&N_{6\text{-}brane} =\dfrac{\dim W_{E_{11-d}}}{\dim W_{A_{9-d}}}\\
&N_{7\text{-}brane} =\dfrac{\dim W_{E_{11-d}}}{\dim W_{A_{9-d}}}\\ 
&N_{8\text{-}brane} =\dfrac{\dim W_{E_{11-d}}}{\dim W_{A_{9-d}}}. 
\end{flalign}\label{eq:relations1}
\end{subequations}
have some remarkable features. First of all there are different formulae for different rank solutions, in particular we got five types of rules. For p-branes with $p=0,1,2,3,4$ we have five different relations, while, for $p\geqslant 4$, the same relation holds, with an additional contribution for 5-branes described by the same rule appearing for 2-branes. The two contributions for 5-branes were not surprising, since 6-forms live in two different representations corresponding to their coupling with tensor and vector multiplets. Furthermore it is worth noting that the relations above apply both to standard and non-standard branes, revealing that in the full set of U-duality charges, the components coupling to half-supersymmetric solutions follow a well defined pattern.\\

The correspondence between half-supersymmetric solutions and longest weights was a key ingredient in the derivation of our rules and it promotes the Weyl group to the fundamental role it plays in this context. We also remark that, to find the number of independent half-supersymmetric p-branes, the algebraic relations above do not require the knowledge of the representations hosting the corresponding U-duality charges. The formulae \cref{eq:relations1} have the same form despite of the dimension. 
Moreover by inspecting the relations between different rank rules it was possible to uncover some formulae describing the uplift behavior, \cref{eq:upliftsrule}, of half-supersymmetric solutions. All these features characterizing \cref{eq:relations1} make them able to capture a general and deep algebraic structure governing 1/2-BPS branes in maximal theories.\\

Once the rules \cref{eq:relations1} had been found it was natural to look for an interpretation of their coset structure as a symmetry of certain objects. It turned out that these objects are uniform polytopes with the U-duality group as isotropy group and the groups appearing in the denominator of \cref{eq:relations1} as invariance groups of the vertices. This induces a correspondence between branes and vertices of certain families of uniform polytopes providing a new perspective on the hierarchies of half-supersymmetric solutions in maximal theories. In particular we realized that 0-, 1- and 2-branes are in correspondence with the vertices of the families $k_{21}$, $2_{k1}$ and $1_{21}$ of uniform polytopes respectively, with $k$ related to the dimension $d$ by $k=7-d$. 3-branes correspond to rectified $2_{k1}$ polytopes, while for 4-branes the situation is a little less homogeneous since these cannot be identified with a single family of polytopes.\\

The correspondence between half-supersymmetric solutions and polytopes emphasizes another relevant aspect, the relation between rules for different rank solutions. There is a triality relation between 0-, 1- and 2-branes. 0-branes correspond to vertices of the $k_{21}$ polytopes. These polytopes have two types of facets, orthoplexes ans simplexes.
1-branes could be seen as vertices of the polytopes obtained by fixing one vertex on each orthoplex facet, while 2-branes could be seen as vertices of the polytopes obtained by fixing one vertex on each simplex facet. Analogous arguments hold exchanging  the role of the 0-,1- and 2-branes and the corresponding polytopes.\\
Moreover 3-branes correspond to edges of $2_{k1}$ polytopes. For 4-branes and higher rank solutions we found a general behavior. It is manifest by comparison of \cref{eq:4branesrule2} and \cref{eq:3branesrule2} that 4-branes solutions could be obtained from the 3-brane polytopes by adding an orthogonal mirror; this doubles the number of half-supersymmetric 4-brane solutions with respect to 3-branes.
The picture emerging from this description tells us that, the seemingly independent relations for different rank solutions, have quite intriguing links.\\

An immediate application of the correspondence outlined in the present paper is the analysis of the Weyl orbits of less supersymmetric states. These correspond to dominant weights, not highest weights, in the non-standard brane representations we have analyzed.\\

The 0-, 1- and 2-branes in three dimensions maximal supergravity correspond to vertices of the polytopes $4_{21}$, $2_{41}$ and $1_{42}$. The families of uniform polytopes $k_{21}$, $2_{k1}$ have further elements, the honeycombs $5_{21}$, $2_{51}$ corresponding to a symmetry $E_{8}^{+}$. By the same way there is a further honeycomb $6_{21}$ with reflectional symmetry $E_{8}^{++}$.  It would be interesting to look for an extension of the present analysis to two dimensions and one dimension interpreting these honeycombs as the origin of the brane states appearing in maximal supergravities.\\

In perspective it is natural to extend the present work to less supersymmetric theories. In particular these theories are characterized by a U-duality group not appearing in general in its maximal non-compact form. This induces the presence of compact weights. It has been shown that half-supersymmetric solutions correspond to longest non-compact weights \cite{Bergshoeff:2014lxa} thus the analysis of the present work requires a refinement to be applied to the non-maximal cases. This refinement consists in a restriction of the Weyl group to the subgroup generated only by reflections corresponding to non-compact roots.\\

We defined a bridge connecting branes with the world of polytopes. We believe their interplays could provide important improvements in understanding dualities and further clarifying the role that branes play in string theory and supergravity.

%
%
%
%

\newpage
\appendix
\section{Polytopes}\label{sec:appendix1}
In this appendix we report all the components of the polytopes we discuss in \cref{sec:branesandpolytopes}.
\begin{table}[h!]
\renewcommand{\arraystretch}{1.8}
\begin{center}
\resizebox{\textwidth}{!}{
\begin{tabular}{|c|c|c|c|c|c|c|c|c|}
\hline
\bf{Polytope}&{\bf Vertices}&{\bf Edges}&{\bf 2-Faces}&{\bf 3-Faces}&{\bf 4-Faces}&{\bf 5-Faces}&{\bf 6-Faces}&{\bf 7-Faces}\\ \hline \hline
\textcolor{black}{$\mathbf{4_{21}}$}&240&6720&60480&241920&483840&483840&207360&19440\\
\textcolor{black}{$\mathbf{3_{21}}$}&56&756&4032&10080&12096&6048&702&\\
\textcolor{black}{$\mathbf{2_{21}}$}&27&216&720&1080&648&99&&\\
\textcolor{black}{$\mathbf{1_{21}}$}&16&80&160&120&26&&&\\
\textcolor{black}{$\mathbf{0_{21}}$}&10&30&30&10&&&&\\
\textcolor{black}{$\mathbf{-1_{21}}$}&6&9&5&&&&&\\
\textcolor{black}{$\mathbf{2_{41}}$}&2160&69120&483840&1209600&1209600&544320&144960&17520\\
\textcolor{black}{$\mathbf{2_{31}}$}&126&2016&10080&20160&16128&4788&632&\\
\textcolor{black}{$\mathbf{2_{11}}$}&10&40&80&80&32&&&\\
\textcolor{black}{$\mathbf{2_{01}}$}&5&10&10&5&&&&\\
\textcolor{black}{$\mathbf{2_{-11}}$}&3&3&1&&&&&\\
\textcolor{black}{$\mathbf{1_{42}}$}&17280&483840&2419200&3628800&2298240&725760&106080&2400\\
\textcolor{black}{$\mathbf{1_{32}}$}&576&10080&40320&50400&23688&4284&182&\\
\textcolor{black}{$\mathbf{1_{22}}$}&72&720&2160&2160&702&54&&\\
\textcolor{black}{$\mathbf{1_{02}}$}&5&10&10&5&&&&\\
\textcolor{black}{$\mathbf{1_{-12}}$}&2&1&&&&&&\\
\textcolor{black}{$\mathbf{r2_{31}}$}&2016&30240&90720&100800&47880&10332&758&\\
\textcolor{black}{$\mathbf{r2_{21}}$}&216&2160&5040&4320&1350&126&&\\
\textcolor{black}{$\mathbf{r2_{11}}$}&40&240&400&240&42&&&\\
\textcolor{black}{$\mathbf{r2_{01}}$}&10&30&30&10&&&&\\
\textcolor{black}{$\mathbf{r2_{-11}}$}&3&3&1&&&&&\\
\textcolor{black}{{\bf hejack}}&432&3240&7920&7200&2430&342&&\\
\textcolor{black}{{\bf steric 5-cube}}&80&400&720&480&82&&&\\
\textcolor{black}{{\bf runcinated 5-cell}}&20&60&70&30&&&&\\ \hline
\end{tabular}
}
\end{center}
\caption{Components of the uniform polytopes whose vertices could be associated with half-supersymmetric solutions in maximal theories.}\label{tab:polytopeproperties}
\end{table}

\section{Petrie Polygons}\label{sec:appB}
In this section we review the construction of the Petrie polygons appearing in the paper. A Petrie polygon of an n-polytope is a polygon such that every consecutive n-1 edges, but not n belong to the same facet of the polytope \cite{coxeter1973regular}. For a given polytope the Petrie Polygon could be obtained as projection on the Coxeter plane.  The Coxeter plane is defined by the action of a Coxeter element $w$ as the plane on which it acts as a rotation of $2\pi/h$, where $h$ is the Coxeter number, i.e. the order of the Coxeter elements (we recall that Coxeter elements are all conjugate). Taking a Coxeter element  $w$ in a Coxeter system $(W,S)$ with Coxeter number $h$ it has eigenvalues
\begin{flalign}
 &\lambda_{i}=e^{2ik\pi/h},
\end{flalign}
for some $k\in \mathds{Z}$. If we call $z_{k}\in \mathds{C}^{n}$ its eigenvectors then we can write
\begin{flalign}
 &wz_{k}=e^{2ik\pi/h}z_{k}.
\end{flalign}
$w$ acts as rotation of $2k\pi/h$ on $z_{k}$. The Coxeter plane is identified by the element $z_{1}$, always appearing in the set of eigenvectors.  We decomposed $z_{1}$ in its real and imaginary parts 
\begin{flalign}
 &z_1=\Re z_1+i\Im z_1
\end{flalign}
and we consider the plane $\{\Re z_1,\Im z_1\}$, where $\Re z_1,\Im z_1\in \mathds{R}^{n}$. Thus given a weight $\Lambda$ its projection on the Coxeter plane has components 
\begin{flalign}
 &P_{\Lambda}=\biggl(\scp{\Lambda}{\Re z_1},\    \scp{\Lambda}{\Im z_1} \biggr).
\end{flalign}
We could discuss a simple example; let's take the representation {\bf 10} of $D_{5}$.  $D_5$, whose Dynkin diagram is in \cref{fig:D5petrieexample}, has Coxeter number $h=8$.
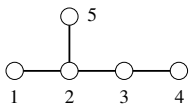
\begin{figure}[h]
 \centering
\scalebox{0.6} 
{
\begin{pspicture}(0,-1.1034375)(4.02,1.1034375)
\definecolor{color6b}{rgb}{0.996078431372549,0.996078431372549,0.996078431372549}
\psline[linewidth=0.04cm](1.4,0.8428125)(1.4,-0.3571875)
\psline[linewidth=0.04cm](0.2,-0.3571875)(4.0,-0.3571875)
\pscircle[linewidth=0.02,dimen=outer,fillstyle=solid,fillcolor=color6b](2.6,-0.3571875){0.2}
\pscircle[linewidth=0.02,dimen=outer,fillstyle=solid,fillcolor=color6b](1.4,-0.3571875){0.2}
\pscircle[linewidth=0.02,dimen=outer,fillstyle=solid,fillcolor=color6b](0.2,-0.3571875){0.2}
\pscircle[linewidth=0.02,dimen=outer,fillstyle=solid,fillcolor=color6b](3.8,-0.3571875){0.2}
\pscircle[linewidth=0.02,dimen=outer,fillstyle=solid,fillcolor=color6b](1.4,0.8428125){0.2}
\usefont{T1}{ptm}{m}{n}
\rput(0.19546875,-0.9221875){\large 1}
\usefont{T1}{ptm}{m}{n}
\rput(1.90375,0.8778125){\large 5}
\usefont{T1}{ptm}{m}{n}
\rput(3.816875,-0.9221875){\large 4}
\usefont{T1}{ptm}{m}{n}
\rput(2.605625,-0.9221875){\large 3}
\usefont{T1}{ptm}{m}{n}
\rput(1.401875,-0.9221875){\large 2}
\end{pspicture} 
}
\caption{$D_5$ Dynkin diagram.}\label{fig:D5petrieexample}
\end{figure}\\
We choose as Coxeter element 
\begin{flalign}
 &w=w_{5}w_{1}w_{3}w_{4}w_{2}.
\end{flalign}
Among the Coxeter elements the one we have chosen is called \textbf{distinguished Coxeter element} since it is the product of two involutions, $r_1=w_{5}w_{1}w_{3}$ and $r_2=w_{4}w_{2}$ with elements commuting each others. Its action on weight vectors could be represented by the matrix
\begin{flalign}
 &M_{w}=\left(\begin{array}{ccccc}
         0& -1& 1& 0& 1\\
         1& -1& 1& 0& 1\\
         1& -1& 1& -1& 1\\
         0& 0& 1& -1&0\\ 
         1& -1& 1& 0& 0
        \end{array}\right),
\end{flalign}
with eigenvalues
\begin{flalign}
 &\lambda_{k}=e^{ik\pi/4}\qquad\text{  for  }\ k=1,3,4,5,7.
\end{flalign}
The eigenvector corresponding to $\lambda_{1}$ is 
\begin{flalign}
 &z_{1}=\left(1,\ 1+e^{i7\pi/4},\ \sqrt{2},\ -i+e^{i\pi/4},\ 1\right)
\end{flalign}
and the Coxeter plane is identified by the vectors
\begin{subequations}
\begin{flalign}
 &\Re z_{1}=\left(1,\ 1+\cos(7\pi/4),\ \sqrt{2},\ -\cos(\pi/4),\ 1\right)\\
 &\Im z_{1}=\left(0,\ \sin(7\pi/4),\ 0,\ \sin(\pi/4)-1,\ 0\right).
\end{flalign}\label{eq:Coxeterplanereim1}
\end{subequations}
The weights of the representation {\bf 10} of $D_{5}$ appear in its Dynkin tree in \cref{fig:Dynkintree10odD5}. They correspond to a 5-orthoplex.
\FloatBarrier
\begin{figure}[h!]
\centering
\scalebox{0.7} 
{
\begin{pspicture}(0,-7.5673957)(8.560104,7.5673957)
\psline[linewidth=0.04cm,dotsize=0.07055555cm 2.0]{*-*}(1.1930208,-0.0746875)(4.193021,-1.8746876)
\usefont{T1}{ptm}{m}{n}
\rput{-31.396902}(0.8713021,1.3813276){\rput(2.8564584,-0.8396875){\large $\alpha_1$}}
\psline[linewidth=0.04cm,dotsize=0.07055555cm 2.0]{*-*}(4.193021,-1.8746876)(5.493021,-3.6746874)
\usefont{T1}{ptm}{m}{n}
\rput{-54.04724}(4.214296,2.9438493){\rput(4.9564586,-2.6396875){\large $\alpha_2$}}
\psline[linewidth=0.04cm,dotsize=0.07055555cm 2.0]{*-*}(4.193021,1.7253125)(1.1930208,-0.0746875)
\usefont{T1}{ptm}{m}{n}
\rput{31.135086}(0.925991,-1.2425455){\rput(2.6564584,1.0603125){\large $\alpha_5$}}
\psline[linewidth=0.04cm,dotsize=0.07055555cm 2.0]{*-*}(2.8930209,3.5253124)(4.193021,1.7253125)
\usefont{T1}{ptm}{m}{n}
\rput{-54.04724}(-0.69375783,4.12106){\rput(3.6564584,2.7603126){\large $\alpha_2$}}
\psline[linewidth=0.04cm,dotsize=0.07055555cm 2.0]{*-*}(4.193021,1.7253125)(7.193021,-0.0746875)
\usefont{T1}{ptm}{m}{n}
\rput{-31.396902}(0.37283093,3.2077758){\rput(5.856458,0.9603125){\large $\alpha_1$}}
\psline[linewidth=0.04cm,dotsize=0.07055555cm 2.0]{*-*}(7.193021,-0.0746875)(4.193021,-1.8746876)
\usefont{T1}{ptm}{m}{n}
\rput{31.135086}(0.42743552,-3.0530071){\rput(5.6564584,-0.7396875){\large $\alpha_5$}}
\psline[linewidth=0.04cm,dotsize=0.07055555cm 2.0]{*-*}(5.493021,-3.6746874)(5.493021,-5.4746876)
\usefont{T1}{ptm}{m}{n}
\rput{-90.0}(10.152395,1.2336459){\rput(5.6564584,-4.4396877){\large $\alpha_3$}}
\psline[linewidth=0.04cm,dotsize=0.07055555cm 2.0]{*-*}(5.493021,-5.4746876)(4.193021,-7.2746873)
\usefont{T1}{ptm}{m}{n}
\rput{55.263065}(-3.1249163,-6.5493407){\rput(4.6564584,-6.2396874){\large $\alpha_4$}}
\psline[linewidth=0.04cm,dotsize=0.07055555cm 2.0]{*-*}(4.193021,7.1253123)(2.8930209,5.3253126)
\usefont{T1}{ptm}{m}{n}
\rput{55.263065}(6.6702247,-0.06062886){\rput(3.3564584,6.3603125){\large $\alpha_4$}}
\psline[linewidth=0.04cm,dotsize=0.07055555cm 2.0]{*-*}(2.8930209,5.3253126)(2.8930209,3.5253124)
\usefont{T1}{ptm}{m}{n}
\rput{-90.0}(-1.4476042,7.633646){\rput(3.0564582,4.5603123){\large $\alpha_3$}}
\usefont{T1}{ptm}{m}{n}
\rput(2.9239583,3.4703126){\Large \psframebox[linewidth=0.04,fillstyle=solid]{0 1 -1 0 0}}
\usefont{T1}{ptm}{m}{n}
\rput(2.9239583,5.2703123){\Large \psframebox[linewidth=0.04,fillstyle=solid]{0 0 1 -1 0}}
\usefont{T1}{ptm}{m}{n}
\rput(4.1339583,7.1703124){\Large \psframebox[linewidth=0.04,fillstyle=solid]{0 0 0 1 0}}
\usefont{T1}{ptm}{m}{n}
\rput(7.221927,-0.1296875){\Large \psframebox[linewidth=0.04,fillstyle=solid]{-1 0 0 0 1}}
\usefont{T1}{ptm}{m}{n}
\rput(1.2072396,-0.1296875){\Large \psframebox[linewidth=0.04,fillstyle=solid]{1 0 0 0 -1}}
\usefont{T1}{ptm}{m}{n}
\rput(4.1072397,1.6703125){\Large \psframebox[linewidth=0.04,fillstyle=solid]{1 -1 0 0 1}}
\usefont{T1}{ptm}{m}{n}
\rput(4.123958,-7.2296877){\Large \psframebox[linewidth=0.04,fillstyle=solid]{0 0 0 -1 0}}
\usefont{T1}{ptm}{m}{n}
\rput(5.4239583,-5.5296874){\Large \psframebox[linewidth=0.04,fillstyle=solid]{0 0 -1 1 0}}
\usefont{T1}{ptm}{m}{n}
\rput(5.4239583,-3.7296875){\Large \psframebox[linewidth=0.04,fillstyle=solid]{0 -1 1 0 0}}
\usefont{T1}{ptm}{m}{n}
\rput(4.211927,-1.9296875){\Large \psframebox[linewidth=0.04,fillstyle=solid]{-1 1 0 0 -1}}
\end{pspicture} 
}

\caption{Dynkin tree of the representation {\bf 10} of $D_{5}$.} \label{fig:Dynkintree10odD5}
\end{figure}
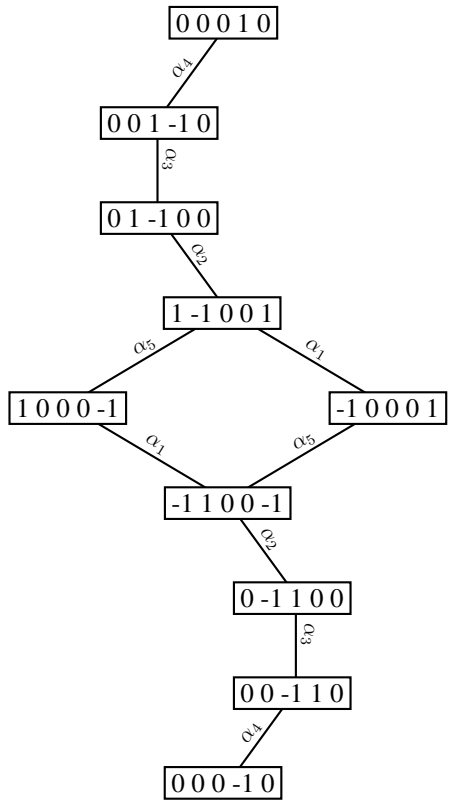
\FloatBarrier
Since the vectors in \cref{eq:Coxeterplanereim1} have coordinates in the basis of simple roots, the projection could be realized just taking their Euclidean product with the vector of Dynkin labels of the weights. The two weights $\boxed{\pm 1\ 0\ 0\ 0\ \mp 1}$ are projected to $(0,0)$  on the $D_5$ Coxeter plane, while the other weights have projection corresponding to the vertices of a regular octagon as in \cref{fig:10ofD5octagon}.  With the same notation describes previously, red points have no degeneracy while the orange point is doubly degenerate.
\FloatBarrier
\begin{figure}[h!]
\centering
\includegraphics[scale=0.5]{./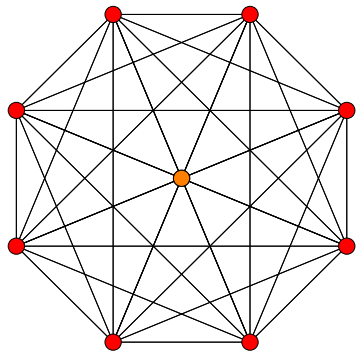}
\caption{Petrie Polygon of the representation {\bf 10} of $D_{5}$ corresponding to a 5-orthoplex.} \label{fig:10ofD5octagon}
\end{figure}
Petrie polygons are rather useful in studying the properties of higher dimensional polytopes.

\FloatBarrier

\nocite{}
 \bibliography{bibliography}{}

\begin{thebibliography}{10}

\bibitem{1987PhLB..189...75B}
E.~{Bergshoeff}, E.~{Sezgin}, and P.~K. {Townsend}, ``{Supermembranes and
  eleven-dimensional supergravity},'' {\em Physics Letters B}, vol.~189,
  pp.~75--78, Apr. 1987.

\bibitem{Polchinski:1995mt}
J.~Polchinski, ``{Dirichlet Branes and Ramond-Ramond charges},'' {\em Phys.
  Rev. Lett.}, vol.~75, pp.~4724--4727, 1995.

\bibitem{Strominger:1996sh}
A.~Strominger and C.~Vafa, ``{Microscopic origin of the Bekenstein-Hawking
  entropy},'' {\em Phys. Lett.}, vol.~B379, pp.~99--104, 1996.

\bibitem{Witten:1995ex}
E.~Witten, ``{String theory dynamics in various dimensions},'' {\em Nucl.
  Phys.}, vol.~B443, pp.~85--126, 1995.

\bibitem{Townsend:1995kk}
P.~K. Townsend, ``{The eleven-dimensional supermembrane revisited},'' {\em
  Phys. Lett.}, vol.~B350, pp.~184--187, 1995.

\bibitem{Bergshoeff:1996tu}
E.~Bergshoeff and P.~K. Townsend, ``{Super D-branes},'' {\em Nucl. Phys.},
  vol.~B490, pp.~145--162, 1997.

\bibitem{Randall:1999ee}
L.~Randall and R.~Sundrum, ``{A Large mass hierarchy from a small extra
  dimension},'' {\em Phys. Rev. Lett.}, vol.~83, pp.~3370--3373, 1999.

\bibitem{Randall:1999vf}
L.~Randall and R.~Sundrum, ``{An Alternative to compactification},'' {\em Phys.
  Rev. Lett.}, vol.~83, pp.~4690--4693, 1999.

\bibitem{Obers:1998fb}
N.~A. Obers and B.~Pioline, ``{U duality and M theory},'' {\em Phys. Rept.},
  vol.~318, pp.~113--225, 1999.

\bibitem{Julia:1997cy}
B.~L. Julia, ``{Dualities in the classical supergravity limits: Dualizations,
  dualities and a detour via (4k+2)-dimensions},'' in {\em {Nonperturbative
  aspects of strings, branes and supersymmetry. Proceedings, Spring School on
  nonperturbative aspects of string theory and supersymmetric gauge theories
  and Conference on super-five-branes and physics in 5 + 1 dimensions, Trieste,
  Italy, March 23-April 3, 1998}}, 1997.

\bibitem{Hull:1995mz}
C.~M. Hull and P.~K. Townsend, ``{Enhanced gauge symmetries in superstring
  theories},'' {\em Nucl. Phys.}, vol.~B451, pp.~525--546, 1995.

\bibitem{Bergshoeff:2010xc}
E.~A. Bergshoeff and F.~Riccioni, ``{D-Brane Wess-Zumino Terms and
  U-Duality},'' {\em JHEP}, vol.~11, p.~139, 2010.

\bibitem{Bergshoeff:2011qk}
E.~A. Bergshoeff and F.~Riccioni, ``{The D-brane U-scan},'' {\em Proc. Symp.
  Pure Math.}, vol.~85, pp.~313--322, 2012.

\bibitem{Bergshoeff:2012ex}
E.~A. Bergshoeff, A.~Marrani, and F.~Riccioni, ``{Brane orbits},'' {\em Nucl.
  Phys.}, vol.~B861, pp.~104--132, 2012.

\bibitem{Kleinschmidt:2011vu}
A.~Kleinschmidt, ``{Counting supersymmetric branes},'' {\em JHEP}, vol.~10,
  p.~144, 2011.

\bibitem{Ferrara:1997ci}
S.~Ferrara and J.~M. Maldacena, ``{Branes, central charges and U duality
  invariant BPS conditions},'' {\em Class. Quant. Grav.}, vol.~15,
  pp.~749--758, 1998.

\bibitem{Ferrara:1997uz}
S.~Ferrara and M.~Gunaydin, ``{Orbits of exceptional groups, duality and BPS
  states in string theory},'' {\em Int. J. Mod. Phys.}, vol.~A13,
  pp.~2075--2088, 1998.

\bibitem{Lu:1997bg}
H.~Lu, C.~N. Pope, and K.~S. Stelle, ``{Multiplet structures of BPS
  solitons},'' {\em Class. Quant. Grav.}, vol.~15, pp.~537--561, 1998.

\bibitem{Bergshoeff:2012pm}
E.~A. Bergshoeff, A.~Kleinschmidt, and F.~Riccioni, ``{Supersymmetric Domain
  Walls},'' {\em Phys. Rev.}, vol.~D86, p.~085043, 2012.

\bibitem{Bergshoeff:2005ac}
E.~A. Bergshoeff, M.~de~Roo, S.~F. Kerstan, and F.~Riccioni, ``{IIB
  supergravity revisited},'' {\em JHEP}, vol.~08, p.~098, 2005.

\bibitem{Bergshoeff:2006qw}
E.~A. Bergshoeff, M.~de~Roo, S.~F. Kerstan, T.~Ortin, and F.~Riccioni, ``{IIA
  ten-forms and the gauge algebras of maximal supergravity theories},'' {\em
  JHEP}, vol.~07, p.~018, 2006.

\bibitem{Bergshoeff:2010mv}
E.~A. Bergshoeff, J.~Hartong, P.~S. Howe, T.~Ortin, and F.~Riccioni, ``{IIA/IIB
  Supergravity and Ten-forms},'' {\em JHEP}, vol.~05, p.~061, 2010.

\bibitem{Riccioni:2007au}
F.~Riccioni and P.~C. West, ``{The E(11) origin of all maximal
  supergravities},'' {\em JHEP}, vol.~07, p.~063, 2007.

\bibitem{West:2001as}
P.~C. West, ``{E(11) and M theory},'' {\em Class. Quant. Grav.}, vol.~18,
  pp.~4443--4460, 2001.

\bibitem{2011JHEP...10..144K}
A.~{Kleinschmidt}, ``{Counting supersymmetric branes},'' {\em Journal of High
  Energy Physics}, vol.~10, p.~144, Oct. 2011.

\bibitem{Bergshoeff:2013sxa}
E.~A. Bergshoeff, F.~Riccioni, and L.~Romano, ``{Branes, Weights and Central
  Charges},'' {\em JHEP}, vol.~06, p.~019, 2013.

\bibitem{araki}
S.~Araki, ``{On root systems and an infinitesimal classification of irreducible
  symmetric spaces},'' {\em J. Math. Osaka City Univ.}, vol.~13, 1962.

\bibitem{Bergshoeff:2014lxa}
E.~A. Bergshoeff, F.~Riccioni, and L.~Romano, ``{Towards a classification of
  branes in theories with eight supercharges},'' {\em JHEP}, vol.~05, p.~070,
  2014.

\bibitem{Marrani:2015gfa}
A.~Marrani, F.~Riccioni, and L.~Romano, ``{Real weights, bound states and
  duality orbits},'' {\em Int. J. Mod. Phys.}, vol.~A31, no.~01, p.~1550218,
  2016.

\bibitem{Lu:1996ge}
H.~Lu, C.~N. Pope, and K.~S. Stelle, ``{Weyl group invariance and p-brane
  multiplets},'' {\em Nucl. Phys.}, vol.~B476, pp.~89--117, 1996.

\bibitem{Coxeter1940}
H.~S.~M. Coxeter, ``Regular and semi-regular polytopes. i,'' {\em Mathematische
  Zeitschrift}, vol.~46, no.~1, pp.~380--407, 1940.

\bibitem{humphreys1990}
J.~E. Humphreys, {\em Reflection Groups and Coxeter Groups:}.
\newblock Cambridge: Cambridge University Press, 006 1990.

\bibitem{10.2307/1968753}
H.~S.~M. Coxeter, ``Discrete groups generated by reflections,'' {\em Annals of
  Mathematics}, vol.~35, no.~3, pp.~588--621, 1934.

\bibitem{kac1990}
V.~G. Kac, {\em Infinite-Dimensional Lie Algebras:}.
\newblock Cambridge: Cambridge University Press, 3~ed., 009 1990.

\bibitem{Carbone:2010tn}
L.~Carbone, S.~Chung, L.~Cobbs, R.~McRae, D.~Nandi, Y.~Naqvi, and D.~Penta,
  ``{Classification of hyperbolic Dynkin diagrams, root lengths and Weyl group
  orbits},'' {\em J. Phys.}, vol.~A43, p.~155209, 2010.

\bibitem{Coxeter1985}
H.~S.~M. Coxeter, ``Regular and semi-regular polytopes. ii,'' {\em
  Mathematische Zeitschrift}, vol.~188, no.~4, pp.~559--591, 1985.

\bibitem{coxeter1973regular}
H.~Coxeter, {\em Regular Polytopes}.
\newblock Dover books on advanced mathematics, Dover Publications, 1973.

\bibitem{DuVal201310}
H.~S.~M. Coxeter, P.~DuVal, H.~T. Flather, and J.~F. Petrie, {\em The
  Fifty-Nine Icosahedra (Lecture Notes in Statistics)}.
\newblock Springer, softcover reprint of the original 1st ed. 1982~ed., 10
  2013.

\bibitem{Coxeter1988}
H.~S.~M. Coxeter, ``Regular and semi-regular polytopes. iii,'' {\em
  Mathematische Zeitschrift}, vol.~200, no.~1, pp.~3--45, 1988.

\bibitem{Gosset1900}
T.~Gosset, ``On the regular and semi-regular figures in space of n
  dimensions.,'' {\em Messenger of mathematics}, vol.~29, pp.~43--48, 1900.

\end{thebibliography}
\end{document}